\theoremstyle{plain}
\newtheorem{theorem}{Theorem}
\newtheorem{lemma}{Lemma}
\newtheorem*{remark}{Remark}
\newtheorem{definition}{Definition}
\newtheorem{assumption}{Assumption}
\begin{document}

\title{Heterosynaptic Circuits Are Universal Gradient Machines}

\author{Liu Ziyin$^{1,2}$, Isaac Chuang$^1$, Tomaso Poggio$^1$\\
$^1$\textit{Massachusetts Institute of Technology}\\
$^2$\textit{NTT Research}
}

\maketitle

\begin{abstract}
We propose a design principle for the learning circuits of the biological brain. The principle states that almost any dendritic weights updated via heterosynaptic plasticity can implement a generalized and efficient class of gradient-based meta-learning. The theory suggests that a broad class of biologically plausible learning algorithms, together with the standard machine learning optimizers, can be grounded in heterosynaptic circuit motifs. This principle suggests that the phenomenology of (anti-) Hebbian (HBP) and heterosynaptic plasticity (HSP) may emerge from the same underlying dynamics, thus providing a unifying explanation. It also suggests an alternative perspective of neuroplasticity, where HSP is promoted to the primary learning and memory mechanism, and HBP is an emergent byproduct. We present simulations that show that (a) HSP can explain the metaplasticity of neurons, (b) HSP can explain the flexibility of the biology circuits, and (c) gradient learning can arise quickly from simple evolutionary dynamics that do not compute any explicit gradient. While our primary focus is on biology, the principle also implies a new approach to designing AI training algorithms and physically learnable AI hardware. Conceptually, our result demonstrates that contrary to the common belief, gradient computation may be extremely easy and common in nature.
\end{abstract}

\section{Introduction}

What kind of learning algorithm might the brain actually implement? To be biologically plausible, a learning rule must satisfy three key requirements. First, it needs to be local: a neuron cannot access information beyond its neighborhood -- which consists of its neighboring neuron activations and local densities of ions and neuromodulators. 
Second, it must be flexible: the algorithm should not rely on finely tuned or rigid connectivity patterns but instead should operate effectively across a wide range of network topologies -- from highly structured and seemingly random topologies \cite{anderson2022big}. Third, it should be robust: learning should be reliable despite both microscopic imperfection in the neurons and macroscopic changes in the architecture -- a property often empirically observed in the biological brain, where learning remains possible even if either the firings are low-precision or some brain regions are damaged \cite{fertonani2011random, giovagnoli1999learning}. These broad requirements raise a fundamental question: can there be a universal design principle that satisfies all locality, flexibility, and robustness conditions while supporting efficient learning comparable to what is achieved in artificial neural networks?

In addition, many biological observations regarding neuroplasticity constrain how learning rules must operate in the brain \cite{chistiakova2014heterosynaptic, jungenitz2018structural, turrigiano2000hebb, andersen2017hebbian}. Four seemingly unrelated types of plasticity exist. First, Hebbian and anti-Hebbian plasticity states that synaptic weights are updated homosynaptically to reinforce emergent correlations. Second, heterosynaptic plasticity  \cite{chistiakova2014heterosynaptic} exists to allow synapses not directly involved in a particular signal pathway to be modulated by other neurons. Third, plasticity of neurons depend on its prior experience and stimulus, a phenomenon known as metaplasticity \cite{abraham2008metaplasticity}. Fourth, 
on top on these observations, 
gradient descent in artificial neural networks have been found to give rise to emergent representations quite comparable to biological brains \cite{yamins2016using}. 
These observations of plasticity motivate a deeper question: can we identify a single learning rule that unifies these seemingly unrelated plasticity rules -- one that appears Hebbian, Heterosynaptic, meta-learning, and gradient-like at the same time?

Our result shows that generic two-signal update rules, which can be seen as the a generalized form of heterosynaptic plasticity, are sufficient to simulate gradient learning as long as some kind of homeostatic property is achievable. Our theory shows that when the synapse for one signal becomes stable (heterosynaptic stability), it gives rise to a local emergent scalar effective ``learning rate" we call the ``consistency score" for the other signal. The local consistency score determines the sign of learning and will interact with its neighbors in a way that it will become consistent globally consistent (dynamical consistency). This theory implies a design principle that utilizes heterosynaptic plasticity to enable a spectrum of learning algorithms--from highly specialized, topologically structured modules to flexible, randomly connected networks -- to simulate gradient descent universally. See Figure~\ref{fig:neuro illustration} for examples of such circuits in the brain and how our proposed mechanism enables it to learn. 
Formal theorems, proofs, and additional figures are in the Appendix.

\begin{figure}[t!]
    \centering
    \includegraphics[width=0.8\linewidth]{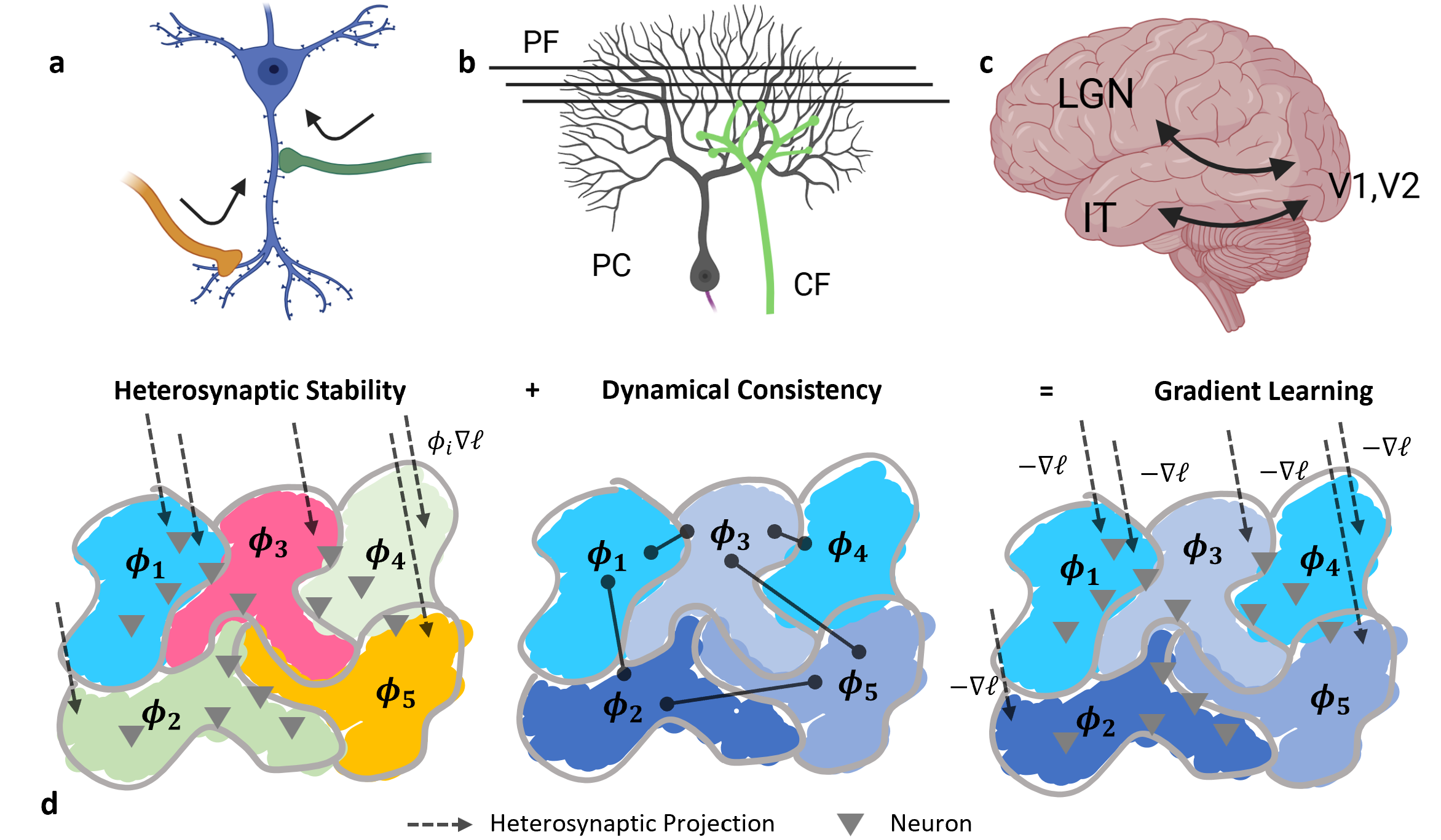}
    \caption{\small Microscopic and macroscopic structures of biological heterosynaptic circuits (\textbf{a}-\textbf{c}) could implement gradient learning with the proposed ``HSDC" mechanism (\textbf{d}). \textbf{a}: The minimal structure required to build a heterosynaptic circuit is a neuron with two incoming signals. Note that it does not require two inputs -- it could be a single axon that fires twice at different times. Due to its simple compositional nature, this circuit can be biologically realized across vastly different scales. \textbf{b}: Microscopically, the Purjinke cells (PC) in the mammalian cerebellum has a heterosynaptic motif, where parallel fiber (PF) carries the sensory input while climbing fiber (CF) carries learning or error signals. \textbf{c}: At a large scale, the bi-pathway structures between cortical regions can also implement heterosynaptic circuits. \textbf{d}: Heterosynaptic stability (HS) and dynamical consistency (DC) are sufficient to enable gradient learning. HS gives rise to local patches of neurons characterized by a scalar consistency score $\phi$; DC guarantees that $\phi$ has the same sign for all patches. For example, in \textbf{c}, the CF functions as heterosynaptic projections, which could give rise to $\phi$ values for the PC. In \textbf{d}, the feedback pathway from V1 to LGN can function as a heterosynaptic projection, which could give rise to consistency scores among the feedforward neurons or cortical columns in V1.}
    \label{fig:neuro illustration}
\end{figure}

\begin{figure}[t!]
    \centering
    \includegraphics[width=\linewidth]{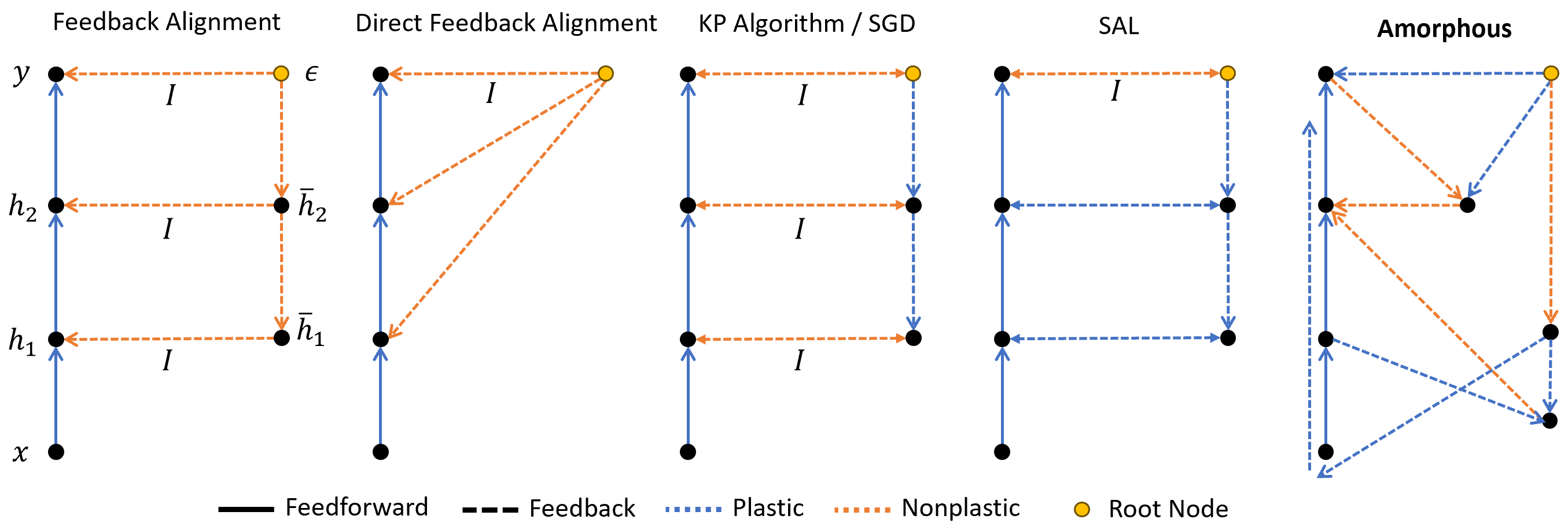}    \caption{\small Examples of heterosynaptic circuits for training a two-hidden-layer neural network \cite{nokland2016direct, lillicrap2016random, akrout2019deep, kolen1994backpropagation}. One can imagine a heterosynaptic circuit as a superposition of two graphs, one performing computation at time $t$ (solid) and the other performing computation at time $t'$ (dashed). In the figure, every node is a set of neurons, and every edge corresponds to a dense matrix sending connecting two such nodes. SGD can be seen as the special case of the KP algorithm when the two pathways are initialized symmetrically. Each edge can be either plastic or nonplastic. When nonplastic, the edges are fixed to be identity or random matrices. The rightmost (\textbf{amorphous}) shows an example of the most general type of heterosynaptic circuit to train such a network. Our theory shows that all the above circuits can simulate gradient learning.}
    \label{fig:circuit examples}
\end{figure}

\section{Heterosynaptic Circuits are Universal Gradient Machines}\label{sec: main result}

Suppose we have an environment that provides the input-label pair $(x,y)$, where $y$ could be identical to $x$ for unsupervised learning. Our hypothetical learning brain will contain a learning circuit $f$ (Figure~\ref{fig:neuro illustration})  that computes a prediction $\hat{y} = f(x)$; to achieve this, $f$ necessarily consists of three types of neurons: (1) input and output neurons, which interact with the environment; (2) a set of latent neurons that are mutually connected; (3) a root node (perhaps identical to the output neurons) that receives an initial error signal $\epsilon=\epsilon(y,\hat{y})$. 

Though unnecessary in theory, the latent neurons could be further divided into two types: (a) forward neurons $h$ that are functions of $x$ and directly affect $f(x)$: $f(x) = f(h(x))$, and (b) auxiliary neurons $\bar{h}$ that are used for enabling learning, which are functions of both $\epsilon$ and $x$. We will also refer to $\bar{h}$ as ``backward neurons" because they are sometimes imagined to constitute the backward pathway in cortices. Note that $h$ and $\bar{h}$ may share neurons. Recent works conjecture that learning signals can travel through separate feedback pathways to train the predictive model \cite{lillicrap2020backpropagation, whittington2019theories}. This separation into bidirectional pathways is a common anatomy in the brain, which may be realized as high-level structures, such as the feedforward-feedback pathways between V1 and V2 \cite{markov2014anatomy} or as composable microcircuits. Such a microcircuit exists, e.g., in the cerebellum (Figure~\ref{fig:neuro illustration}), where the dendritic weights are updated by a heterosynaptic circuit \cite{magee2020synaptic}. 

In this work, we use the letter $p$ to denote the preactivation of neurons and $h$ to denote the postactivation of neurons such that $h(p)$ is a nonlinear function. We mainly focus on ReLU-type activations in the main text and discuss how the results are generalizable to general activations in the appendix. 
Now, let $p$ be the preactivation of a subset of neurons that are densely connected to another subset of neurons with postactivations $\tilde{h}$. At time $t$, $p$ receives input from $\tilde{h}$ and is given by
\begin{equation}
    p(x,t) = W \tilde{h}(x,t),
\end{equation}
where $W$ is a weight matrix connecting $p$ and $\tilde{h}$, and $x$ is the input signal. Because $p$ are always functions of $x$, we omit $x$ when the context is clear. 
Similarly, at a different time $t'$, the neuron receives another signal from a set of potentially different neurons $p(t') =  \bar{V}  \bar{h}(t')$. Note that these equations can be seen as stationary points of an underlying differential equation: $\dot{p} = \bar{V} \bar{h} - p$. One can imagine this process as a superposition of two computation graphs on the same set of nodes, where the first graph computes at time $t$ and the second graph computes at time $t'$. See Figure~\ref{fig:meta-and-micro}-a1 for an illustration. 

Now, the synaptic weights $\bar{V}$ and $W$ are updated according to the simplest type of heterosynaptic rule:
\begin{equation}\label{eq: update rule}
\begin{cases}
    \Delta \bar{V} = \eta p(t) \bar{h}^\top(t')  - \gamma \bar{V},\\
    \Delta W = \eta p(t') \bar{h}^\top(t)  - \gamma W,
\end{cases}
\end{equation}
where $\eta$ is the learning rate, and $\gamma$ is a weight decay that prunes synapses when there is no firing. That two different incoming signals could play different roles is supported by experimental results, where the dendrites of a single neuron can be divided into multiple compartments (e.g., apical vs. basal) and deal with signals in different ways \cite{harnett2013potassium}. Note a key feature is that the update for either matrix is the product of signals from both time $t$ and $t'$. Most existing biological learning algorithms (including backpropagation) are either explicitly or implicitly a heterosynaptic rule (Figure~\ref{fig:circuit examples}). Because heterosynaptic rules always involves a neuron receiving at least two input signals, we use ``heterosynaptic rule" interchangeably with ``two-signal rule" from now on. If we model the dynamics of $p$ explicitly, one can replace the term $p(t)$ in $\Delta \bar{V}$ to be $\dot{p}(t)$, and so the rule is also consistent with temporal error models of plasticity \cite{whittington2019theories}.


Our theory below shows that this class of two-signal processes in \eqref{eq: update rule} is sufficient to enable global gradient learning when two local properties are met: (a) Heterosynaptic Stability (HS) and (b) Dynamical Consistency (DC). The central thesis of this work is the following ``HSDC" formula:
\begin{quote}\centering
     HS + DC = Gradient Learning
\end{quote}
HS is essentially a form of homeostasis and a property of the update rule and its dynamics. DC is a property of the circuit architecture, or of the learning rule, or could be emergent. We will show that HS + DC give rise to the following set of equations:
\begin{align}
        &{\rm HS}:\quad p_i(t')=\phi_i H \nabla_{p_i(t)} \ell(p_i(t)), \label{eq: hs}\\
       &{\rm DC}:\quad \phi_i \phi_j \geq 0, \quad \forall i,\ j, \label{eq: dc}
\end{align}
where $i,\ j$ are the indices of the nodes and $H$ is a matrix learning rate, which is common in machine learning (see Appendix~\ref{app sec: indistinguishability}). $H$ is plastic and so constitutes a form of meta-learning. $\phi_i\in \mathbb{R}$ is the consistency score. While $p(t')$ is a gradient for the activation, it leads to a computation of $\nabla_W \ell$ in Eq.~\eqref{eq: update rule} by the chain rule (Appendix~\ref{app sec: neuron gradient descent}). The first equation shows that HS leads to learning with a matrix learning rate $H$ and modulated by a consistency score. The second equation ensures that different nodes are have the same learning direction.

In the rest of the paper, We show numerical results as we present the theory. We first argue that the HSP motifs is ubiquitous in learning algorithms and so they can be easily regarded as minimal functioning elements of these algorithms. We then discuss how and why HS and DC lead to Eq.~\eqref{eq: hs} and \eqref{eq: dc}. Because the theory implies infinitely many ways to construct a learning circuit that simulates gradient descent, it is impossible to sample all circuits to test our theory. Because the the SAL architecture (Figure~\ref{fig:circuit examples}) resembles both SGD and feedback alignment, we regard it as the ``canonical" architecture and sample random topologies that are either microscopic or macroscopic variations of it. See Appendix~\ref{app sec: exp} for details. 


\subsection{Two-Signal Circuits are Ubiquitous}
The easiest way to construct a heterosynaptic circuit is using a ``bi-pathway" architecture. In fact, SGD \textit{and} most of the feedback-type algorithms can be thought of as a special case of the algorithm in Eq.~\eqref{eq: update rule}. It is known that SGD (or, backpropagation) is equivalent to a circuit having a forward pathway and a backward pathway \cite{akrout2019deep}, where the forward pathway is the network for inference, while the backward pathway contains the transpose of the forward synaptic weights and computes the gradient of each layer sequentially in the reverse direction. Here, the update rule is the outer product of the activation from the forward pathway and the activation  of the backward pathway. Thus, the update rule to $W$ is a heterosynaptic rule, and for those $x$ that this update rule is stationary, the update rule is a gradient learning rule -- this is certainly consistent with the fact that SGD is a gradient learning rule. See Figure~\ref{fig:circuit examples}.

Feedback-type learning rules are also special cases of this algorithm. The standard feedback alignment \cite{lillicrap2016random} obeys the following dynamics
$\Delta W = \bar{h}h^\top -\gamma W$. Identifying $W$ with $\bar{V}^\top$, we have that $\bar{h}\propto W^\top\nabla_{h} \ell$, which is the gradient for $\nabla_{Wh} \ell$. The same argument applies to directed feedback, where all gradient signals come from the root node. Similarly, the argument applies to the KP algorithm \cite{kolen1994backpropagation} and the recent SAL algorithm \cite{liao2024self}, where the four-way motif can be seen as a composition of four simple heterosynaptic motifs. As shown in Ref.~\cite{liao2024self}, it is advantageous to train both the forward and backward pathways, so the interconnections not only go from the backward to the forward, but also from the forward to the backward. Ref.~\cite{liao2024self} also showed that if the feedback pathway is much wider than the feedforward, the model performance will be improved, partially explaining the anatomical observation of a more numerous feedback connection \cite{briggs2020role}.

Because Eq.~\eqref{eq: hs} does not depend on the computation of $\bar{h}$, the theory allows an essentially arbitrary backward pathway. One can replace the backward pathway with any network with an arbitrary connectivity and activation function and gradient learning is still possible. For example, one can use a transformer, RNN, or CNN, or a mixture of these. One can also reuse a part of the forward network. For example, we can connect the root node directly to the input node of the forward pathway, and reuse the forward network to pass on the signal. This works because the update is not ``simultaneously" homosynaptic. To see why this works, note that during learning, the signal received by the input node is $\bar{V}\bar{V}^\top \nabla_x \ell$. After passing through the first layer, the received gradient is $W\bar{V}\bar{V}^\top \nabla_x \ell = W\bar{V}\bar{V}^\top  W^\top\nabla_{Wx} \ell$, which is the gradient of the first layer with a PSD matrix learning rate. This argument can be applied recursively to show that it works for every layer. 
Thus, the most general circuit one can construct is perhaps an amorphous model with an arbitrary graph structure and may also reuse part of the forward pathway during computation. 

\begin{figure}[t!]
    \centering
    \includegraphics[width=1.0\linewidth]{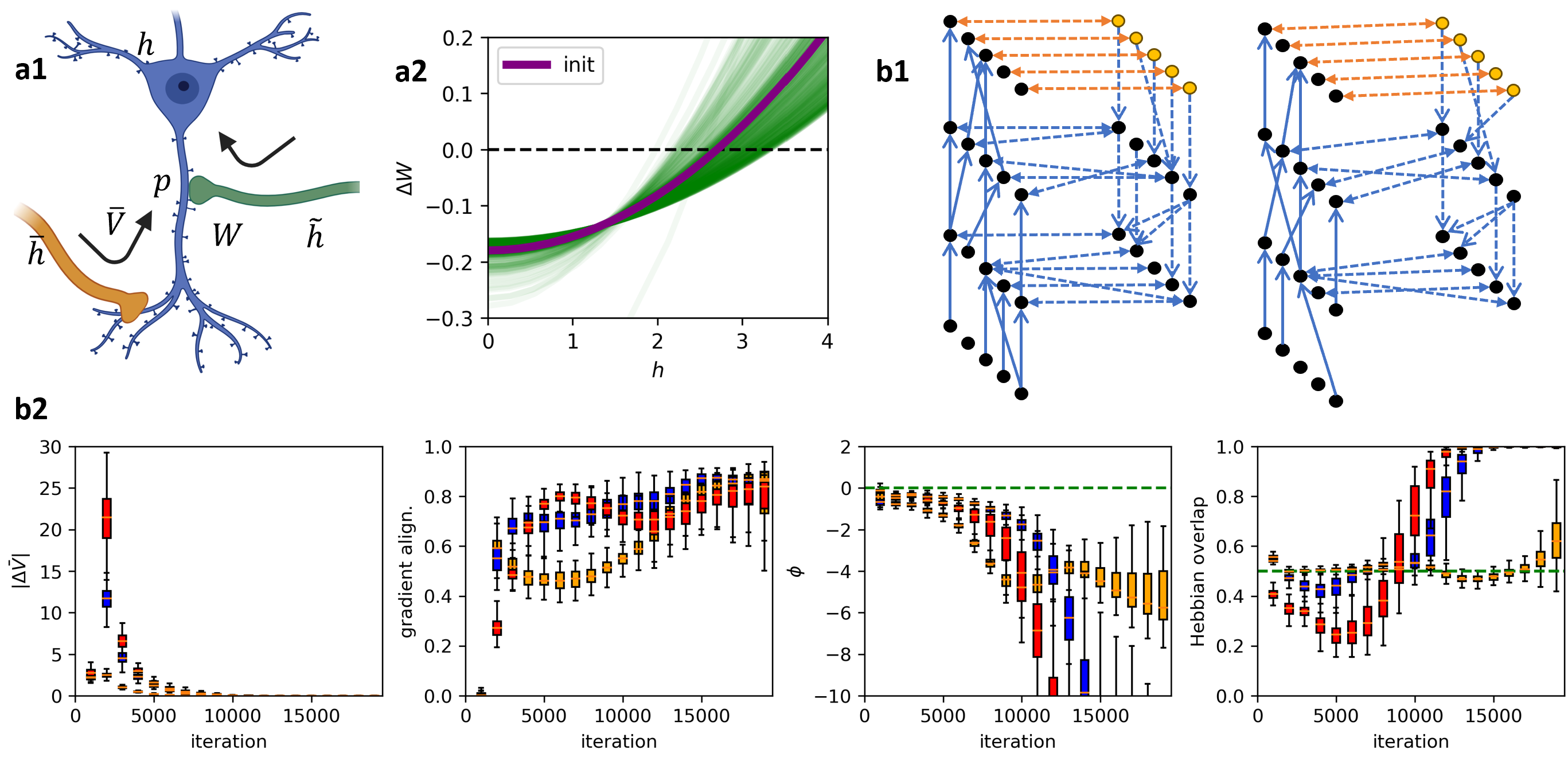}
    \caption{\small Meta-plasticity of neurons (\textbf{a}) and emergence of gradient learning in networks with microscopic random connectivity (\textbf{b}). \textbf{a1}: A neuron learning simple task with a root node $\tilde{h}$ and input node $\bar{h}$. \textbf{a2}: Plasticity of a synapse after it sees two samples (green curve shows 400 runs). The purple  curve shows the plasticity of the synapse before it seem any data point. Prior experience alters the plasticity threshold for the synapse, consistent with biological observations \cite{abraham2008metaplasticity}. \textbf{b1}: We train a four-hidden layer neuron network (two shown in the figure), where only connections between neighboring layers are allowed. Within the allowed connections, we randomly generete masks with $70\%$ density to simulate a different learning circuit topology. The figure illustrates two instances of such sampled connectivities. The color and style have the same meaning as in Figure~\ref{fig:circuit examples}. \textbf{b2}: the learning trajectories across $2\times 10^4$ steps for $100$ i.i.d. randomly sampled learning circuits. Colors correspond to different layers: orange (1), blue (3), red (4). Gradient alignment is the correlation of the HSP update with the actual negative gradient. Hebbian overlap is the fraction of updates that have a positive alignment with the Hebbian update See Figure~\ref{fig:V stationarity and alignment}-\ref{fig:hebbian overlap zoom in} for more detailed results.}
    \label{fig:meta-and-micro}
\end{figure}

\begin{figure}[t!]
    \centering
    \includegraphics[width=0.25\linewidth]{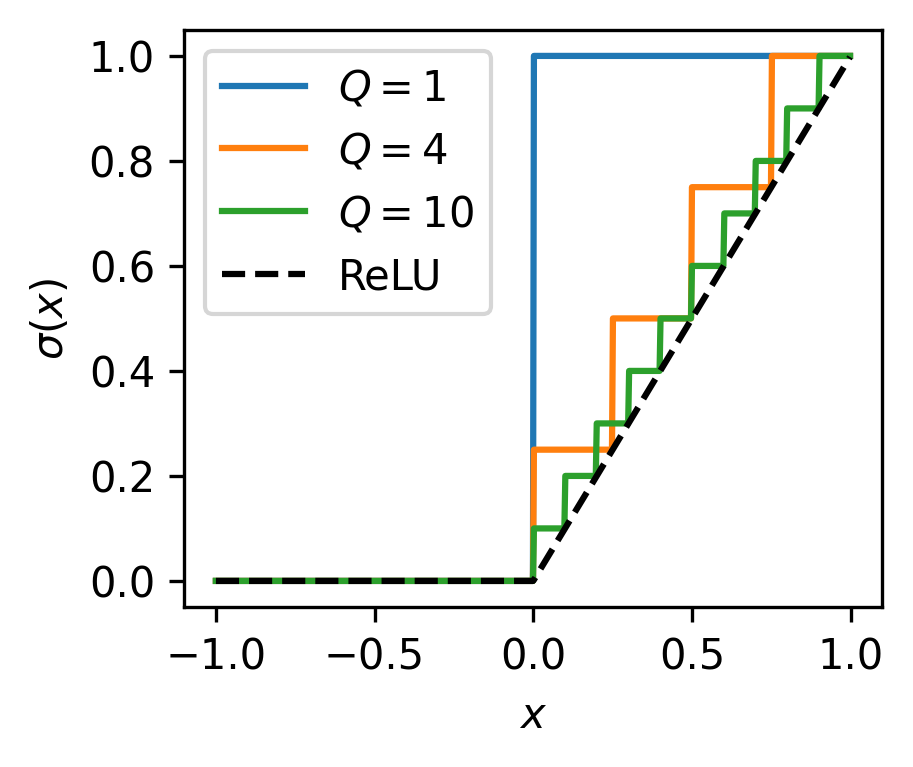}
    \includegraphics[width=0.25\linewidth]{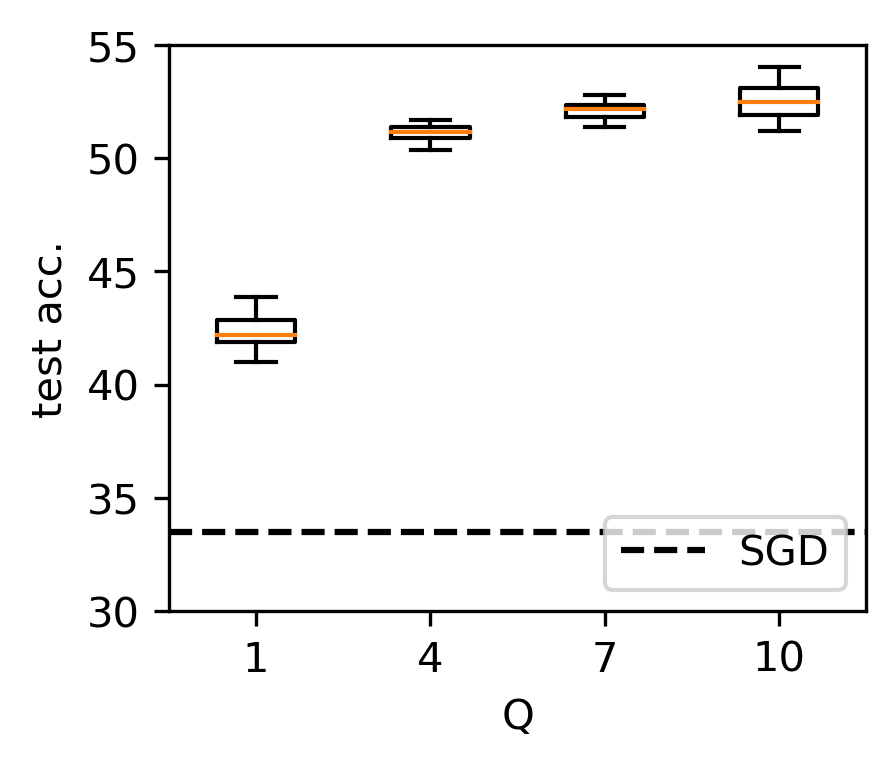}
    \includegraphics[width=0.25\linewidth]{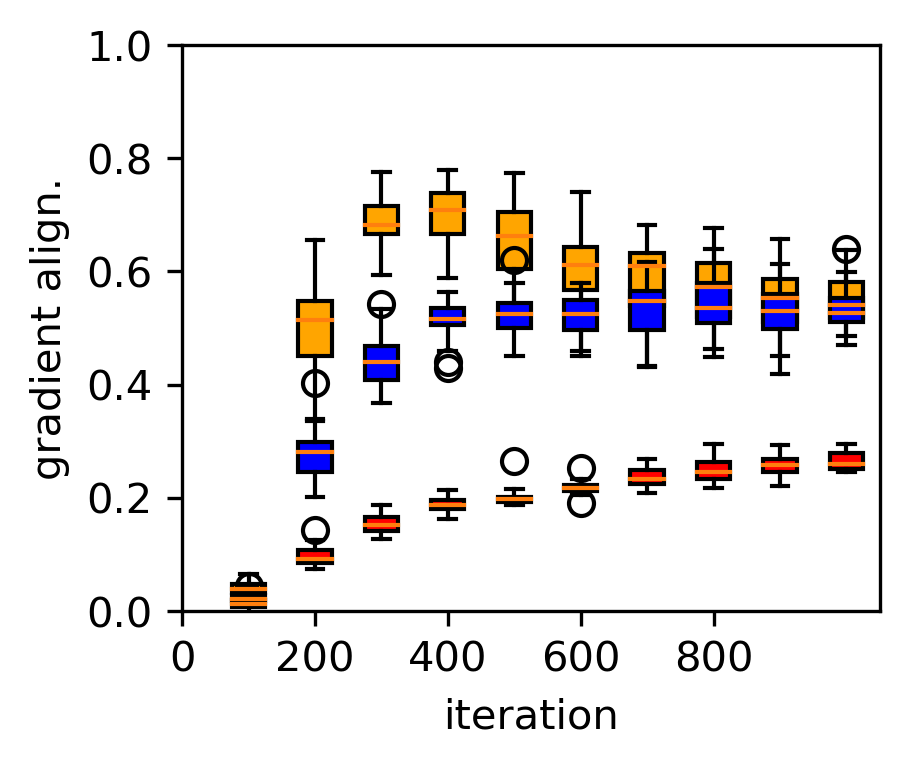}

    \caption{\small Training of a deep network with the nondifferentiable step-ReLU activation. \textbf{Left}: Illustration of the step-ReLU activation: $\sigma(x) = {\rm ReLU}(\lceil Qx \rceil) / Q$, a nondifferentiable approximate of ReLU. \textbf{Mid}: Performance of the model at different $Q$. In contrast, SGD can only train the last layer and cannot learn meaningful features from data. \textbf{Right}: gradient alignment to the ReLU network dual. As the theory predicts, the network self-assembles to a state that approximates the gradient of an approximate differentiable network, which explains the emergent learning behavior of the model. The colors denote layers 1 (orange), 2 (blue), and 3 (red), respectively.}
    \label{fig:step relu}

    \centering
    \includegraphics[width=0.9\linewidth]{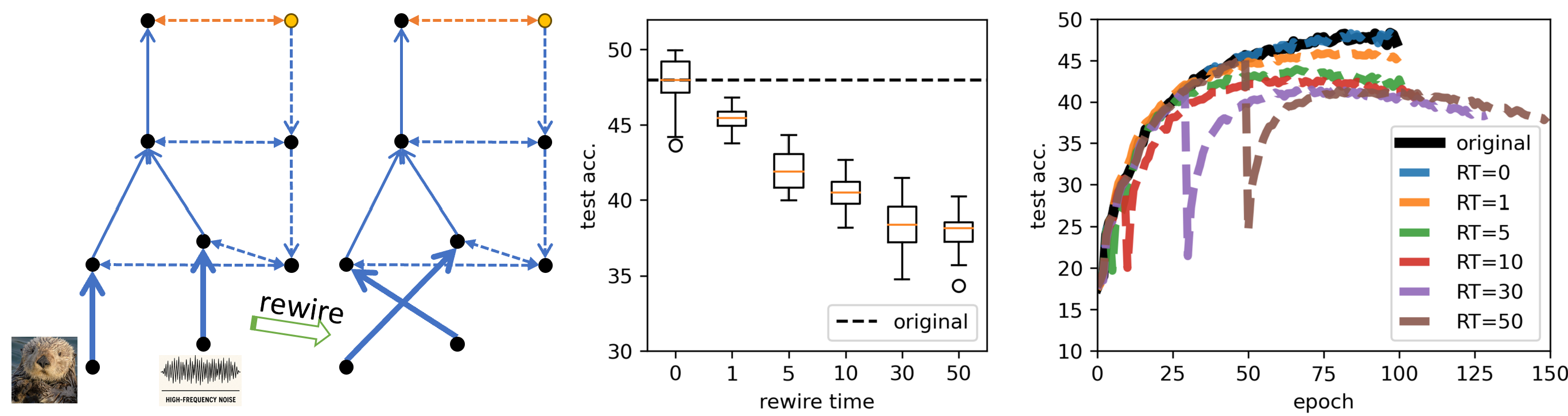}

    \caption{\small Training of heterosynaptic circuits after a rewiring of two heterogeneous multimodal channels. \textbf{Left}: Architecture of the network. The input layers are divided into two channels, one taking a visual input for which the circuit needs to learn to classify, the other taking an auditory input, which is a high frequency noise. We train the original circuit for a fixed epochs, which we call ``rewire time (RT)," and then we rewire the visual input to the auditory channel and vice cersa, and train for another 100 epochs. \textbf{Mid}: for a small RT, rewiring has no significant effect on the performance, and the network is fully capable of assemble itself, indicating the flexibility and robustness of the network. As RT increases, the performance drops, this is consistent with the well-observed phenomenon of critical periods in biological brains. \textbf{Right}: Median of 20 training trajectories. For a large RT, the performance drops by a large amount immediately after rewiring.}
    \label{fig:rewiring}
\end{figure}

\subsection{Heterosynaptic Stability Leads To Meta-Gradient Learning}

That the brain circuit may reach a stable state after an external stimulus is a form of plasticity homeostasis \cite{turrigiano2000hebb, turrigiano2004homeostatic}. One can show that for those input $x$ that this dynamics is stable (Theorem~\ref{theo: HS}): $\bar{h} =  \phi \bar{V}^\top \nabla_{p} \ell$.
We refer to $\phi$ (and its generalizations, see below) as the \textit{consistency score}. Thus, the signal $p$ receives is $p(t')=\phi H \nabla_{p(t)} \ell(p(t))$, where $H = \bar{V} \bar{V}^\top$ is a PSD matrix learning rate. Because Theorem~\ref{theo: HS} only requires one of the two heterosynaptic weights to be stable, it also applies to the case where the other weight is nonplastic.

Thus, if $\phi<0$, this node performs gradient descent. Otherwise, it performs gradient ascent. Gradient ascent may not be bad because it may help forgetting. Also, $\ell$ can be any energy function, and so the algorithm can also run energy-based models such as a Hopfield net \cite{song2021train} by updating the activations through feedback loops. Prior theories on feedback-type algorithms argued that the instructive synapses $\bar{V}$ should be a unitary matrix because they neglected the possibility of $\bar{V}$ being meta-plastic \cite{nokland2016direct, refinetti2021align}. In our framework, because $H$ is plastic, this dynamics is consistent with metaplasticity, where the plasticity of the neuron depends on prior experience. We show these metaplasticity curves and the emergence of alignment to the gradient in Figure~\ref{fig:meta-and-micro}. Because Eq.~\eqref{eq: hs} makes no assumption about other nodes -- it is thus robust to changes or lesions of other nodes. Making no assumption about the overall architecture, the algorithm is thus also fully flexible to accommodate any higher structure.

The theorem (Theorem~\ref{theo: HS advanced}) also allows $f$ to be nondifferentiable or have zero gradients everywhere as long as the activations of $f$ can be approximated point-wise by a class of activations that we call RLU. For a variant of HSP learning rule, the result extends to an arbitrary activation function class. Thus, a heterosynaptic circuit could implicitly smoothen the predictive model $f$ and thus can, in principle, ``differentiate" through neural spikes, which are believed to be nondifferentiable. We show such an experiment in Figure~\ref{fig:step relu}. Here, we train a neural network with the step-ReLU activation, a nondifferentiable and low-precision variant of ReLU. The bits of information achievable with the activation is proportional to $Q$, and we see that the performance of the model is close to optimal even when $Q \approx 4$, a very low precision. This shows that the algorithm is robust to noises or local imperfection of the activations and can operate at a low bit precision, consistent with the biological brain \cite{zheng2025unbearable, czanner2015measuring}. This is  a property that SGD does not have.

As discussed, HSP is naturally robust to changes in the model architecture. Figure~\ref{fig:rewiring} shows a simulation where we train a multimodal model with two input channels taking in visual and auditory input. We exchange and rewire the two channels after training for a period of time. We see that the model learns equally well before and after rewiring if we rewire early in training, mimicking ferrets that undergo a similar rewiring by surgery \cite{von2000visual}. The existence of a critical period of learning is also consistent with biological observations \cite{cisneros2020critical}.


\paragraph{Heterosynaptic Rule with Emergent Hebbian Dynamics} The rule \eqref{eq: update rule} may seem agnostic to whether the update is homosynaptic or heterosynaptic. This is deceptive because homosynaptic updates cannot reach a stationary state. Consider a homosynaptic scenario where $p(t)= p(t') = W \tilde{h}(t)$. Now, the dynamics becomes $\Delta {W} = W \tilde{h}(t)  h^\top(t) - \gamma \bar{V}$, identical to the Hebbian rule \cite{hebb2005organization}. The stationary points of this dynamics is either zero or infinity -- in either case, it is not biologically meaningful. In fact, this is a well-known divergent problem of the Hebbian rule \cite{chistiakova2015homeostatic}. For this reason, the Hebbian rule is usually replaced by Oja's rule, which enforces a normalization of the weights \cite{oja1982simplified}. An alternative to the Oja rule is to assume that in Eq.~\eqref{eq: update rule}, $p(t')$ must not be a direct function of $W$, and so the rule must be two-signal.

Yet, there is an interesting emergent property of Eq.~\ref{eq: update rule}: it is hard to distinguish it from the Hebbian rule. After one step of Hebbian update, we have $\Delta p(x) \propto  p(x)$. Thus, the activation becomes strengthened. When updating using Eq.~\eqref{eq: update rule} and at the HS, $\Delta p = \bar{V}\bar{h} \propto p$, which is proportional to the Hebbian update. 
Numerical results show that Hebbian dynamics dynamics do emerge in most of layers. Thus, Hebbian plasticity and heterosynaptic plasticity may be two faces of the same coin, and HBP and HSP may occur simultaneously. See Figure~\ref{fig:meta-and-micro}. There is an interesting three-phase dynamics: during training, layers are Hebbian at the beginning, then transitions to anti-Hebbian, and finally comes back to Hebbian, and later layers tend to have better overlap. This transition between Hebbian and anti-Hebbian dynamics is not well understood but may happen in animals \cite{koch2013hebbian, froemke2005spike, sjostrom2006cooperative}.

\subsection{Dynamical Consistency Is Achievable and Emergent}
We say that two nodes are consistent if they are both performing gradient ascent or descent. Namely, two nodes $i$ and $j$ have \textit{dynamical consistency} if $\phi_i \phi_j > 0$. This definition implies a principle to construct large-scale neuronal gradient systems: \textit{make neuron pairs consistent}. It turns out that it may be easy to ensure consistency. The next example shows that all layers of any deep ReLU network are always mutually consistent.

\paragraph{DC by Architecture Design.} Consider a deep ReLU network: $f(x) = W_D R_{D-1} W_{D-1} ..  R_{1}W_1 x := \hat{y}(x)$
where $R_j(x)$ is the hidden activation of $j$-th layer, which is a diagonal zero-one matrix for a ReLU network. Define $p_i$ to be the $i$-th layer preactivation, which we regard as a single node. We have that $R_j(x) =  R_j(h_{j}(x))$. For a zero-one matrix $R$, we have that for an arbitrary layer $j$
\begin{equation}
    \nabla_{p_j}^\top \ell p_j  = \underbrace{\nabla_{\hat{y}}^\top \ell  W_D...  R_j}_{\nabla_{p_j}^\top \ell}  \underbrace{W_j  ... R_1W_1 x}_{p_j} = \nabla_{\hat{y}}^\top \ell f(x),
\end{equation}
which is independent of $j$. This means that $\nabla_{h_j}^\top \ell h_j  =\nabla_{h_i}^\top \ell h_i$ for all pairs of $i$, $j$. Namely, the gradient for a ReLU network is fully consistent.

This result can be generalized to essentially any network graphs (Theorem~\ref{theo: dynamical consistency}): for any forward architecture, one can partition the forward neurons into nodes such that every partition has the same consistency score. This means that for any predictive circuit, there always exists a backward pathway that can train it! Because a dense connectivity between nodes gives Eq.~\eqref{eq: hs}, it could imply a strategy that the biological brain uses during development. For example, the new born's brains are known to have sparse connections at birth and new connections between neurons rapidly grows upto a few years old \cite{kolb2017principles}.


\paragraph{DC by Emergence.} Essentially, DC is a local property -- any node only has to be consistent with its neighbors, which are in turn consistent with their neighbors. Our simulations show that DC can emerge automatically. 
See Figure~\ref{fig:meta-and-micro} for the emergence of DC in a network with gelu activation \cite{ramachandran2017searching}, which does not satisfy DC by design. At the beginning of training, there is no consistency across different layers, and each layer could have a positive or negative consistency score before training. After training, all layers emerged to become consistent for all randomly sampled connectivity structures. In theory, we prove that if both $W$ and $\bar{V}$ reach HS, all pairs of neurons can emerge to become mutually consistent for a class of connectivity graphs (Theorem~\ref{theo: dc}).


Now, suppose that through clever wiring or evolutionary mechanisms, we have achieved dynamical consistency for all nodes. It now only suffices to set the overall sign of learning correct -- because of the dynamical consistency, when one node have $\phi<0$, all the nodes perform gradient learning. A special node that is achieves this initial condition is called the root node. The discussion above suggests the following idealized definition of the root node: we say that $\bar{h}$ is a root node if it connects to the output node $\hat{y}$ through $\bar{V}$ and  $\bar{h} = c \bar{V}^\top \nabla_{\hat{y}}\ell$ for some $c<0$. Computing the initial signal $\nabla_{\hat{y}}\ell$ is easy: for a quadratic loss, this term is just $\hat{y} -y$. In Appendix~\ref{app sec: theory}, we show how if one can compute $\nabla_{\hat{y}}\ell$, a local circuit with HS can simulate a root node. In the neocortex, the root node may be realized by the prefrontal cortex \cite{matsumoto2007medial, fuster2015past}.


\begin{figure}
    \centering
    \includegraphics[width=\linewidth]{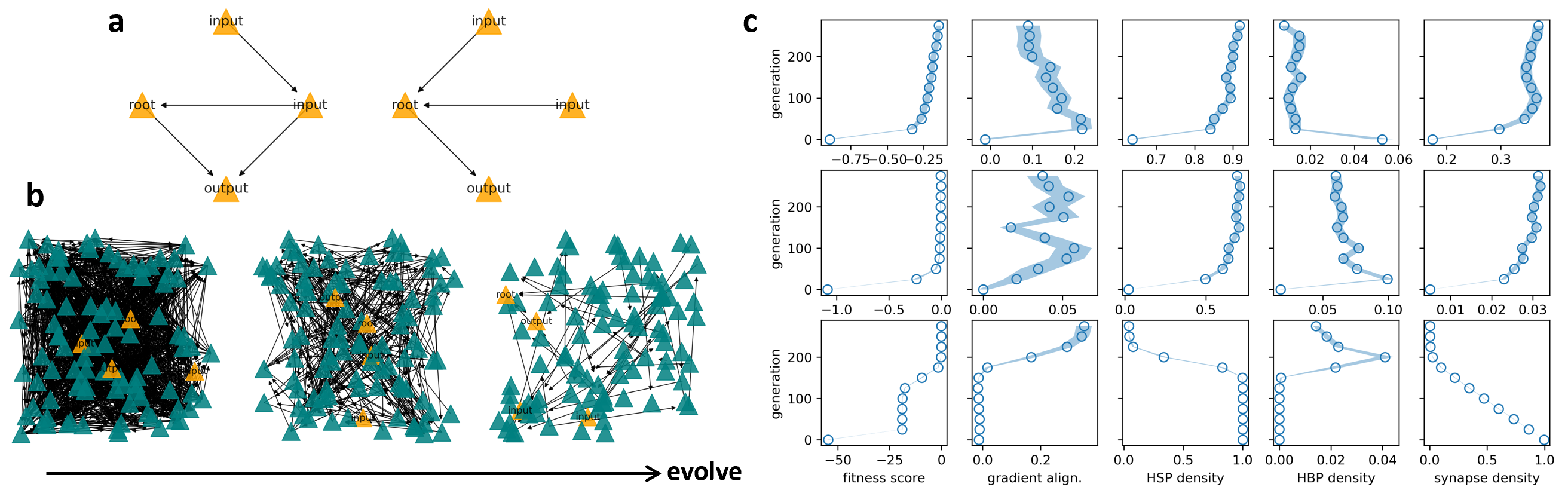}
    \caption{\small An evolutionary growth of heterosynaptic two-signal circuits. \textbf{a}: Examples of evolved circuits. The system consists of $4$ neurons whose target is to learn a linear regression problem, and the edges are pruned and grown according to a simple evolutionary algorithm for 200 generations. \textbf{b}: Example of an evolution trajectory for a densely initialized 100-neuron network. \textbf{c}: Results averaged over $400$ independent simulations, and the shaded region shows uncertainty.  See Appendix~\ref{app sec: exp} for details and another simulation with 100 neurons. The rows show 4-neuron evolution from a sparse init. (upper), 100-neuron evolution from a sparse init. (mid) and dense init (bottom). As evolution happens, the update rule of the circuit becomes increasingly aligned with the gradient.}
    \label{fig:synapse growth}
\end{figure}

\subsection{Evolution Enables HSP Gradient Learning}
 
Two-signal algorithms can be integrated with developmental or evolutionary mechanisms. If nature does utilize HSP to perform gradient learning, then, the theory implies ways for evolution to incrementally improve on itself and search for a better algorithm. Any connectivity pattern for which better dynamical consistency or stability can be reached is likely to be preferred. For example, the following are potential targets for evolution: (1) better choices of activation functions, (2) better connectivity structures, and (3) better locations of the root node. Nature can also search beyond the simple linear heterosynaptic rules given in Eq.~\eqref{eq: update rule} (Appendix~\ref{app sec: algo generalization}). We design a simple evolution experiment in Figure~\ref{fig:synapse growth}. The system operates under simple rules: (1) every neuron fires at most twice, (2) the synapse weight is updated by the product of its first and last activation, where $\Delta W \propto h_{\rm first} h_{\rm last}^\top$, (3) the overall density of synapses tend to stay at a sweet spot that is not too dense or sparse. Note that this rule allows for both Hebbian plasticity and heterosynaptic plasticity. If a neuron only fires once, its incoming weights are updated using the Hebbian rule (see Appendix~\ref{app sec: evolution}). For simplicity, we experiment with a four-neuron system, and the error neuron fires at value $\hat{y}-y$ once the output neuron receives its first output $\hat{y}$. The first generation starts from a very sparse activation and evolves according to the fitness score, which is the negative loss function after 100 training steps. The results show interesting results: (1) with a sparse init., the neurons learn to densify very quickly, similar to the developmental stage of the brain, and (2) the fraction of neurons updated with HSP increases from 0.5 to 0.9, while that of HBP decreases significantly; (3) with a dense init., the evolution leads to a steady tendency of sparsification, and gradient learning emerges suddenly as the model becomes sparser; (4) for the 100-neuron system, the HBP density first increases and drops, which suggests a two-phase dynamics. This could be because the first phase is simply due to random growth, and the second phase prunes the Hebbian edges that destabilize the system. This shows cases of both the possibility of evolving such an algorithm and the fact that HSP is easily found to be favorable compared to HBP. 

In another simulation, we macroscopically perturb the connectivity maps of the SAL algorithm (Appendix~\ref{app sec: macroscopic}). We find that if the network satisfies DC by design, the variation in the performance is much smaller than a network that does not satisfy DC by design. This could imply that the connectivity structure is important for a general nonlinearity. For example, the nodes closer to the output perform better when the instructive signal comes from a node closer to the root node.




\section{Discussion}

Our theory can be seen as a first-order theory, which shows that {\it almost any heterosynaptic circuit can compute some form of gradient descent}. It is intuitive and empirically observed that different heterosynaptic circuits behave differently as they implement different gradient algorithms. A future direction is thus to establish a ``second-order" theory that predicts which type of connectivity is better thaitn others and compares such circuits with biological anatomy. Also, we showed that heterosynaptic stability is sufficient to ensure gradient learning, but it remains unclear whether it is necessary. Our pilot simulations suggest that they may not always be required, and future theory may relax the HS condition. Also, the mechanism for the emergence of DC is yet to be understood. We only explored the simplest ways for DC to emerge; the biological brains, meanwhile, have a large repertoire for global synchronization \cite{raut2021global}. For example, mechanisms beyond HS may exist to couple the consistency scores so that they interact like a percolating system.

\vspace{-1mm}
\paragraph{How does biological learning happen?} Until now, the primary paradigm of biological learning and memory has been Hebbian homosynaptic plasticity, and the role of HSP has been regarded as auxiliary. Because HBP dynamics is a positive feedback, it leads to exploding runaway dynamics, and HSP is required to stabilize the system \cite{chen2013heterosynaptic, chistiakova2014heterosynaptic, zenke2017hebbian}. However, this paradigm of thought is questionable: on the one hand, HBP has not been demonstrated to perform well in learning tasks, and HBP seems to easily lead to exploding dynamics that requires additional stabilizing mechanisms, which are not easy to implement. 
If nature uses HBP for learning, it seems to be making a bad investment: poor performance with many downsides. 

If we switch perspective, this problem can be magically fixed: HSP is the primary learning and memory mechanism, while HBP is an emergent byproduct (maybe also auxiliary). In this perspective, HSP simulates gradient descent and looks like it performs some Hebbian learning. Because there is no direct positive feedback, there is no runaway dynamics, and, thus, no need for a standalone stabilizing mechanism. A simple decay term is sufficient to ensure stability. In reality, active homosynaptic plasticity and the emergent Hebbian plasticity may coexist in the brain, but which one is more dominant \textit{in vivo} is yet unclear.

\vspace{-1mm}
\paragraph{Stochastic Gradient Descent in the Brain} Existing works on heterosynaptic circuits suggests that to identify these heterosynatpic circuits, one can study the anatomy of the brains, and try to identify the particular shapes and motifs proposed by these algorithms \cite{akrout2019deep, liao2024self, lillicrap2016random, lillicrap2020backpropagation}. However, our result implies that little can be inferred from anatomy beyond the (critical) establishment of local heterosynaptic connections because our theory shows that heterosynaptic motifs can implement gradient in a highly flexible way. On the good side, if the brain really performs a form of gradient descent, the literature on gradient learning in AI systems becomes useful for understanding learning in brain systems, and the phenomenology they discover can serve as signatures of brain implement gradient learning. For example, it might become possible to study the kernel and feature learning regimes of the brain \cite{chizat2018lazy, yang2020feature}. The tendency to learn orthogolized representations has been found both in brains \cite{sun2025learning} and neural networks \cite{ziyin2024formation}.

\vspace{-1mm}
\paragraph{Optimizer Architecture Design} So far, the design of AI optimizers is primarily through engineering the dynamics of training by manipulating the gradient, such as adding momentum or normalization to it \cite{journals/corr/KingmaB14_adam}. Our theory -- grounded in brain implementations -- suggests a rather different perspective: optimizing a feedforward network requires adding some form of feedback connections, by (1) simply expanding the original network independently from its specific task and (2) specifically adding feedback connectivity to them. Thus, we can now design the optimizer circuit like how we design neural network models, and its architecture does not have to be similar to the predictive circuit. The trainer architecture can be arbitrary: random, trained along with the predictive model, or frozen from a pretrained model. As our numerical results show, the best-performing model trainer is not necessarily one that resembles the predictor. Extensive evidence exists to show that HSP can work surprisingly well in deep learning; for example, distillation can be seen as a form of the heterosynaptic circuit which works well in training small networks from large models \cite{hinton2015distilling}.

\vspace{-1mm}
\paragraph{Physical Learning in Analog Computers} There is a resurgence of interest in developing analog AI hardware that directly implements neural networks such as quantum neural networks and photonic networks \cite{beer2020training, ashtiani2022chip, bogaerts2020programmable}, which may be much faster or more energy-efficient than existing GPU technology. However, implementing the learning algorithm in analog computers has been challenging because it is not easy to compute explicit gradients in analog. Our theory implies a workaround -- using heterosynaptic circuits removes the need to compute gradient explicitly. Our proposal may be implemented in these physical systems as an indirect but efficient way to implement gradient descent in analog hardware.

\bibliographystyle{plain}

\clearpage
\appendix
\section{Matrix Learning Rate}\label{app sec: indistinguishability}

SGD can be generalized to have a matrix learning rate. Let $H$ be any positive semidefinite (PSD) matrix and update the parameters $\theta$ by $\dot{\theta} = - H \nabla_\theta \ell$. Then, this dynamics will lead to a monotonic decrease in $\ell$ for any PSD $H$: $\dot{\ell} = - (\nabla^\top \ell) H  (\nabla \ell)$. Well-known training algorithms such as Adam \cite{journals/corr/KingmaB14_adam}, rmsprop \cite{Tieleman2012_rmsprop}, and natural gradient descent \cite{Amari:1998:NGW:287476.287477} have such a matrix learning rate. This can be seen as the a generalized version of GD. Our theory will show that Heterosynaptic rules leads to a gradient learning dyanmics with a PSD matrix learning rate. When $H$ is full-rank, this dynamics will have identical stationary points as GD. This also means that GD and GD with matrix learning rate are difficult to distinguish experimentally, especially when there is some noise in the gradient and the only available metric is the correlation between the two.

Many types of dynamics are difficult to distinguish, especially in the presence of strong noise \cite{welvaert2013definition}. For example, there is an important alternative experiment-driven interpretation of the matrix learning rate: dynamics with different matrix learning rates are difficult (if not impossible) to distinguish experimentally. Assuming that we want to show that a group of neurons $h$ follows a dynamics of the form $K(f)$. One necessarily has to measure the empirical change $\tilde{\Delta} h$ and compute the population average (in addition to some time average) of $h$: $\hat{m} = \sum_i \tilde{\Delta} h_i K_i(h)$. Due to strong noises and variations among the neurons, one concludes that $h$ approximately follows $K(h)$ if $\hat{m}$ is significantly (and even if weakly) positive. The necessity of making rather crude measurements implies that it is difficult to distinguish any rule $\Delta h$ that is positively aligned with $K(f)$. Therefore, we might as well regard all such dynamics as a family of equivalence classes. In this sense, all matrix-learning-rate gradient rules belong to the same equivalence class that is difficult to experimentally distinguish because the learning rate is PSD.

\section{Neuron Gradient Descent}\label{app sec: neuron gradient descent}

Consider two nodes of neurons $h_a$ and $h_b$ connected by a dense synaptic matrix weight $W$: $p_b = Wh_a$, where $W$ is the learnable weight, $p_a$ is the postactivation of node $a$ and $p_b$ the preactivation of node $b$. The learning objective (or, the loss function) $\ell=\ell(h_b)$ is a function of $h_b$. The gradient for the weights are 
\begin{equation}
    \nabla_W \ell = \nabla_{p_b} \ell h_a^\top,
\end{equation}
which is the outer product of $h_a$ with the gradient of $p_b$. Thus, if one knows the neuron gradient $\nabla_{p_b} \ell$, $\nabla_W\ell$ can be computed easily with the chain rule. We thus abstract away the weight gradient and focus on the activation gradient $\nabla_{p_b} \ell$. Neuron gradients are arguably more useful because they can be used in a feedback loop to enhance local neuronal computation and serve modulatory purposes other than learning. For example, the cortical feedback, argued to be a biological circuit for learning \cite{lillicrap2020backpropagation}, serves an immediate function of modulation \cite{hultborn2001state}.
\section{Theory}\label{app sec: theory}

\subsection{Heterosynaptic Stability}
Here, we show that heterosynaptic stability ensures that neuron response to specific signals takes a special form that could be regarded as a gradient. 

The algorithm works even if the activations of the predictive model $f$ are not differentiable (or have a zero gradient everywhere). We will prove this later in the section. We first present a proof for the case when $f$ is differentiable -- and the two proofs can be compared to improve understanding. Since this theorem is more for understanding, we ignore all factors of $O(\epsilon)$ here. Because $p$ and $h$ and $\bar{h}$ are all functions of the input signal $x$, we omit specifying the dependence of these quantities on $x$.

\begin{theorem}\label{theo: HS}
   (Heterosynaptic Stability) Let $\ell(p(t))$ and $\ell'(\bar{h}(t'))$ be separate loss functions for $h$ and $\bar{h}$, respectively. For any $x$ such that $\Delta \bar{V} =0$ in Eq.~\eqref{eq: update rule},
    \begin{equation}\label{app eq: barh}
        \bar{h}(t') =  \phi_p \bar{V}^\top \nabla_{p(t)} \ell,
    \end{equation}
    \begin{equation}
        p(t) =  \phi_{\bar{h}} \bar{V} \nabla_{\bar{h}(t')} \ell',
    \end{equation}
    where
    \begin{equation}
        \phi_p  =  \frac{\gamma}{\nabla_{p(t)}^\top \ell p(t)} \in \mathbb{R},
    \end{equation}
    \begin{equation}
         \phi_{\bar{h}} = \frac{\gamma}{\bar{h}^\top(t') \nabla_{\bar{h}(t')}\ell' }. \in \mathbb{R}. 
    \end{equation}
\end{theorem}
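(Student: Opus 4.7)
The plan is to exploit the rank-one structure that the stationarity condition imposes on $\bar{V}$. Setting $\Delta \bar{V}=0$ in Eq.~\eqref{eq: update rule} and solving for $\bar{V}$ yields the outer-product representation $\bar{V} \propto p(t)\,\bar{h}^\top(t')$, i.e., a single rank-one matrix (with the prefactor $\eta/\gamma$ implicit in the theorem's normalization of $\phi$). This lone algebraic fact will deliver both displayed identities.

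For the first equation, I would transpose the rank-one identity to $\bar{V}^\top \propto \bar{h}(t')\,p^\top(t)$ and right-multiply by the neuron gradient $\nabla_{p(t)}\ell$. Associativity collapses this into $\bar{V}^\top \nabla_{p(t)}\ell \propto \bigl[\, p^\top(t)\,\nabla_{p(t)}\ell \,\bigr]\, \bar{h}(t')$, where the bracketed factor is a scalar. Solving for $\bar{h}(t')$ and absorbing constants into $\phi_p = \gamma/\bigl[\,p^\top(t)\,\nabla_{p(t)}\ell\,\bigr]$ recovers Eq.~\eqref{app eq: barh}. For the second equation, I would instead right-multiply the untransposed identity by $\nabla_{\bar{h}(t')}\ell'$; the same cancellation leaves $\bar{V}\nabla_{\bar{h}(t')}\ell' \propto \bigl[\,\bar{h}^\top(t')\,\nabla_{\bar{h}(t')}\ell'\,\bigr]\, p(t)$, and solving for $p(t)$ delivers the stated $\phi_{\bar{h}}$.

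The principal difficulty is conceptual rather than computational: one must notice that a single stationarity condition can be contracted in two complementary directions---once with $\nabla_{p}\ell$ to peel off $\bar{h}(t')$, and once with $\nabla_{\bar{h}}\ell'$ to peel off $p(t)$---and that the resulting scalar denominators must be nonzero. These denominators vanish exactly when $p(t)$ (respectively $\bar{h}(t')$) is a critical point of the corresponding loss, in which case the stationarity condition only forces $\bar{V}^\top \nabla_{p(t)}\ell = 0$ and $\phi_p$ is formally undefined. I would flag this degeneracy but not dwell on it, since it corresponds to the uninteresting case of a vanishing learning signal. No further machinery---no differential inequalities, fixed-point arguments, or spectral properties---is needed; the whole result is a direct consequence of the outer-product form of the stability equation.
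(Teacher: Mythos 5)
Your proposal is correct and follows essentially the same route as the paper's proof: impose $p(t)\bar{h}^\top(t') = \gamma\bar{V}$ at stationarity, then contract this rank-one identity once with $\nabla_{p(t)}\ell$ and once with $\nabla_{\bar{h}(t')}\ell'$ to isolate $\bar{h}(t')$ and $p(t)$ respectively, with the scalar contractions supplying the denominators of $\phi_p$ and $\phi_{\bar h}$. Your explicit flagging of the degenerate case where a denominator vanishes is a point the paper's proof leaves implicit, but it does not change the argument.
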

In the proof, the quantity $\bar{h}$ is always evaluated at time $t'$, and $p$ is always evaluated at time $t$, so we omit specifying these arguments. 

\begin{proof}
    We first prove the first equation. By definition, we have 
    \begin{equation}
        \Delta \bar{V} = p \bar{h}^\top  - \gamma \bar{V}.
    \end{equation}
    At stationarity,
    \begin{equation}\label{app eq: proof 1}
        p \bar{h}^\top  =  \gamma \bar{V}.
    \end{equation}
    Multiplying $\nabla_{p}^\top \ell$ from the left,
    \begin{equation}
       \nabla_{p}^\top \ell p  \bar{h}^\top =  \gamma \nabla_{p}^\top \ell \bar{V}.
    \end{equation}
    Therefore,
    \begin{equation}
       \bar{h} =  \phi \bar{V}^\top  \nabla_{p} \ell.
    \end{equation}
    where 
    \begin{equation}
        \phi_p  = \frac{\gamma}{\nabla_{p}^\top \ell p}.
    \end{equation}
    Now, we prove the second equation. We start from Eq.~\eqref{app eq: proof 1}. Multiplying $\nabla_{\bar{h}}\ell'$ from the right, we obtain
    \begin{equation}
        p \bar{h}^\top \nabla_{\bar{h}}\ell'  =  \gamma \bar{V} \nabla_{\bar{h}}\ell'.
    \end{equation}
    Thus,
    \begin{equation}
         p   =  \phi_{\bar{h}} \bar{V} \nabla_{\bar{h}}\ell',
    \end{equation}
    where 
    \begin{equation}
        \phi_{\bar{h}} = \frac{\gamma}{\bar{h}^\top \nabla_{\bar{h}}\ell' }.
    \end{equation}
    This finishes the proof.
\end{proof}
Because regularization needs to balance with gradien, close to stationarity, the quantity $\phi$ should be of order $O(1)$, and so the effective learning is asymptotically independent of $\gamma$ \cite{ziyin2024formation}. An interesting aspect of the theorem is that it implies the emergence of a symmetric structure where $p(t)$ becomes a gradient for $\bar{h}(t')$ and vice versa simultaneously. Note that the two equations can be alternatively written as (assuming that the only target neurons of $\bar{h}$ is $p$):
\begin{equation}
    p(t') =  \phi_{p(t)} \bar{V}\bar{V}^\top \nabla_{p(t)} \ell,
\end{equation}
\begin{equation}
    p(t) =  \phi_{\bar{h}} \nabla_{p(t')} \ell' =  \phi_{p(t')} \nabla_{p(t')} \ell'.
\end{equation}

The HS condition for the weight $W$ is exactly the same. Due to symmetry in the dynamics for $W$ and $\bar{V}$, by applying theorem~\ref{theo: HS}, one trivially obtains the following theorem.
\begin{theorem}\label{theo: HS II}
   (Heterosynaptic Stability II) Let $\ell'(p(t'))$ and $\ell(\tilde{h}(t))$ be separate loss functions for $p(t')$ and $\tilde{h}(t)$, respectively. For any $x$ such that $\Delta W =0$ in Eq.~\eqref{eq: update rule},
    \begin{equation}
        \tilde{h}(t) =  \phi_{p(t')} W^\top \nabla_{p(t')} \ell',
    \end{equation}
    \begin{equation}
        p(t') =  \phi_{\tilde{h}(t)} W \nabla_{\tilde{h}(t)} \ell,
    \end{equation}
    where
    \begin{equation}
        \phi_{p(t')}  =  \frac{\gamma}{\nabla_{p(t')}^\top \ell p(t')} \in \mathbb{R},
    \end{equation}
    \begin{equation}
        \phi_{\tilde{h}(t)}  =  \frac{\gamma}{\nabla_{\tilde{h}(t)}^\top \ell \tilde{h}(t)} \in \mathbb{R}. 
    \end{equation}
\end{theorem}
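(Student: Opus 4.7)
The plan is to mirror the proof of Theorem~\ref{theo: HS} almost verbatim, exploiting the fact that the update rule in Eq.~\eqref{eq: update rule} for $W$ has exactly the same algebraic structure as the update rule for $\bar{V}$, with the roles of $(W,\bar{V})$, $(\tilde{h},\bar{h})$, and $(t,t')$ interchanged. Since Theorem~\ref{theo: HS} is stated and already proved, the cleanest route is to invoke it after an explicit relabeling rather than redo the algebra. However, to keep the theorem self-contained, I would also give the one-line direct derivation.

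First I would write out the stationarity condition $\Delta W = 0$ from Eq.~\eqref{eq: update rule}, giving (after absorbing $\eta$) the rank-one matrix identity
\begin{equation}
p(t')\,\tilde{h}^\top(t) \;=\; \gamma W.
\end{equation}
This is the exact analogue of Eq.~\eqref{app eq: proof 1} with the substitutions $p \mapsto p(t')$, $\bar{h}\mapsto \tilde{h}(t)$, $\bar{V}\mapsto W$. The first claim then follows by left-multiplying by $\nabla_{p(t')}^\top \ell'$, which collapses the left-hand side to the scalar $\nabla_{p(t')}^\top \ell'\, p(t')$ times $\tilde{h}^\top(t)$, yielding
\begin{equation}
\tilde{h}(t) \;=\; \frac{\gamma}{\nabla_{p(t')}^\top \ell'\, p(t')}\, W^\top \nabla_{p(t')} \ell',
\end{equation}
which is exactly the first identity with the stated $\phi_{p(t')}$. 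The second claim follows by instead right-multiplying the stationarity identity by $\nabla_{\tilde{h}(t)} \ell$, which collapses the bracketed scalar $\tilde{h}^\top(t)\nabla_{\tilde{h}(t)}\ell$ on the left and gives the stated expression for $p(t')$ with scalar $\phi_{\tilde{h}(t)}$.

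The only thing that requires any care is notational bookkeeping: the paper uses $\tilde{h}$ for the presynaptic input of $p$ at time $t$ and $\bar{h}$ for the presynaptic input at time $t'$, and one must resist conflating them. Once the substitution table $\{\bar{V}\mapsto W,\; \bar{h}\mapsto \tilde{h},\; t\leftrightarrow t',\; \ell\leftrightarrow \ell'\}$ is fixed, the argument is a formal renaming of Theorem~\ref{theo: HS}, so there is no genuine obstacle. I would close by remarking that both scalar coefficients $\phi_{p(t')}$ and $\phi_{\tilde{h}(t)}$ are of order $O(1)$ near stationarity by the same balance-of-gradient-and-decay reasoning noted after Theorem~\ref{theo: HS}, so that the resulting ``neuron gradient" interpretation of Eq.~\eqref{eq: hs} applies symmetrically to the $W$-synapse as well.
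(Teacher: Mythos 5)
Your proposal is correct and matches the paper's treatment: the paper likewise disposes of Theorem~\ref{theo: HS II} by noting the symmetry of the two update rules in Eq.~\eqref{eq: update rule} and applying Theorem~\ref{theo: HS} under exactly the relabeling $\{\bar{V}\mapsto W,\ \bar{h}\mapsto\tilde{h},\ t\leftrightarrow t',\ \ell\leftrightarrow\ell'\}$ you describe, and your explicit left/right-multiplication of the stationarity identity $p(t')\tilde{h}^\top(t)=\gamma W$ is the same algebra as in the proof of Theorem~\ref{theo: HS}. Your version with $\ell'$ in the denominator of $\phi_{p(t')}$ is in fact the self-consistent reading of the statement, so no gap remains.
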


Similarly, these equations imply the following equations (assuming that $p(t)$ are the only target neurons of $\tilde{h}(t)$):
\begin{equation}
    p(t) =  \phi_{p(t')}  W W^\top \nabla_{p(t')} \ell',
\end{equation}
\begin{equation}\label{app eq: pt'}
    p(t') =  \phi_{p(t)} W W^\top \nabla_{p(t)} \ell.
\end{equation}

In the main text, we have focused on discussing Eq.~\eqref{app eq: barh}, which is relevant to the case where both $W$ and $\bar{V}$ are plastic (assuming that our goal is to train $W$). However, when $\bar{V}$ is stationary, this equation is no longer relevant, and one needs to look at Eq.~\eqref{app eq: pt'} instead, which essentially states the same result as Eq.~\eqref{eq: hs}, where $WW^\top$ is the matrix learning rate.

\paragraph{Nondifferentiable $f$} When $f$ is not differentiable, we need to define an approximate model $F$ that has the same graph structure as $f$ but has only differentiable (and well-behaved) activations that approximate $f$. Let us label all the latent nodes of $f$ by subscript $i$. Let ${F}$ be any neural network with latent layers $\zeta_i$ such that for any $x$ and any $t$
\begin{equation}
    \zeta_i(x, t) = p_i(x, t) + O(\epsilon).
\end{equation}
Then, one can prove the following theorem. The second equation in Theorem~\ref{theo: HS} can be proved in exactly the same way, so we omit it for brevity.
\begin{theorem}\label{theo: HS advanced}
   (Heterosynaptic Stability) For any $x$ such that $\Delta \bar{V}_i = O(\epsilon)$,
    \begin{equation}
        \bar{h}_i =  \phi_i \bar{V}_i^\top \nabla_{\zeta_i(t)} \ell(F(x,t)) + O(\epsilon).
    \end{equation}
    where
    \begin{equation}
        \phi_i  =  \frac{\gamma}{\nabla_{\zeta_i}^\top \ell(F(x,t)) \zeta_i(x,t)} \in \mathbb{R}.
    \end{equation}
\end{theorem}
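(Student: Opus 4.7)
The plan is to mimic the proof of Theorem~\ref{theo: HS} almost line by line, using the smooth surrogate $F$ to replace the (possibly missing) gradient of $f$, and to track the $O(\epsilon)$ error through every step. Because the update rule in Eq.~\eqref{eq: update rule} involves only the activations $p_i$ themselves and not their derivatives, the nondifferentiability of $f$ never enters the stationarity equation; it only enters when we try to interpret the consistency score as a gradient. The approximation $\zeta_i(x,t) = p_i(x,t) + O(\epsilon)$ provides exactly the handle needed.

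First I would specialize the stationarity condition $\Delta \bar{V}_i = O(\epsilon)$ to the form
\begin{equation}
p_i(t)\,\bar{h}_i^\top(t') \;=\; \gamma\,\bar{V}_i \;+\; O(\epsilon),
\end{equation}
absorbing the factor of $\eta$ into $\gamma$ as in Theorem~\ref{theo: HS}. Next I would substitute $p_i(t) = \zeta_i(x,t) + O(\epsilon)$ on the left-hand side; since $\bar{h}_i^\top$ is bounded under the standing regularity assumptions, this still leaves an $O(\epsilon)$ remainder. The key algebraic move, exactly as in the differentiable case, is then to left-multiply by the row vector $\nabla_{\zeta_i}^\top \ell(F(x,t))$, which is well-defined because $F$ was constructed to be differentiable:
\begin{equation}
\bigl(\nabla_{\zeta_i}^\top \ell(F(x,t))\,\zeta_i(x,t)\bigr)\,\bar{h}_i^\top(t') \;=\; \gamma\,\nabla_{\zeta_i}^\top \ell(F(x,t))\,\bar{V}_i \;+\; O(\epsilon).
\end{equation}
The prefactor on the left is a scalar, so dividing by it and transposing yields the claimed identity with $\phi_i = \gamma / \bigl(\nabla_{\zeta_i}^\top \ell\, \zeta_i\bigr)$.

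The main obstacle, and the only place where this proof is genuinely more delicate than Theorem~\ref{theo: HS}, is the propagation of the $O(\epsilon)$ error through the final scalar division. One needs $\nabla_{\zeta_i}^\top \ell(F(x,t))\,\zeta_i(x,t)$ to be bounded away from zero on the inputs of interest, otherwise the error inflates and the conclusion degenerates. I would handle this by imposing a uniform lower bound (or, equivalently, by restricting the theorem to inputs $x$ for which $|\phi_i| \le C/\gamma$ for some constant $C$); this is a mild condition near heterosynaptic stationarity because, as the remark after Theorem~\ref{theo: HS} notes, $\phi$ is generically of order $O(1)$ in the balanced regime where decay matches drive. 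A secondary issue is verifying that $\nabla_{\zeta_i}\ell(F)$ really is the right object to replace $\nabla_{p_i}\ell(f)$; this is a definitional choice built into the theorem, justified by viewing $F$ as the smoothed predictive model that the circuit effectively trains.

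Everything else is bookkeeping: I would state explicitly the mild regularity hypotheses on $\ell$, $F$, and the boundedness of $\bar{h}_i$, $\bar{V}_i$, carry the $O(\epsilon)$ terms through additively, and finally note that the second identity of Theorem~\ref{theo: HS} admits an analogous proof by right-multiplying Eq.~\eqref{app eq: proof 1} by $\nabla_{\bar{h}_i}\ell'$, so it is omitted for brevity in line with the author's convention.
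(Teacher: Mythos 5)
Your proposal follows the paper's own proof essentially line by line: specialize the stationarity condition to $p_i\bar{h}_i^\top = \gamma\bar{V}_i + O(\epsilon)$, left-multiply by $\nabla_{\zeta_i}^\top\ell(F)$, replace $p_i$ by $\zeta_i + O(\epsilon)$, and divide by the resulting scalar. Your added caution that the denominator $\nabla_{\zeta_i}^\top\ell\,\zeta_i$ must be bounded away from zero for the $O(\epsilon)$ bookkeeping to survive the division is a genuine refinement the paper leaves implicit, but it does not change the route.
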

Similar to the previous proof, $p_i$ and $\zeta_i$ are evaluated at time $t$, and $\bar{h}_i$ is evaluated at time $t'$. We thus omit $t$ and $t'$ from the proof.

\begin{proof}
    By definition, we have 
    \begin{equation}
        \Delta \bar{V}_i = p_i(x) \bar{h}_i^\top(x)  - \gamma \bar{V}_i.
    \end{equation}
    Close to stationarity,
    \begin{equation}
        p_i(x) \bar{h}_i^\top(x)  =  \gamma \bar{V}_i + O(\epsilon).
    \end{equation}
    Multiplying $\nabla_{\zeta_i}^\top \ell$ from the left,
    \begin{equation}
       \nabla_{\zeta_i}^\top p_i \ell  \bar{h}_i^\top(x) =  \gamma \nabla_{\zeta_i}^\top \ell \bar{V}_i + O(\epsilon).
    \end{equation}
    Therefore,
    \begin{equation}
       \bar{h}_i(x) =  \phi_i  \bar{V}_i^\top  \nabla_{\zeta_i} \ell + + O(\epsilon).
    \end{equation}
    where 
    \begin{equation}
        \phi_i = \frac{\gamma}{\nabla_{\zeta_i}^\top \ell \zeta_i(x)}.
    \end{equation}
\end{proof}

\begin{remark}
    Note that the dynamics of $\Delta \bar{V}$ can be completely {nonstationary}. The only condition required is that $\Delta \bar{V}$ is small, which can happen for some $x$ just due to randomness, and the signal coming from other $x$ may just contribute an essentially random noise signal that cancels out on average, which is what numerically results seem to imply.
\end{remark}

The proof shows that the model will learn to update $f$ in a way as if it is performing gradient descent training for $F$. If $f$ is fully differentiable, one can simply replace $F$ with $f$ and $\zeta$ with $h$ to obtain a simpler proof, which we have presented in the first part of this section. In exactly the same way, by making these replacements, all the theorems we prove in the next section can also be extended to nondifferentiable $f$. This is a exercise which we leave to the readers.

\clearpage

\subsection{Dynamical Consistency by }
We will make the following assumption for the loss function.
\begin{assumption}\label{assump: separation}
    (Path separation implies gradient separation) For any two sets of neurons in the network ($h_1,\ h_2$), if there exists constant matrix $Z >0$ such that 
    \begin{equation}
        h_1 = Z h_2,
    \end{equation}
    then, there exists $c_0 >0$ such that 
    \begin{equation}
       c_0 Z^\top \nabla_{h_1} \ell  = \nabla_{h_2} \ell.
    \end{equation}
\end{assumption}
This result is trivially true for feedforward networks, or if the loss function depends on $h_2$ only through its dependence on $h_1$.

\subsubsection{Emergence of DC}
We consider a class of nonlinear activations that can be described by the following definition.
\begin{definition}\label{def: rlu}
    An activation $h(p)$ is said to be a radially linear unit (RLU) if 
    \begin{equation}
    h = D(p) p,
    \end{equation}
    where $D$ is a diagonal matrix and $D_{ii}(p)$ is an arbitrary piecewise constant function.
\end{definition}
 Two common activation functions that obey this form is ReLU and Leaky-ReLU. This class of functions is almost everywhere continuous and differentiable. Also, note that almost any activation functions can be approximated by a member of this class (although usually in a discontinuous way). It is also possible to generalize the nonlinearity to a nondiagonal $D$.

At time $t$, suppose a set of neurons  $\tilde{h}$ fire, which then leads to another set of neurons $h$ to fire, and so
\begin{equation}
    p(t) =  \tilde{W} \tilde{h}(t),
\end{equation}
\begin{equation}
    h(t) = D(p(t))p(t),
\end{equation}
where $\tilde{W} =M \odot W$, for plastic weights $W$ and a fixed zero-one connectivity matrix $M$.

The following theorem states that all three quantities must be mutually consistent. Since all events happen at time $t$, we omit the notation $t$ in the proof.
\begin{theorem}
    Let the target function $\ell$ depend on a set of neurons viewed as a vector $h$, $\ell= \ell(h)$. Then,
    \begin{equation}
        \phi_h = \phi_p = \phi_{\tilde{h}}.
    \end{equation}
\end{theorem}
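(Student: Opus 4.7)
The plan is to show that the three denominators $\nabla_p^\top \ell \cdot p$, $\nabla_h^\top \ell \cdot h$, and $\nabla_{\tilde h}^\top \ell \cdot \tilde h$ coincide; since each $\phi$ equals $\gamma$ divided by the corresponding denominator, equality of $\phi_h,\phi_p,\phi_{\tilde h}$ will follow immediately from Theorem~\ref{theo: HS}. So the whole argument reduces to two chain-rule identities: $\nabla_p^\top\ell\cdot p = \nabla_h^\top\ell\cdot h$ and $\nabla_{\tilde h}^\top\ell\cdot \tilde h = \nabla_p^\top\ell\cdot p$. Both are essentially Euler-type identities that exploit the fact that each layer map is radially linear.

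\paragraph{First identity: $h$ vs.\ $p$.}
First I would use Definition~\ref{def: rlu}: in a neighborhood of any fixed $p$, the diagonal matrix $D(p)$ is locally constant because $D_{ii}$ is piecewise constant. Therefore, on this neighborhood the map $p \mapsto h = D(p) p$ is just the linear map $p\mapsto D p$, so its Jacobian is $D$ and the chain rule gives $\nabla_p \ell = D^\top \nabla_h \ell = D\nabla_h \ell$ (using that $D$ is diagonal, hence symmetric). Taking the inner product with $p$ and recognizing $D p = h$,
\begin{equation}
\nabla_p^\top \ell\cdot p = \nabla_h^\top \ell\cdot D p = \nabla_h^\top \ell\cdot h,
\end{equation}
which is exactly $\phi_p = \phi_h$.

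\paragraph{Second identity: $p$ vs.\ $\tilde h$.}
For the other equality, I would apply the chain rule to $p = \tilde W \tilde h$, which is genuinely linear in $\tilde h$, so $\nabla_{\tilde h}\ell = \tilde W^\top \nabla_p \ell$. Then
\begin{equation}
\nabla_{\tilde h}^\top \ell\cdot \tilde h = \nabla_p^\top \ell\cdot \tilde W \tilde h = \nabla_p^\top\ell\cdot p,
\end{equation}
giving $\phi_{\tilde h} = \phi_p$. Combined with the previous step, this proves the theorem. One should also invoke Assumption~\ref{assump: separation} (with $Z = D$ for the first identity and $Z=\tilde W$ for the second) if one wants to allow $\ell$ to depend on $p$ or $\tilde h$ through pathways other than $h$; in that case the same scalar $c_0$ multiplies both sides of each identity and cancels.

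\paragraph{Where the subtlety lies.}
The routine chain-rule manipulations are the easy part; the only real obstacle is the nondifferentiability of $D(p)$ at the jump loci of the piecewise constant functions $D_{ii}$. I would handle this by restricting attention to $x$ lying outside the measure-zero set where any $p_i(x)$ sits exactly at a jump of $D_{ii}$, so that $D$ is locally constant and the chain rule applies without caveat. (For a generic random input this exceptional set has measure zero, and at points of nondifferentiability one can pass to the analogous statement for a smooth RLU approximation in the spirit of Theorem~\ref{theo: HS advanced}, incurring only $O(\epsilon)$ corrections.) Once that is dispatched, the proof is two lines of chain rule.
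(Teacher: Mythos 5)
Your proof is correct and follows essentially the same route as the paper: the paper likewise establishes $\phi_h=\phi_p$ via $\nabla_p h = D(p)$ and $h=D(p)p$, and $\phi_p=\phi_{\tilde h}$ via the linearity of $p=\tilde W\tilde h$ together with Assumption~\ref{assump: separation}. Your extra care about the measure-zero jump loci of the piecewise-constant $D_{ii}$ is a point the paper leaves implicit, but it does not change the argument.
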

\begin{proof}
    By definition, 
    \begin{equation}
        \nabla_p h = D(p).
    \end{equation}
    Therefore, 
    \begin{equation}
        \phi_h= \nabla_h^\top\ell(h)  h = \nabla_h^\top\ell(h) D(p) p = \nabla_p^\top\ell(p) p = \phi_p.
    \end{equation}
    Similarly, 
    \begin{equation}
        \phi_p = \nabla_p^\top\ell(p) \tilde{W} \tilde{h} = \nabla_{\tilde{h}}^\top\ell(\tilde{h})  \tilde{h} = \phi_{\tilde{h}},
    \end{equation}
    where we have used assumption~\ref{assump: separation}.
    This finishes the proof.
\end{proof}

This shows that a set of neurons is consistent with its immediate input neurons, independent of the connectivity structure $M$. The next section shows that if we allow some degree of asymmetry between the update rules of $\bar{V}$ and $W$, then this mutual consistency can emerge for an arbitrary activation function.

\subsubsection{Algorithmic Generalization}

First of all, the heterosynaptic circuit can be generalized to the case where the update is not with the activations but with a nonlinear function of the activations. For example,
\begin{equation}
    \Delta \bar{V} = g_1(h) g_2(\bar{h})^\top - \gamma \bar{V}.
\end{equation}
Note that the consistency score is now of functional of $g_1$. To distinguish with the original consistency score $\phi$, we use $\kappa$ to denote the generalized consistency score:
\begin{equation}
    \kappa[g_1] = \nabla_h^\top\ell g_1(h),
\end{equation}
and so the sign of the update is determined by the quantity $g_1$. Thus, for a fixed architecture, one might be able to find a local function $g_1$ that works well. Also, searching for $g_2$ might also be useful. As an extreme example, letting $g_2(h) =\nabla_h^\top \ell h \bar{h}$ would have completely removed the dynamical consistency problem, but this is unlikely to happen for biological systems. But, still, it is possible for biology to locally optimize over the nonlinear functional forms of $g_1$ and $g_2$, given the vast amount of nonlinear biophysical factors that could play a role in influencing the plasticity of neurons.

Now, let us consider the following choice of the update rule:
\begin{align}\label{app sec: new update rules}
    \Delta \bar{V} = \eta J^{+}(p)D(p) p(t) \bar{h}^\top(t')  - \gamma \bar{V},\\
    \Delta W = \eta p(t') \bar{h}^\top(t)  - \gamma W,
\end{align}
where $D(p)$ is the activation matrix such that 
\begin{equation}
    h = D(p) p,
\end{equation}
and 
\begin{equation}
    J(p) =\nabla_p h
\end{equation}
is the Jacobian of $h$ as a function of $p$, and $J^+$ is its pseudoinverse. Namely, both the $D$ and $J$ matrices are properties of the local activation, which can in principle be approximated through nonlinear biochemical processes of the local neuron. We assume that the activation satisfies the following property: for any $p$,
\begin{equation}
    {\rm ker} D(p) \subseteq  {\rm ker}(J(p)) . 
\end{equation}
This means that gradient is zero for those neurons that do not fire. This assumption holds for ReLU, and certainly for any activation that is invertible.

Now, the HS conditions for this dynamics lead to the following two lemmas. For simplicity, we set $\eta=1$.
\begin{lemma}\label{lemma:1}
    When $\Delta W =0$,
    \begin{equation}
        p(t') = \phi_p W W^\top \nabla_{p}\ell.
    \end{equation}
\end{lemma}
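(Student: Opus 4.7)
The plan is to reduce Lemma~\ref{lemma:1} to the same algebraic manipulation used in Theorem~\ref{theo: HS II}, observing that the $W$-update in Eq.~\eqref{app sec: new update rules} is identical to the original heterosynaptic rule in Eq.~\eqref{eq: update rule}, so the modification of the $\bar{V}$ update (the insertion of $J^{+}(p)D(p)$) does not enter the argument at all. The starting point is the HS condition with $\eta=1$:
\begin{equation*}
    \Delta W = 0 \;\;\Longleftrightarrow\;\; p(t')\,\bar{h}^{\top}(t) = \gamma\, W.
\end{equation*}
This is a rank-one identity relating the outgoing signal $p(t')$ and the incoming signal $\bar{h}(t)$ to the weight $W$ itself.

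Next I would right-multiply this identity by $W^{\top}\nabla_{p}\ell$, turning the right-hand side into $\gamma\, WW^{\top}\nabla_{p}\ell$, which already has the desired structure. For the left-hand side I would invoke the forward pass at time $t$, namely $p(t) = W\bar{h}(t)$, so that $\bar{h}^{\top}(t)\,W^{\top} = p^{\top}(t)$ and the left-hand side collapses to $p(t')\,\bigl(p^{\top}(t)\,\nabla_{p}\ell\bigr)$. Since $p^{\top}(t)\nabla_{p}\ell$ is a scalar, dividing through yields
\begin{equation*}
    p(t') \;=\; \frac{\gamma}{\nabla_{p}^{\top}\ell \cdot p(t)}\; WW^{\top}\nabla_{p}\ell \;=\; \phi_{p}\, WW^{\top}\nabla_{p}\ell,
\end{equation*}
with $\phi_{p}=\gamma/(\nabla_{p}^{\top}\ell\, p(t))$ exactly matching the consistency score defined in Theorem~\ref{theo: HS}.

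The main obstacle is essentially bookkeeping rather than mathematics: one must be careful that the $\bar{h}(t)$ appearing in the $W$-update plays the role of the input to $p$ at time $t$, so that the forward relation $p(t)=W\bar{h}(t)$ can legitimately be substituted. Once this identification is in place, the proof is a one-line contraction and nothing about the new $\bar{V}$ rule is needed; the nontrivial use of the $J^{+}(p)D(p)$ modification will only appear in the companion lemma concerning $\Delta\bar{V}=0$, where it is required to absorb the nonlinearity of $h(p)$ and preserve dynamical consistency for general activations.
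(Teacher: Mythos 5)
Your proof is correct and follows essentially the same route as the paper's: at stationarity $p(t')\tilde{h}^\top(t)=\gamma W$, right-multiply by $W^\top\nabla_p\ell$, and contract $\tilde{h}^\top(t)W^\top=p^\top(t)$ to isolate $p(t')$ with the scalar $\phi_p=\gamma/(\nabla_p^\top\ell\,p(t))$. You are also right that the $J^{+}(p)D(p)$ modification to the $\bar{V}$ rule plays no role here; your only addition is making explicit the identification of the presynaptic activation in the $W$-update with the forward input to $p$, which the paper leaves implicit.
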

\begin{proof}
    By the construction in Eq.~\eqref{app sec: new update rules}, we have that at stationarity
    \begin{equation}
         p(t') \tilde{h}^\top(t)  = \gamma W.
    \end{equation}
    Multiplying $W^\top \nabla_{p}\ell$, we obtain that 
    \begin{equation}
         p(t') \tilde{h}^\top(t) W^\top \nabla_{p}\ell  = \gamma W W^\top \nabla_{p}\ell.
    \end{equation}
    This simplifies to
    \begin{equation}
         p(t')   = \phi_p W W^\top \nabla_{p}\ell,
    \end{equation}
    where
    \begin{equation}
        \phi_p = \frac{\gamma} {p^\top \nabla_{p}\ell}.
    \end{equation}
\end{proof}

\begin{lemma}
    When $\Delta \bar{V} =0$,
    \begin{equation}
        \bar{h}(t') =  \kappa_p[g_1] \bar{V}^\top \nabla_{p(t)} \ell.
    \end{equation}
\end{lemma}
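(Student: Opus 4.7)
The plan is to replicate the proof of Theorem~\ref{theo: HS} nearly verbatim, with the one change that the outer product $p(t)\,\bar h^\top(t')$ appearing there is replaced by the generalized bilinear template $g_1(p)\,g_2(\bar h)^\top$ introduced just above the lemma. For the update rule in Eq.~\eqref{app sec: new update rules} the relevant choices are $g_1(p) = J^{+}(p)D(p)\,p$ and $g_2(\bar h) = \bar h$, so the stationarity condition $\Delta \bar V = 0$ simply reads
\begin{equation*}
    J^{+}(p)D(p)\,p(t)\,\bar h^\top(t') \;=\; \gamma\, \bar V.
\end{equation*}

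Next I would left-multiply both sides by the row $\nabla_{p(t)}^\top \ell$. The left-hand side collapses to the scalar factor
\begin{equation*}
    \nabla_{p(t)}^\top \ell \cdot J^{+}(p)D(p)\,p(t) \;=\; \nabla_{p(t)}^\top \ell \cdot g_1(p) \;=\; \kappa_p[g_1]
\end{equation*}
multiplying the row $\bar h^\top(t')$, while the right-hand side becomes $\gamma\,\nabla_{p(t)}^\top \ell\,\bar V$. Dividing through by $\kappa_p[g_1]$ and transposing yields
\begin{equation*}
    \bar h(t') \;=\; \frac{\gamma}{\kappa_p[g_1]}\,\bar V^\top \nabla_{p(t)} \ell,
\end{equation*}
which is the stated formula up to the usual absorption of the scalar $\gamma/\kappa_p[g_1]$ into the symbol $\kappa_p[g_1]$ (exactly as $\gamma$ and $\phi$ are normalized together in Theorem~\ref{theo: HS} and the remark that follows).

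I do not expect a genuine obstacle at this step. The only well-posedness condition is $\kappa_p[g_1]\neq 0$, so that the division above is legitimate; this is the direct analog of the nondegeneracy condition $\nabla_p^\top \ell\,p\neq 0$ in Theorem~\ref{theo: HS} and is generic away from critical points of $\ell$ along the $J^{+}D$-weighted direction. Notably, the kernel assumption ${\rm ker}\,D(p)\subseteq {\rm ker}\,J(p)$ is not invoked in this proof: it is reserved for combining this lemma with Lemma~\ref{lemma:1} to obtain dynamical consistency between $p(t')$ and $\bar h(t')$, which is the actual downstream payoff of the modified update rule. The content of the lemma itself is thus purely bookkeeping matched to the definition of the generalized consistency score.
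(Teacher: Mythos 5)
Your proposal is correct and matches the paper's intent exactly: the paper omits this proof with the remark that it is ``essentially the same'' as Lemma~\ref{lemma:1} and Theorem~\ref{theo: HS}, and your reconstruction (stationarity gives $g_1(p)\bar h^\top(t')=\gamma\bar V$, left-multiply by $\nabla_{p(t)}^\top\ell$, divide by the resulting scalar) is precisely that argument. Your observations that the prefactor is $\gamma/\kappa_p[g_1]$ versus $\kappa_p[g_1]$ only up to the paper's own loose normalization of consistency scores, and that the kernel condition ${\rm ker}\,D\subseteq{\rm ker}\,J$ is only needed downstream in Theorem~\ref{theo: dc}, are both accurate.
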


The proof is essentially the same. Note that $p(t')$ is the same but $\bar{h}(t')$ due to the introduced asymmetry. These two lemmas allow us to prove the following theorem.
\begin{theorem}\label{theo: dc}
    For any activation function $D(h)$, for any data point such that $\Delta W = 0$ and $\Delta\bar{V} = 0$,
    \begin{equation}
        \phi_p \phi_h \geq 0.
    \end{equation}
\end{theorem}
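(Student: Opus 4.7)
My plan is to combine the two stationarity conditions to produce two distinct expressions for $p(t')$, equate them, and then pair the result against $\nabla_p \ell$ so that both sides become manifestly sign-definite. The whole design of the modified update $\Delta \bar{V} = \eta J^{+}(p) D(p) p(t) \bar{h}^\top(t') - \gamma \bar{V}$ is tailored so that the effective consistency score emerging from the $\bar{V}$-stationarity is not $\phi_p$ but rather $\phi_h$; once this is established, matching the two expressions for $p(t')$ forces $\phi_p$ and $\phi_h$ to be linked by a ratio of two nonnegative quadratic forms in $\nabla_p \ell$, which immediately yields $\phi_p \phi_h \geq 0$.

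First I would apply Lemma~\ref{lemma:1} to obtain
\begin{equation}
    p(t') \;=\; \phi_p\, W W^\top \nabla_p \ell.
\end{equation}
Next, setting $\Delta \bar{V} = 0$ in the modified update and multiplying by $\nabla_p^\top \ell$ from the left gives
\begin{equation}
    \bar{h}(t') \;=\; \frac{\gamma}{\nabla_p^\top \ell\, J^{+}(p) D(p)\, p}\, \bar{V}^\top \nabla_p \ell.
\end{equation}
Using the chain-rule identity $\nabla_p \ell = J^\top \nabla_h \ell$ and the defining relation $h = D(p)p$, the denominator rewrites as $\nabla_h^\top \ell\, J J^{+} h$. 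Under the hypothesis $\ker D(p) \subseteq \ker J(p)$, which for elementwise (diagonal) activations implies $\mathrm{range}(D) \subseteq \mathrm{range}(J)$, we get $J J^{+} h = h$, so the denominator collapses to $\nabla_h^\top \ell\, h = \gamma/\phi_h$. Hence $\bar{h}(t') = \phi_h\, \bar{V}^\top \nabla_p \ell$ and, using $p(t') = \bar{V}\bar{h}(t')$,
\begin{equation}
    p(t') \;=\; \phi_h\, \bar{V}\bar{V}^\top \nabla_p \ell.
\end{equation}

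Finally, I would equate the two expressions for $p(t')$ and take the inner product with $\nabla_p \ell$ from the left to obtain
\begin{equation}
    \phi_p\, \|W^\top \nabla_p \ell\|^2 \;=\; \phi_h\, \|\bar{V}^\top \nabla_p \ell\|^2.
\end{equation}
Both sides are products of a scalar with a nonnegative quadratic form, so generically $\phi_p$ and $\phi_h$ must carry the same sign; if either quadratic form vanishes, the corresponding score must itself vanish, so $\phi_p \phi_h \geq 0$ holds as an identity in every case.

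The main obstacle is the pseudoinverse identity $J J^{+} h = h$, which requires $h \in \mathrm{range}(J)$. Translating the stated kernel-containment hypothesis $\ker D \subseteq \ker J$ into the needed range inclusion is immediate for coordinate-wise activations, where $D$ and $J$ are diagonal and $\ker/\mathrm{range}$ are determined by the same coordinate subspaces. For nondiagonal $D(p)$ one would need an auxiliary assumption, such as $D$ and $J$ sharing common invariant subspaces or $J$ being injective on $\mathrm{range}(D)$, to make the simplification rigorous. I would therefore highlight ReLU and Leaky-ReLU as the canonical cases where the argument proceeds without additional caveats.
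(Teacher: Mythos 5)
Your proposal is correct and follows essentially the same route as the paper's own proof: invoke Lemma~\ref{lemma:1} for $p(t')$ via $\Delta W=0$, use the $J^{+}D$ modification of the $\bar V$-update to show its consistency score collapses to $\phi_h$ (via $\nabla_p\ell = J^\top\nabla_h\ell$ and $JJ^{+}Dp=Dp$), equate the two expressions for $p(t')$, and contract with $\nabla_p\ell$ to get $\phi_p$ and $\phi_h$ multiplied by nonnegative quadratic forms. Your closing remark on the range inclusion needed for $JJ^{+}h=h$ is a fair caveat that the paper itself glosses over, but it does not change the argument.
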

\begin{proof}
    For this choice of $g_1$, we have
\begin{align}
    \phi_p[g_1]  &= \nabla_{p}^\top \ell(p) J^{+} D p \\
    &= \nabla_{h}^\top \ell(h) J J^{+} D p\\
    &= \nabla_{h}^\top\ell(h)  Dp\\ 
    &= \nabla_{h}^\top\ell h\\
    &= \phi_h.
\end{align}

Thus,  
\begin{equation}
     \bar{h}(t') =  \phi_h \bar{V}^\top \nabla_{p(t)} \ell.
\end{equation}
and so
\begin{equation}
    p(t')= \bar{V} \bar{h}(t') =  \phi_h\bar{V}  \bar{V}^\top \nabla_{p(t)} \ell.
\end{equation}
Comparing this with Lemma~\ref{lemma:1}, we have
\begin{equation}
    \phi_h\bar{V}  \bar{V}^\top \nabla_{p(t)} \ell = \phi_p W W^\top \nabla_{p}\ell.
\end{equation}
Multiplying $ \nabla_{p}^\top\ell$ on the left, we obtain that for some $c_0,\ c_1 \geq 0$,
\begin{equation}
    c_0 \phi_h  = c_1\phi_p.
\end{equation}
This implies that 
\begin{equation}
    \phi_h \phi_p \geq 0.
\end{equation}
This finishes the proof.

\end{proof}

Together with Assumption~\ref{assump: separation}, this implies that $\phi_p$, $\phi_h$, $\phi_{\tilde{h}}$ are of the same sign, which, in turn, implies that for all nodes in the network, the consistency scores will be of the same sign. While this proof implies a strong result, it also relies on the rather strong assumption~\ref{assump: separation}. An important future theoretical problem is to relax this assumption.

\subsection{Existence of Consistent Groups}\label{app sec: algo generalization}

For notational simplicity, we prove this for the RLU type of activations and the original update rule in Eq.~\eqref{eq: update rule}. The proof can be generalized to the generalized update rule in Eq.~\eqref{app sec: new update rules}. 

\begin{theorem}\label{theo: dynamical consistency}
    (Consistency is achievable) Let $G = \{z_1,...,z_N\}$ be the neurons that the model output $f(x)$ depends on. Then, for any forward architecture, there exists (mutually nonexclusive) subsets $h_1,..., h_M \subset G$ of forward neurons into nodes such that (1) $\bigcup_{i}^M h_i = G$ and (2) dynamical consistency holds for every pair of these subsets;
\end{theorem}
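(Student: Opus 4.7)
The plan is to construct the cover $\{h_i\}$ by propagating a single universal scalar
\begin{equation}
    \Phi \;:=\; \nabla_{\hat{y}}^\top \ell \cdot \hat{y}
\end{equation}
backwards through the forward DAG. Since $\Phi$ depends only on the loss and the output, if every group $h_i$ satisfies $\nabla_{h_i}^\top \ell \cdot h_i = \Phi$ (or at least shares its sign), then all consistency scores $\phi_i = \gamma/(\nabla_{h_i}^\top \ell \cdot h_i)$ share a sign and property (2) follows trivially.

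The core calculation rests on two preservation identities. Identity (A): for an RLU stage $h = D(p)p$ with diagonal $D$ (Definition~\ref{def: rlu}), direct substitution yields $\nabla_h^\top\ell\cdot h = \nabla_p^\top\ell\cdot p$. Identity (B): for a merge preactivation $p = \sum_k W_k \tilde h_k$ with upstream sources $\tilde h_k$, Assumption~\ref{assump: separation} gives $\nabla_p^\top\ell\cdot p = \nabla_{\tilde H}^\top\ell\cdot \tilde H$, where $\tilde H$ is the concatenation $(\tilde h_1, \tilde h_2, \ldots)$ viewed as a single combined input node with block weight $[W_1\ W_2\ \cdots]$. Together (A) and (B) are exactly the moves used for the strict ReLU network in Section~\ref{sec: main result}, repackaged to iterate across branches.

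The cover is then built by backward recursion along a topological sort of the DAG. Set $h^{(0)} := \{\hat y\}$, whose consistency scalar is trivially $\Phi$. Given $h^{(t)}$, apply Identity (A) to pass from $h^{(t)}$ to its preactivations, and Identity (B) to merge all upstream sources into the combined node $h^{(t+1)}$. By induction, every $h^{(t)}$ satisfies $\nabla_{h^{(t)}}^\top\ell\cdot h^{(t)} = \Phi$. Since every $z\in G$ is reached by backward traversal in finitely many steps, $\bigcup_t h^{(t)} = G$, giving (1); equality of the scalars gives (2). Overlap across the $h^{(t)}$ naturally occurs whenever a neuron has both short-range and long-range forward projections, which is precisely the content of the ``mutually nonexclusive'' clause.

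The main obstacle is multiplicity bookkeeping in Identity (B) when a single upstream neuron $z$ feeds into more than one downstream merge at the same level. Naively, the concatenated vector $\tilde H$ lists $z$ more than once, inflating $\nabla_{\tilde H}^\top\ell\cdot\tilde H$ relative to the set-theoretic subset it represents. I would resolve this by keeping $\tilde H$ as a multiset (with matching block weight entries) for the purpose of the identity, while declaring the corresponding $h^{(t+1)}\subseteq G$ set-theoretically; the two inner products then differ only by a positive rescaling, which is all the sign-matching required for property (2). A secondary concern is that Assumption~\ref{assump: separation} demands strict positivity in the matrix $Z$, which for signed or masked block matrices may hold only up to sign; in that regime the argument still produces a valid cover, delivering $\phi_i\phi_j\geq 0$ rather than the stronger $\phi_i=\phi_j$, which is exactly what the statement actually requires.
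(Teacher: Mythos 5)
Your proposal is correct and follows essentially the same route as the paper: both telescope the scalar $\nabla_h^\top \ell\, h$ through the levels of the DAG using the RLU identity $\nabla_h^\top\ell\, h=\nabla_p^\top\ell\, p$ and the linearity of the merges, so every level shares the value $\nabla_{\hat y}^\top\ell\,\hat y$. The paper handles the skip-connection overlap by inserting identity activations so the DAG becomes a sequential composition (each level is then an exact cut), which makes your multiset bookkeeping unnecessary; note also that your duplicated copies carry partial gradients that sum to the total gradient, so the multiset and set-theoretic inner products are in fact equal rather than merely positively proportional.
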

\begin{proof}
    Note that because the forward computation graph is acyclic, the loss function can be decomposed into a composition of $2D-1$ functions, where $D$ is the depth of the acyclic graph:
    \begin{equation}
        \ell(x) =  L \circ K_D \circ \Sigma_D \circ ... \Sigma_1 \circ K_1 (x),
    \end{equation}
    where $L$ is the loss function for the label, $K_i$ is a linear transformation realized by multiplying a weight matrice, which we also denote as $K_i$, and $\Sigma_i$ are the nonlinearities of each layer:
    \begin{equation}
        \Sigma_i (h) = D_i(h) h,
    \end{equation}
    where the diagonal functions of $D_i$ are either a RLU nonlinearity given by Definition~\ref{def: rlu}, or a constant function equal to $1$ because there could be skip connections in the original graph. Now, defining 
    \begin{equation}
        h_i(x) = \Sigma_i \circ K_i \circ ... \circ \Sigma_1 \circ K_1 (x),
    \end{equation}
    gives the desired subsets of neurons. To see this, note that the derivative of $h_i$ is 
    \begin{equation}
        \nabla_{h_i} \ell =   K_{i+1}^\top  ... D_D^T(h_{D_1}) K_D^\top \nabla_{h_D} L 
    \end{equation}
    where we have slightly abused the notation to regard $K_D$ also as a constant matrix: $K_i(x) = K_ih$. This means that 
    \begin{equation}
        \nabla_{h_i}^\top \ell h_i =    \underbrace{\nabla_{h_D}^\top L K_D D_D(h_{D_1})... K_{i+1}}_{\nabla_{h_i} \ell} \underbrace{D_i ... D_1 K_1 x}_{h_i} = \nabla_{h_D}^\top L h_D,
    \end{equation}
    which is independent of $i$. This means that $\phi_i$ for every subset $h_i$ is mutually consistent. This division into subsets certainly includes all neurons at least once, so the proof is complete.
\end{proof}
\begin{remark}
    While the proof looks simple, it contains a lot of interesting aspects. A direct consequence is that if one wants to guarantee gradient learning, one should create auxiliary nodes that connect densely to each of these subsets of forward neurons. One interesting observation is that if there are skip connections, then some neurons will appear in multiple subsets. This means that some synapses will receive more than one signal for learning. See Figure~\ref{fig:consistent subset example} for an illustration of such grouping.
\end{remark}

\begin{figure}
    \centering
    \includegraphics[width=0.5\linewidth]{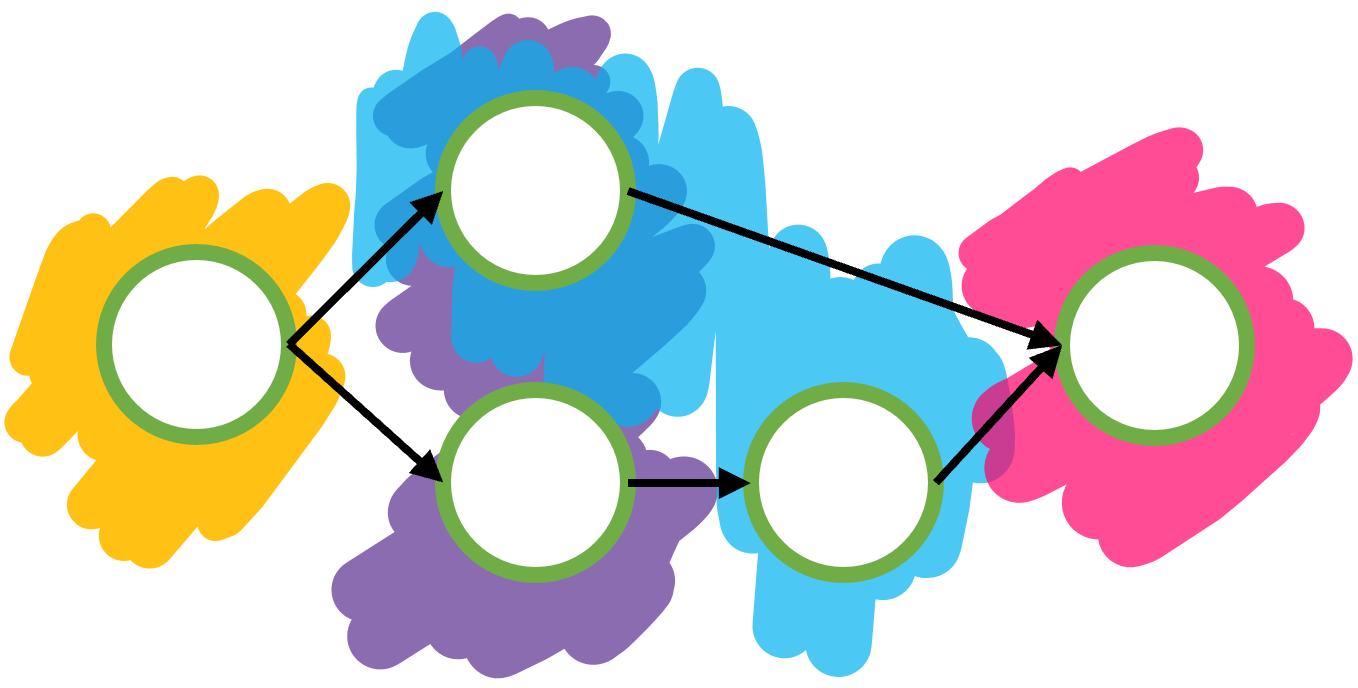}
    \caption{A network with five neurons and skip connections. Here, each coloring denotes a subset of neurons, which shows an example of how the neuron can be grouped into subsets such that every subset has the same consistency score as other ones.}
    \label{fig:consistent subset example}
\end{figure}

\subsubsection{Root Node and HS Give Rise to Gradient Descent}

Let $h$ be the activation of the output note of the network and $\bar{h}$ be the root node such that when the network computes an output $h =\hat{y}$, $\bar{h}$ computes
\begin{equation}
    -\nabla_{h} \ell(h).
\end{equation}
Now, connect $\bar{h}$ to $h$ by an identity matrix. The input weight $W$ to $h$ is updated using the HSP rule:
\begin{equation}
    \Delta W = - \gamma W + \bar{h} \tilde{h}^T,
\end{equation}
where $\tilde{h}$ is the set of all neurons connected $h$.
For those data points that this quantity is stationary, we have that 
\begin{equation}
    \gamma W = \bar{h} \tilde{h}^T,
\end{equation}
which implies that 
\begin{align}
    \gamma WW^T &= \bar{h} \tilde{h}^T W^T\\
    &= \bar{h} h^T\\
    &= -\nabla_{h} \ell(h) h^T.
\end{align}
Taking the trace of both sides, we obtain that 
\begin{equation}
    h^T \nabla_{h} \ell(h)  =- {\rm Tr}[\gamma WW^T] \leq 0.
\end{equation}
This shows that the root node is sufficient to ensure the correct sign for the updates.

\clearpage
\section{Experimental Details}\label{app sec: exp details}
This section gives all the details for the numerical simulations in this paper.

\subsection{Metaplasticity}
We consider learning a single neuron for a linear regression task:
\begin{equation}
    h = wx_t,
\end{equation}
where $x_t$ obeys i.i.d. normal distribution for every $t$. A root node computes the gradient:
\begin{equation}
    \bar{h} = h - y.
\end{equation}
The update rule obeys Eq.~\eqref{eq: update rule} with $\gamma=1$, with a learning rate of $0.1$. We initialize this $w=2$ and $v=0$ and run the algorithm for two steps (with two i.i.d. $(x,y)$ pairs), and compute its plasticity curve for different values of $h(x)$ by varying $h$. We run this process 200 times to compute the distribution of these curves. See Figure~\ref{fig:meta-and-micro}.

\subsection{Random Computation Graphs}

Because the theory implies infinitely many ways to construct a learning circuit that simulates gradient descent. It is unclear how one could make a uniform sampling from all possible circuits to test our theory. We thus regard the SAL architecture (Figure~\ref{fig:circuit examples}) as the ``canonical" architecture and sample random topologies that are either microscopic or macroscopic variations of this architecture.

We first vary the low-level connectivity structures and then the global connectivity structures. Unless stated otherwise, the experiments for this section and the next section are all done on the CIFAR-10 dataset; training proceeds for 100 epochs with a learning rate of $0.005$ and a weight decay with strength of $0.001$ for all plastic weights.

\subsection{Low-Level Connectivity Structures}

See Figure~\ref{fig:meta-and-micro}-b1 for an illustration of the entire architecture used in this experiment. Here, the network is defined to be a four-hidden layer feedforward network with sparse connectivity in all plastic weights. At time $t$, the network is in the inference phase, and at a later time $t'$, the network is in the update phase.

The whole architecture looks like a feedforward architecture, where
\begin{equation}
    p^{l+1}(t) = (M^l \odot W)h^l,
\end{equation}
where $W^l$ is the plastic weights of the $l$-th layer, and $M^l$ is a fixed zero-one matrix prescribing the connectivity pattern. Each element of the random mask is generated as a Bernoulli variable with $70\%$ probability being $1$ and fixed throughout training. The activation is chosen to be either gelu, which is differentiable but does not satisfy DC by design, or step-ReLU, which has a zero gradient everywhere.

The backward pathway is defined similarly:
\begin{equation}
    \bar{p}^{l+1}(t) = (\bar{M}^l \odot \bar{W}^l)\bar{h}^l,
\end{equation}
where $\bar{W}$ is the plastic weight and $\bar{W}$ is a random mask fixed at initialization. We let both $h^l$ and $\bar{h}^l$ to be in $\mathbb{R}^{2000}$, except for the input and output layer, whose dimensions match those of the data. Thus, each layer consists of $2000$ nodes, which, in expectation, are connected to $1400$ nodes in the next layer.

The cross-connections are also defined similarly:
\begin{equation}
    {p}^{l+1}(t') = (\bar{M}^l_V \odot \bar{V}^l)\bar{h}^l,
\end{equation}
where $\bar{V}^l$ is the plastic weight and $\bar{M}_V^l$ is a random mask fixed at initialization. The cross-connection from $h$ to $\bar{p}$ is also similarly defined. As in \cite{liao2024self}, during training, we clip the update norms at $5$ to prevent numerical instability; we also use a momentum $0.9$ on the updates to speed up training. The masking trick suggested by \cite{liao2024self} is also used to improve the performance. We run this experiment with independently sampled connectivity matrices $M$ for $100$ times to make the boxplots presented in this work.

\clearpage

\subsection{High-Level Connectivity Structures}\label{app sec: macroscopic}

\subsubsection{ReLU}
We also experiment with varying degrees of high level connectivity. Here, all connections between nodes are dense and fully connected. We first train a ReLU network with four hidden layers of forward pathway (FP) and four hidden layers of backward pathway (BP). As shown in Figure~\ref{fig:circuit examples}, the whole circuit takes a two-pathway structure. Here, we treat every layer as a node and exhaustively search for all possible cross connections. See Figure~\ref{fig:best worst circuits} for examples of such circuits. For a four hidden layer network, there are a total of $6! \times 6! = 518400$ possible graphs, which is too large a space for our limited computation power. We thus fix either the forward-to-backward connections or the backward-to-forward connections and search for the rest. This is a large enough space that includes many interesting topologies. For example, the feedback alignment and direct feedback alignment are both special cases of all edge sets we search over.

\paragraph{Backward-to-Forward Connectivity} Every node is connected to a set of other nodes with an all-to-all connection between the neurons in these nodes, and we ensure that every node receives at least one input node from the other pathway. For the first experiment, we fix all the forward-to-backward edges to be the same as that of the SAL architecture and search for all possible edge sets from the backward path to the forward pathway. For every edge set, we train the model according to the rule in Eq.\eqref{eq: update rule}. See Figure~\ref{fig:cifar10 btf relu} for the results. The zeroth node refers to the root node itself. Also, we label the nodes of the FP according to their closeness to the input layer, and we label the nodes of the BP according to their closeness to the root note.

Two observations are quite salient:
\begin{enumerate}
    \item To leading order, the performance of the majority of the models is quite close, in agreement with the expectation that gradient computation is quite universal and easy to achieve;
    \item To second order, some tendencies of connectivity structures are preferred and lead to better performances; layers prefer to be roughly aligned, as the layers closer to the output prefer being connected to the layers closer to the error signal.
\end{enumerate}
For the second point, while the best performing connectivity is something one intuitively expects, it does not take any specific form that prior works have proposed. For example, the best performance is achieved when (1) the third layer of FP is connected to the second layer of the BP, (2) the second layer of FP is connected to the second layer of the BP, and (3) the third layer of FP is connected to the third layer of the BP. See Figure~\ref{fig:best worst circuits}.

As a comparison for the need to train the backward pathway, we also run the same experiments where the weights of the backward pathway is nonplastic. See Figure~\ref{fig:cifar10 random}. We see a consistent decrease in the performance of the model by $1-2\%$, which demonstrates the advantage of full plasticity of the circuit. Interestingly, when the backward weights are nonplastic, the most preferable connectivities are those that only come from the last node of the backward pathway.

\paragraph{Forward-to-Backward Connectivity} Similarly, we fix the connectivity of the backward-to-forward connections and vary the edge set of the forward-to-backward connection. See Figure~\ref{fig:cifar10 ftb}. We see that the effect is much smaller in comparison to the effect of change in the forward activations, but the trend is similar -- the nodes closer to the output prefer being connected to the nodes closer to the root node. The existence of these systematic preferred connectivity patterns suggests the possibility of these connectivity graph being incrementally improved with simple evolutionary or developmental strategies.

\subsubsection{Sign-Activation}
Lastly, we also experiment with the sign activation to illustrate the difference between having DC by design and not. Note that the previous two sections used ReLU as activations, and the DC property is automatically satisfied. Here, for the sign activation, the DC property is no longer satisfied. See Figure~\ref{fig:cifar10 btf sign}. Here, we see that the variation in the model's performance is very large, ranging from completely trivial ($10\%$) to quite well-performing $(\sim 40\%)$. This means that network connectivity can play a significant role in the emergence of DC.

\subsection{Step-ReLU Experiments}
For these experiments, we do not search over different random connectivity graphs. We simply take the SAL architecture and sample different seeds 20 times to compute the performances. The SGD performance is the average over 20 seeds; the standard deviation of the SGD performance is less than $1\%$ and is not shown in the figure because the difference with the HSP algorithm is far larger.

\subsection{Synapse Evolution}\label{app sec: evolution}

\paragraph{Task} The task is learning a simple 2d linear regression:
\begin{equation}
    y = x_1 + 0.2 x_2,
\end{equation}
where $x_1$ and $x_2$ are independent Gaussian variables. During training, we train in an online fashion where only a single data point is seen by the network, which is the most biologically plausible sampling.

\paragraph{Model} Here, the model consist of four neurons $h \in \mathbb{R}^4$ and a trained weight $W \in \mathbb{R}^{4\times 4}$:
\begin{equation}
    h^{t+1}  = (M \odot W) h^\top,
\end{equation}
where $t$ is the step of recurrence, and $\odot$ denotes element-wise product. For every member of the population, $M$ is a fixed zero-one matrix, which denotes the connectivity graph of this network. If a neuron fires more than twice, all its activation will be set to zero -- namely, a neuron fires at most twice. Note that because each neuron can fire more than once, this model can be seen as the simplest type of recurrent network. The density of the $M$ is defined as 
\begin{equation}
    {\rm \delta}(M) = \frac{1}{16}\sum_{ij} \mathbbm{1}_{M_{ij} =1 },
\end{equation}
where $\mathbbm{1}$ is the indicator function.

\paragraph{Firing Rule} The first firing (nonzero activation value) of $h_1$ and $h_2$ are clamped to be $x_1$ and $x_2$, respectively. $h_4$ is the output neuron, and its first firing is regarded as the output of the network. Once $h_4$ fires, $h_3$ fires at the rate $\hat{y}- y$. Namely, $(h_1,h_2)$ are the input nodes. $h_3$ is the root node, and $h_4$ is the output node. We also experimented with the case where $y= x_1$ and $h_2$ is purely auxiliary (meaning that it can be used for intermediate computation but not necessarily), and the result is similar to the currently reported version.

\paragraph{Inner and Outer Loop} The experiment involves running a two-loop procedure. The outer loop evolves the mask matrix $M$ by perturbing the existing ones. The inner loop evolves the weight $W$ by training it with the two-signal algorithm. The details of these two loops are given below.

\paragraph{Evolution}  We initiate $M$ to have $20\%$ density for each member of the first generation. Every generation contains 20 members, each of which is trained for 50 steps with the two-signal algorithm (described below) to obtain the final training loss $\ell$, and the fitness score is defined as $-\ell$.

Three members with the lowest fitness score are selected as the parents of the next generation. The mask $M$ of each of these members is randomly perturbed with the following rule: (a) every element has probability $30\%$ to be perturbed, and (b) the perturbed elements are flipped from zero to one or one to zero with probability $\delta(H) - 0.05$. Namely, it is expected to have a density slightly smaller than the parent. This biases the evolution towards sparser connectivity patterns.

\paragraph{Two-Signal Learning Algorithm} During learning, the weights are updated according to:
\begin{equation}\label{eq: evolution update rule}
    \Delta W_{ij} =  0.05  \times h_i^{\rm last} h_j^{\rm first} - 0.01 W_{ij},
\end{equation}
where $h^{\rm first}$ is the first firing activation of $h$ and $h^{\rm last}$ the last. A crucial advantage of this rule is that it allows for both Hebbian update and heterosynaptic update, and we are thus letting evolution decide which update rule is better.

\paragraph{Hebbian or Heterosynaptic} In the experiment, we also measure the density of Hebbian and Heterosynaptic plasticity. While this is difficult to decide directly, it can be approximated by the following metrics, \textit{Hebbian density}:
\begin{equation}
    \delta_{\rm hebb} = \frac{1}{4}\sum_i \mathbbm{1}_{h_i^{\rm last}  = h_i^{\rm first}}  \times  \mathbbm{1}_{h_i^{\rm last}\neq 0}  \times \mathbbm{1}_{h_i^{\rm first}\neq 0}.
\end{equation}
Namely, it is the number of neurons that has fired at least once and whose last firing is identical to its first firing. According to the update rule in Eq,~\eqref{eq: evolution update rule}, if neuron $j$ fired once, its input neuron $i$ must also have fired at most once. Thus, if both $i$ and $j$ fired exactly once, it must be the case that $i$ caused $j$ to fire, and so the update rule must be Hebbian. Therefore, $\delta_{\rm hebb}$ lower bounds the number of Hebbian updates made in the network.

The \textit{Heterosynaptic density} is similarly defined as
\begin{equation}
    \delta_{\rm hetero} = \frac{1}{4}\sum_i \mathbbm{1}_{h_i^{\rm last}  \neq h_i^{\rm first}}  \times  \mathbbm{1}_{h_i^{\rm last}\neq 0}  \times \mathbbm{1}_{h_i^{\rm first}\neq 0}.
\end{equation}
Again, according to the update rule, if the firings of $i$ are not identical, this neuron must have fired twice, possibly excited by two different neurons. Now, assume that the first firing of this neuron $h_i^{\rm first}$ is excited by $j$ (possibly) in addition to other neurons, then the weight $W_{ij}$ must be updated by the Heterosynaptic rule. Therefore, this metric also lower bounds the number density of the HSP updates in the model.

\paragraph{100-neuron simulation} The model now has 100 neurons, and the connectivity matrix is now a $100$-by-$100$ matrix, having $10^4$ parameters to learn. Neurons 1-4 obey the same rules as input, output and root neurons. Neurons 5-100 are latent neurons that can serve either the purpose of inference or learning.


\clearpage
\section{Additional Experiments and Figures}\label{app sec: exp}


\begin{figure}[t!]
    \centering
    \includegraphics[width=0.3\linewidth]{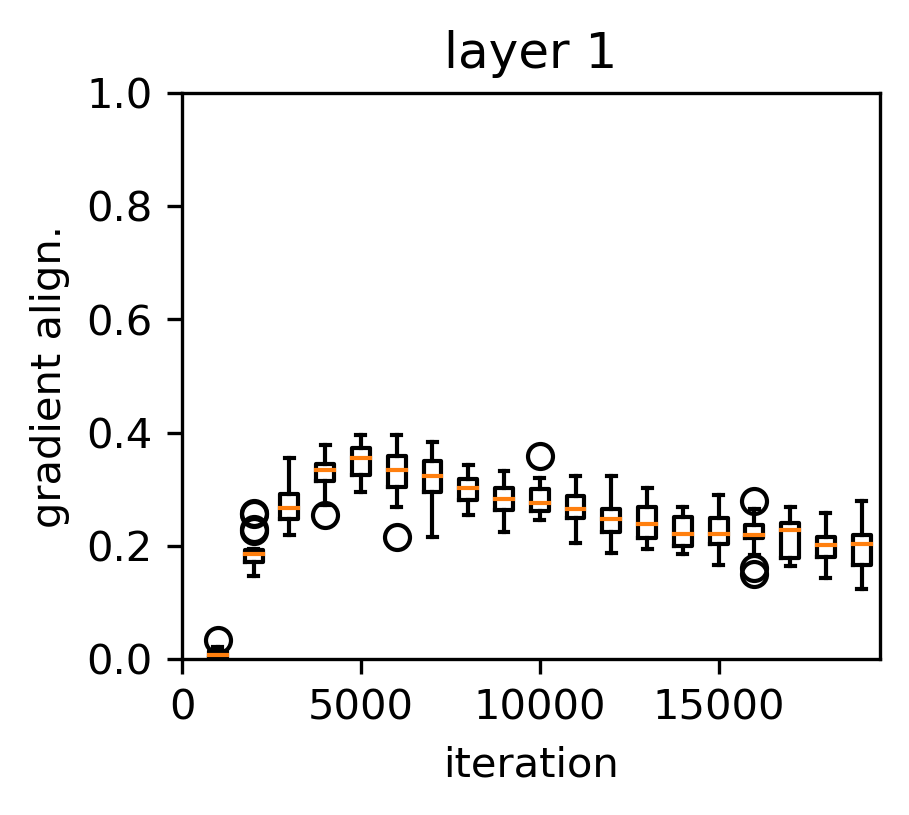}
    \includegraphics[width=0.3\linewidth]{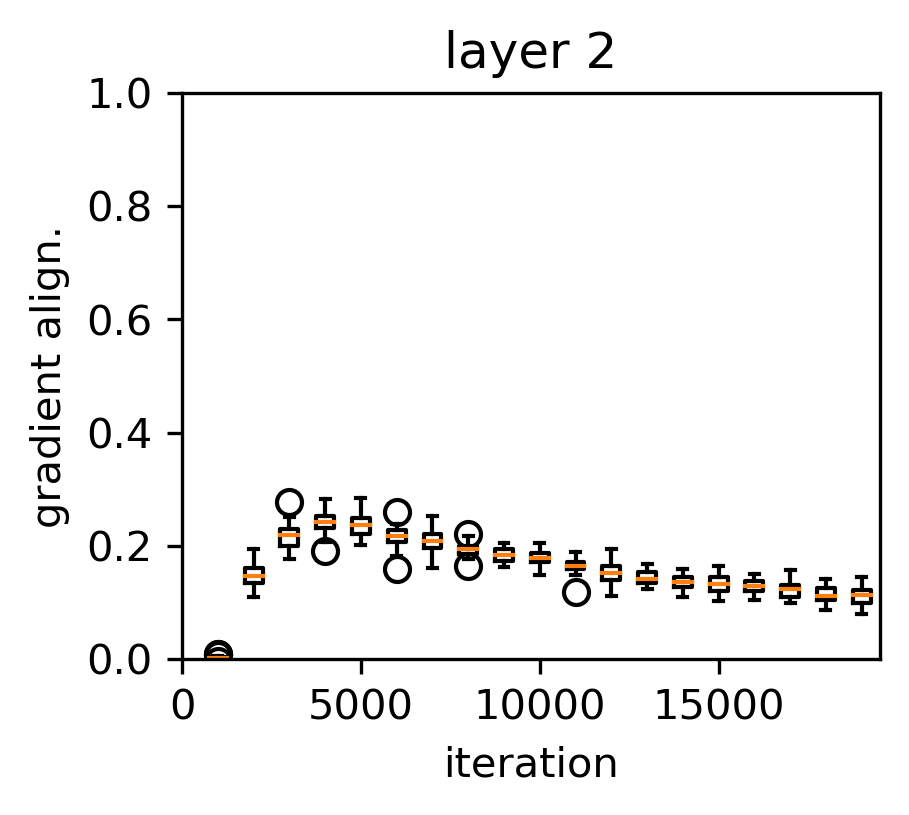}
    \includegraphics[width=0.3\linewidth]{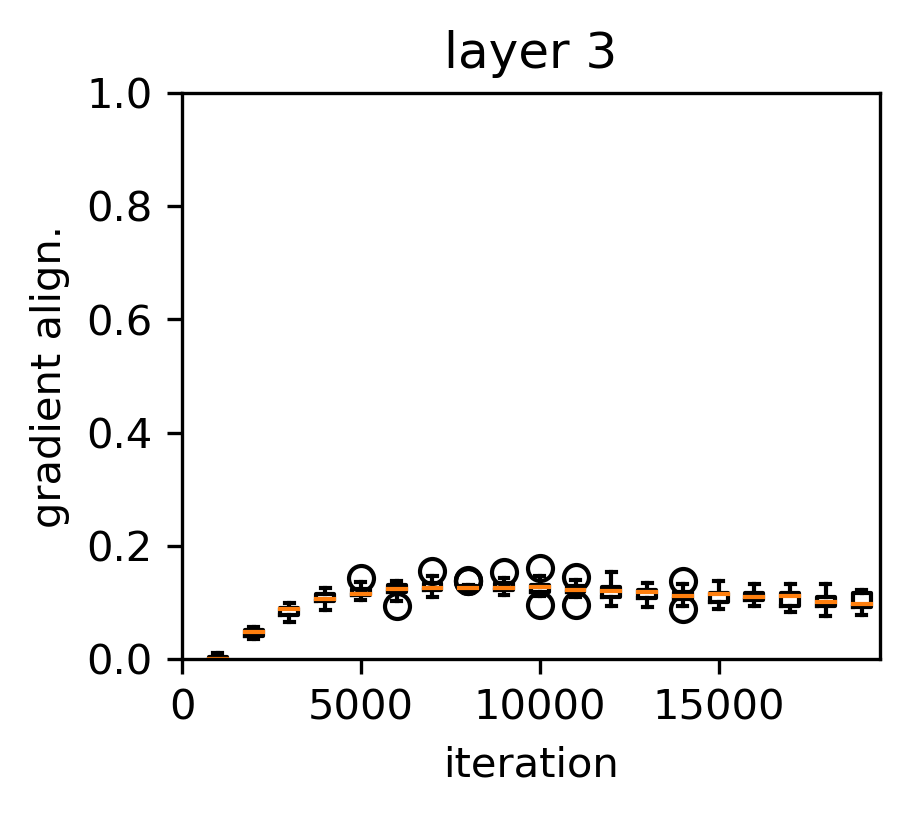}

    \includegraphics[width=0.3\linewidth]{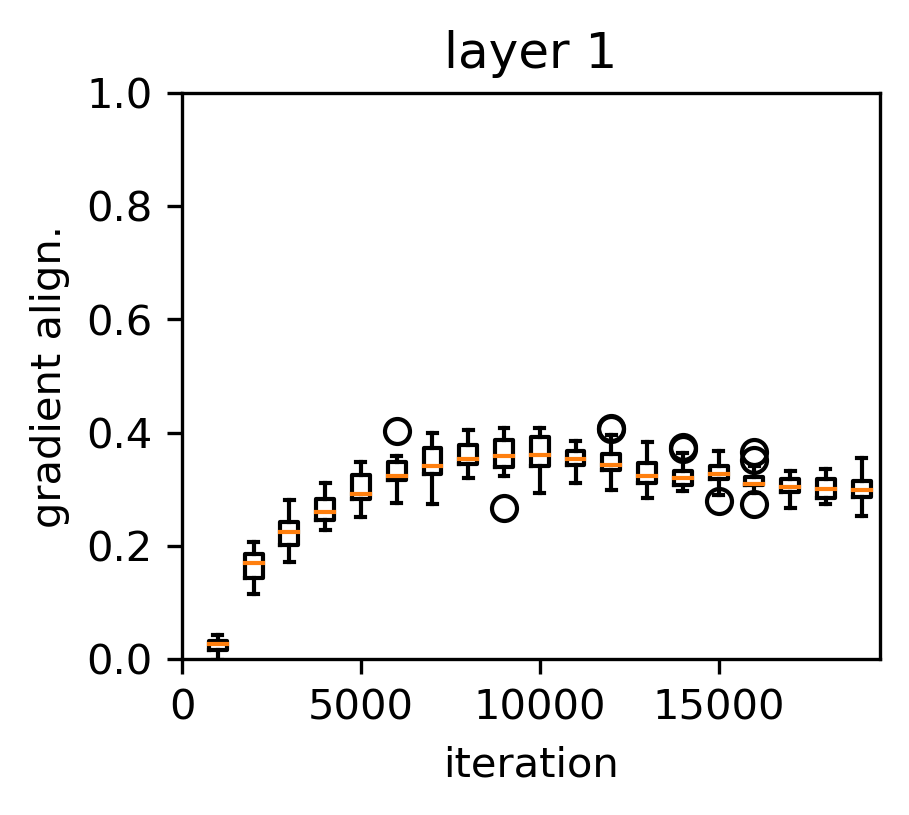}
    \includegraphics[width=0.3\linewidth]{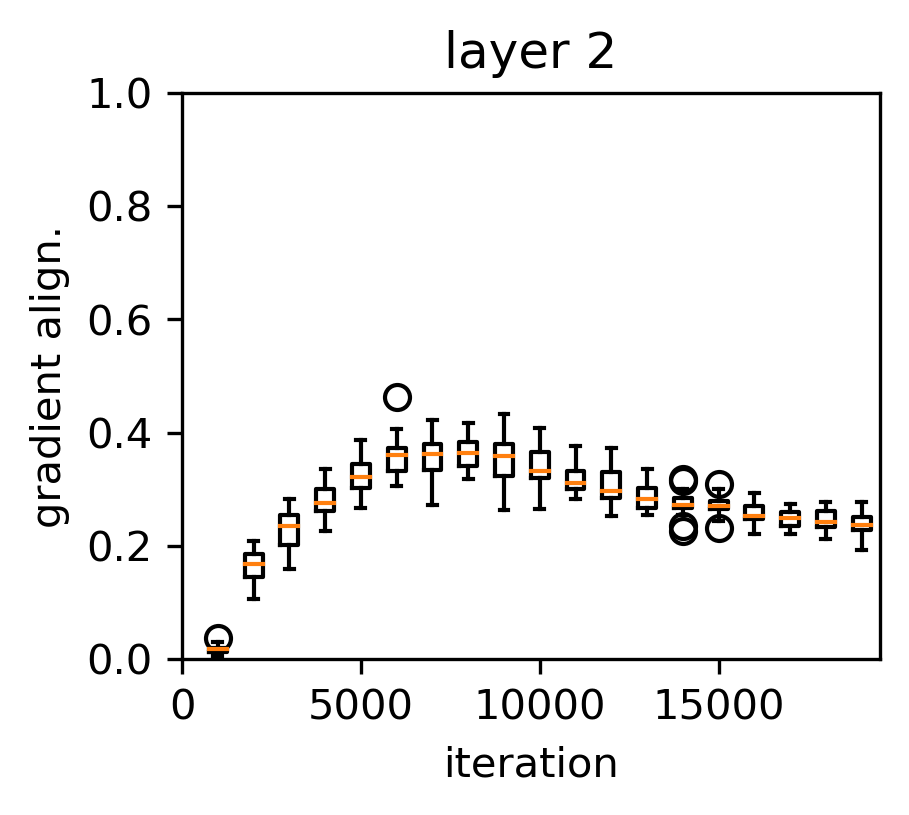}
    \includegraphics[width=0.3\linewidth]{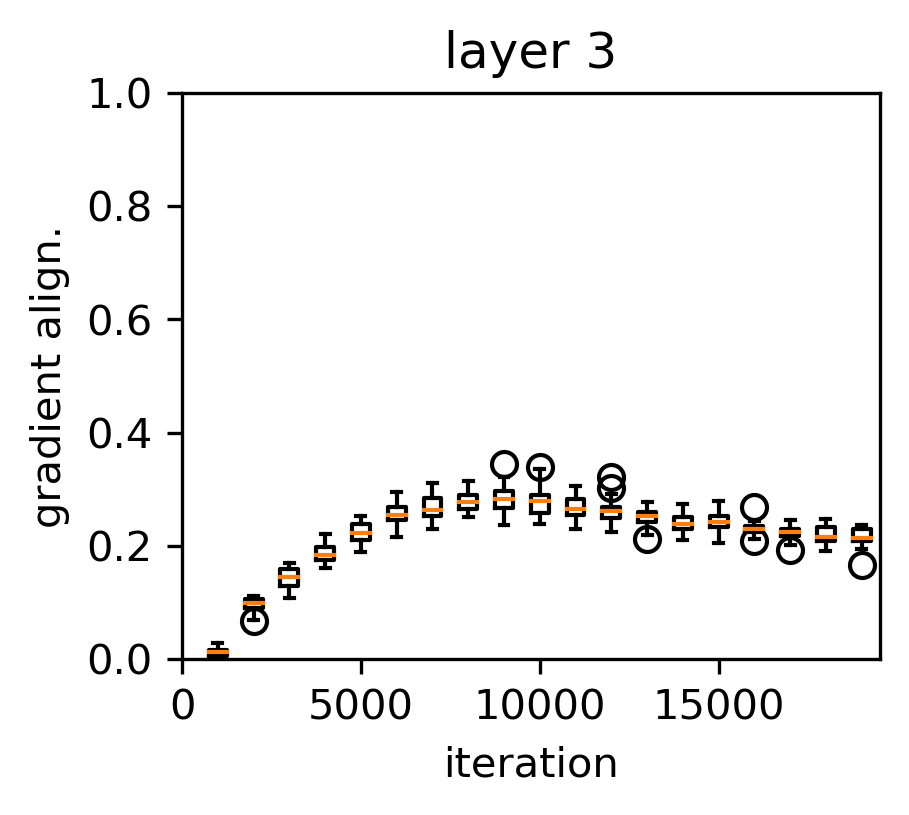}

    \includegraphics[width=0.3\linewidth]{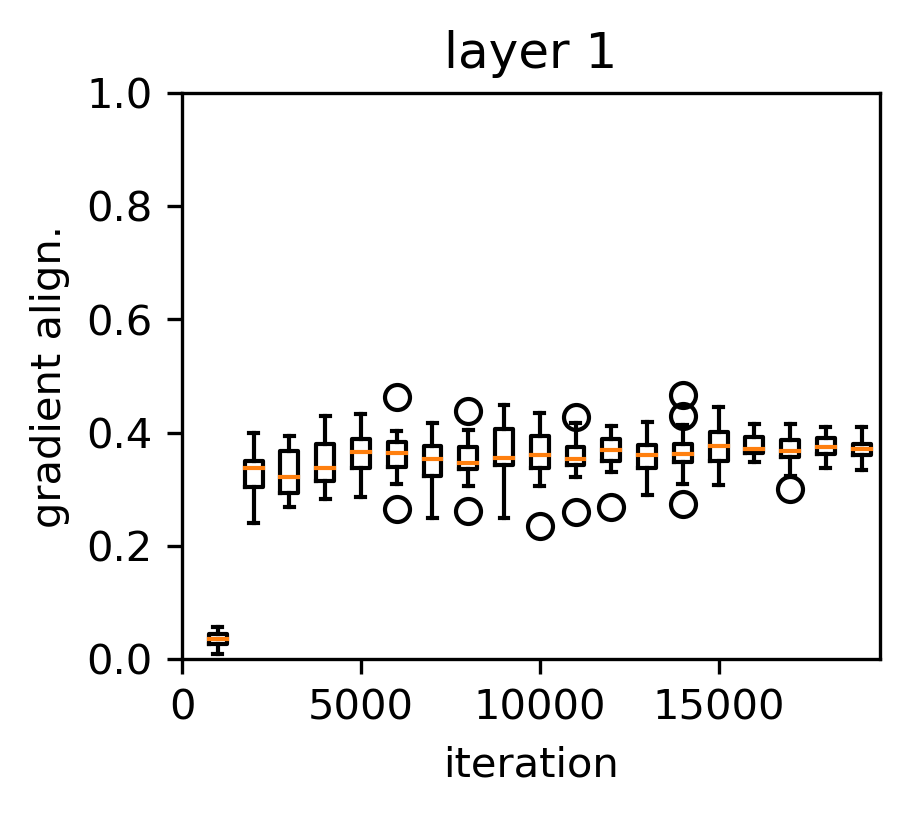}
    \includegraphics[width=0.3\linewidth]{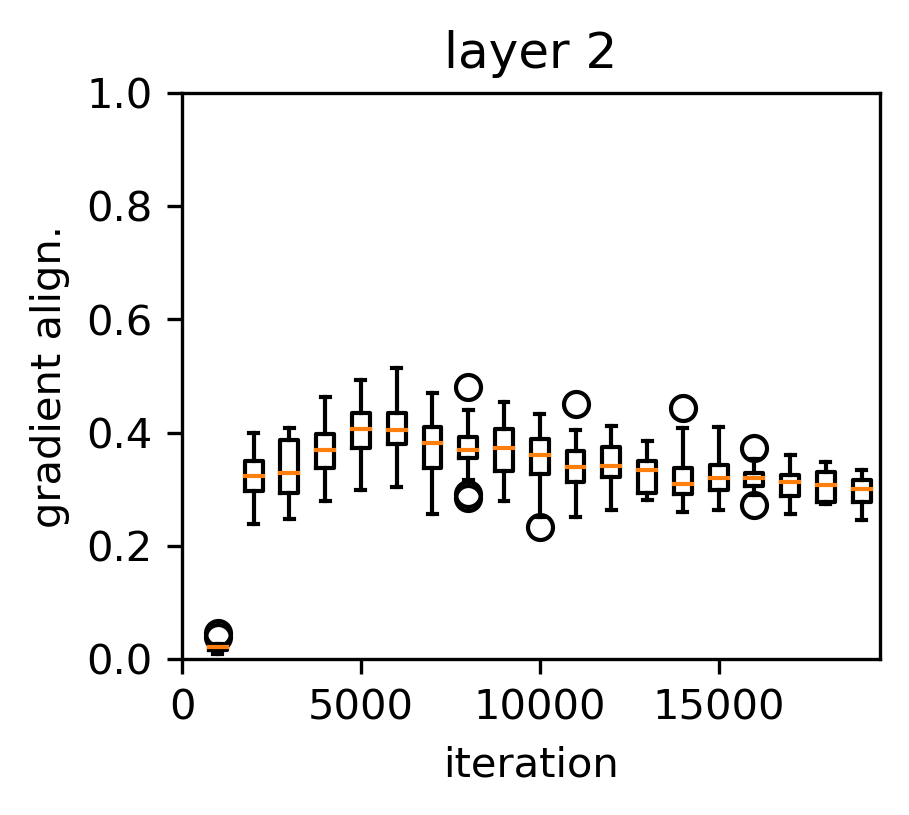}
    \includegraphics[width=0.3\linewidth]{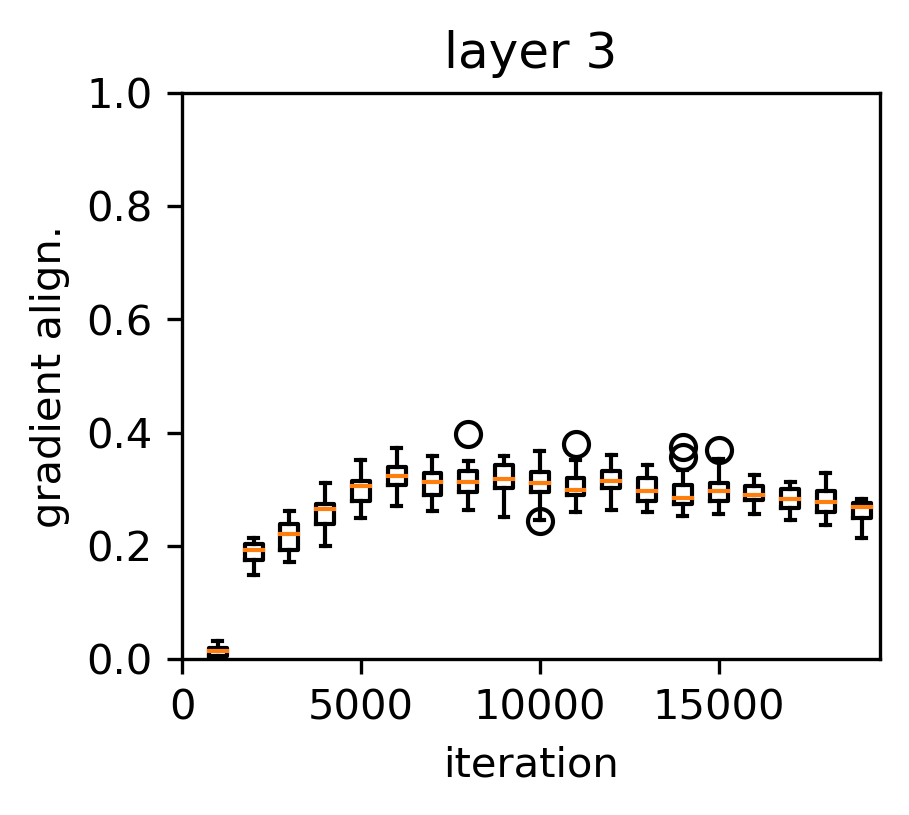}
    \caption{Gradient alignment for different layers of the step-ReLU network. Upper to lower: $Q=1,4,7$. We see that the larger the $Q$ is, the better the alignment with the ReLU becomes.}
    \label{fig:step relu gradient alignment}
\end{figure}
\subsection{Gradient Alignment}
See Figure~\ref{fig:step relu gradient alignment}.

\clearpage

\begin{figure}[t!]
    \centering
    \includegraphics[width=0.23\linewidth]{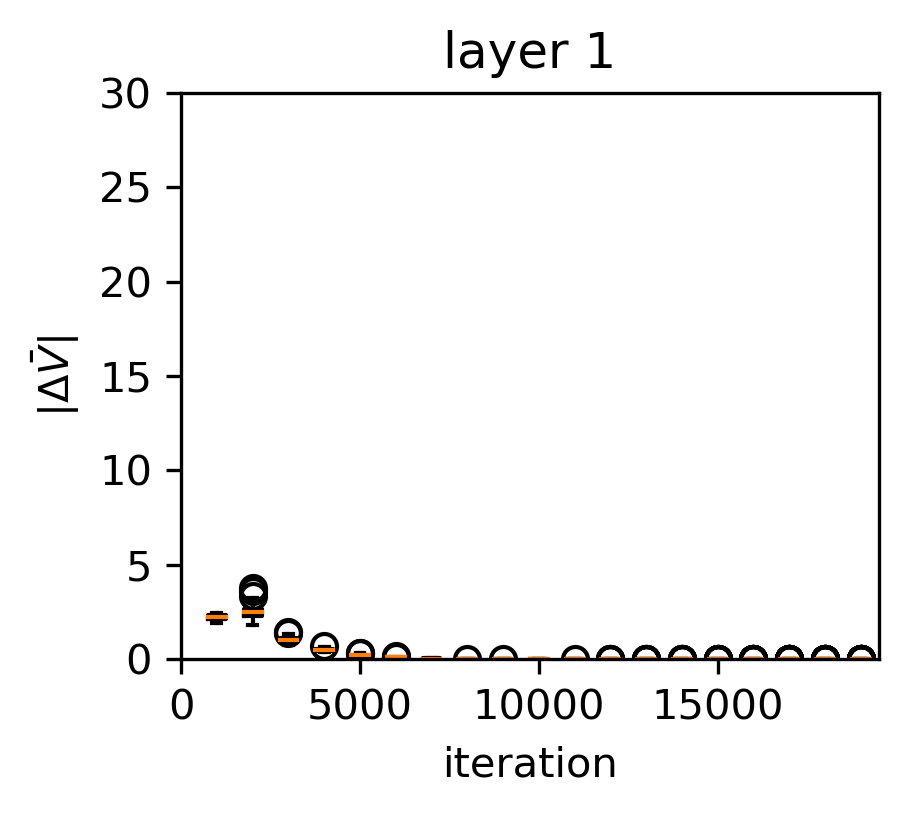}
    \includegraphics[width=0.23\linewidth]{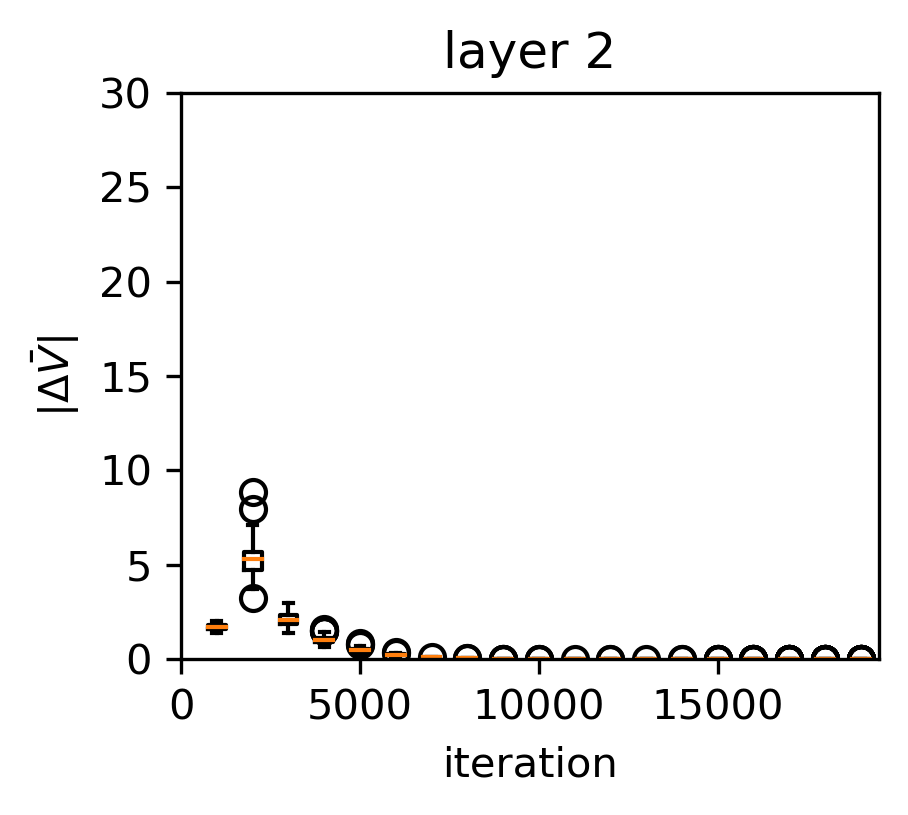}
    \includegraphics[width=0.23\linewidth]{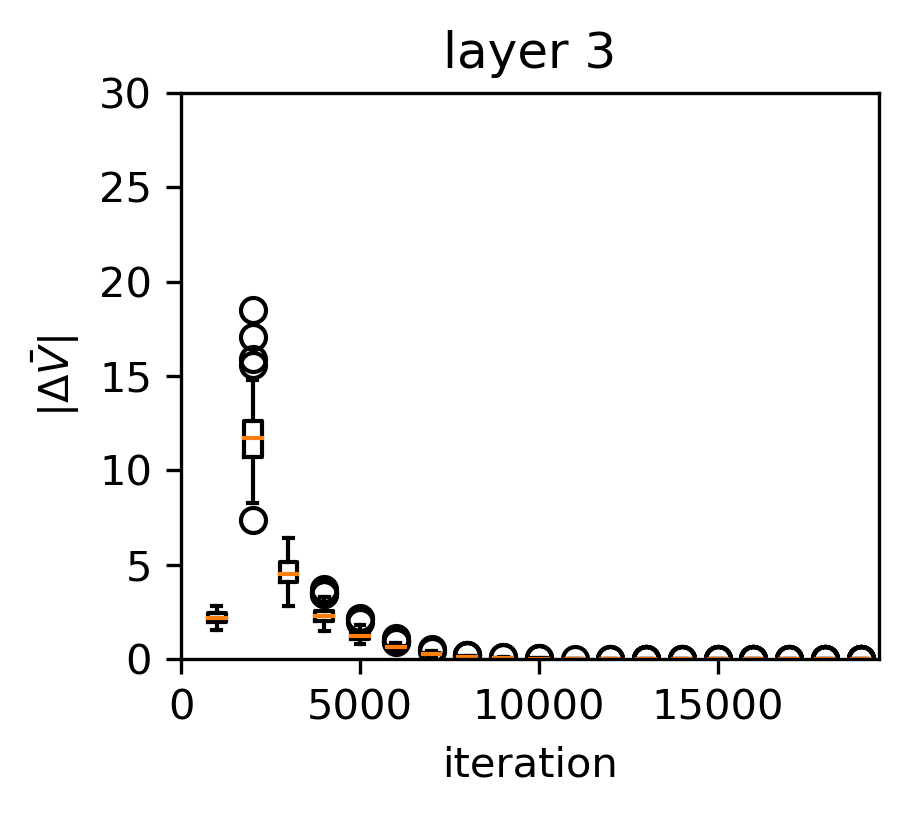}
    \includegraphics[width=0.23\linewidth]{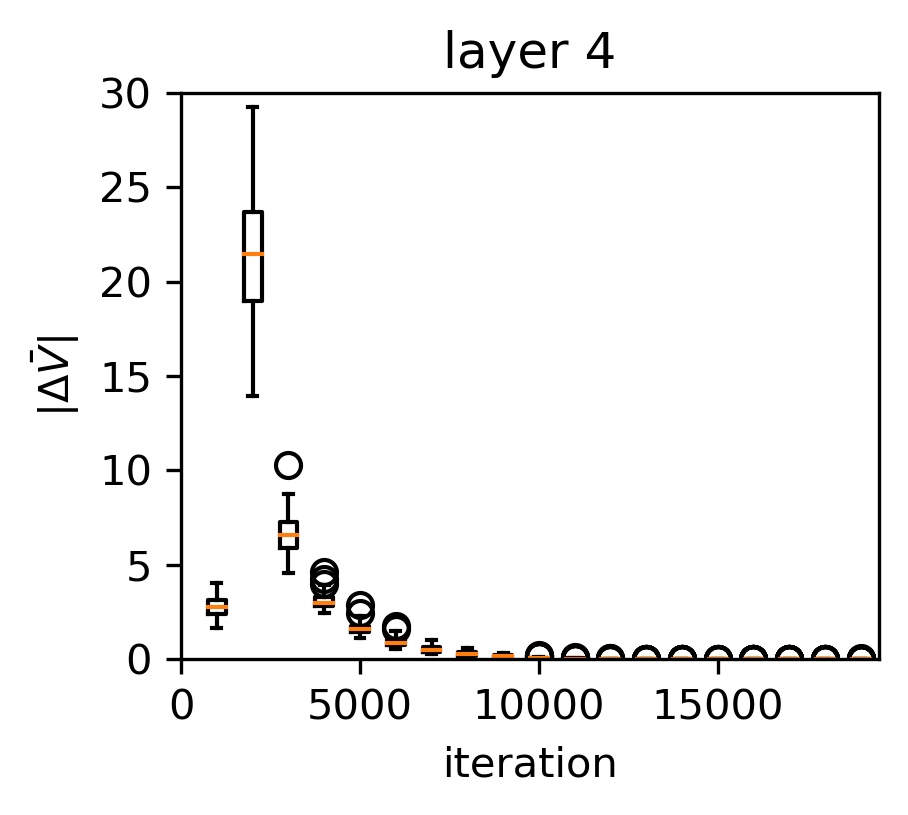}

    \includegraphics[width=0.23\linewidth]{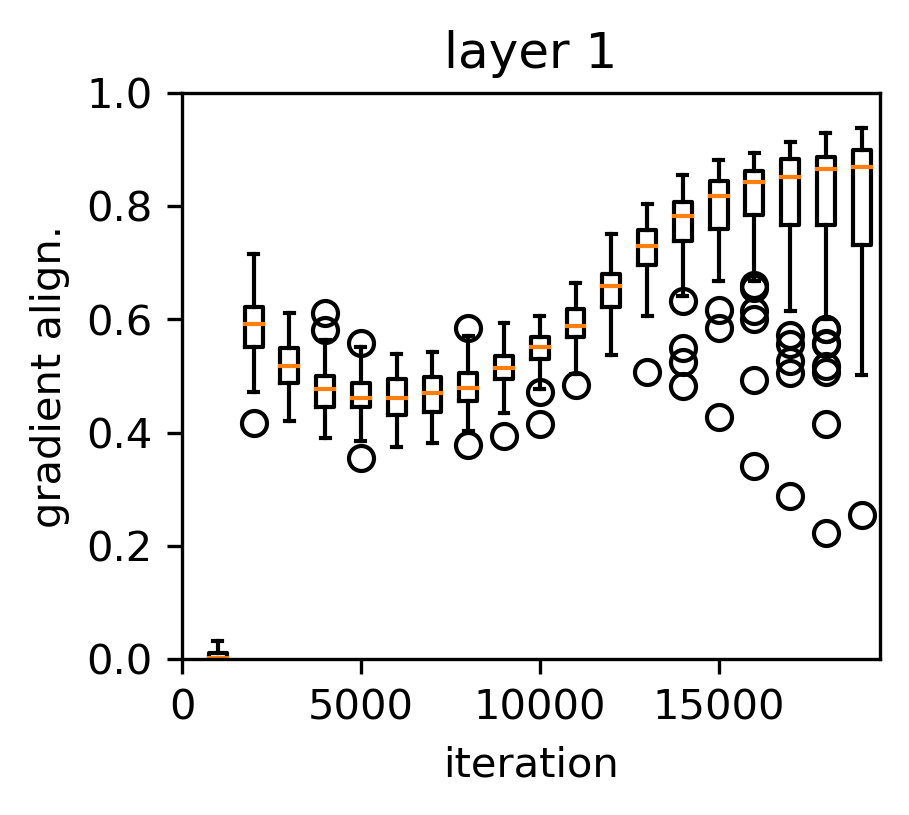}
    \includegraphics[width=0.23\linewidth]{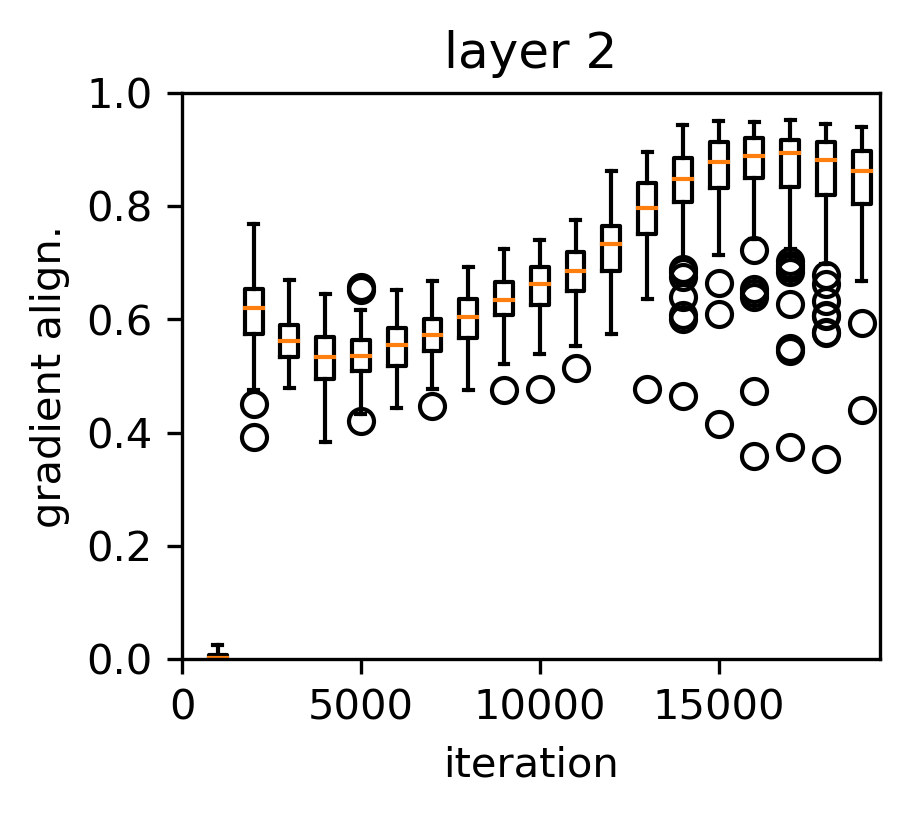}
    \includegraphics[width=0.23\linewidth]{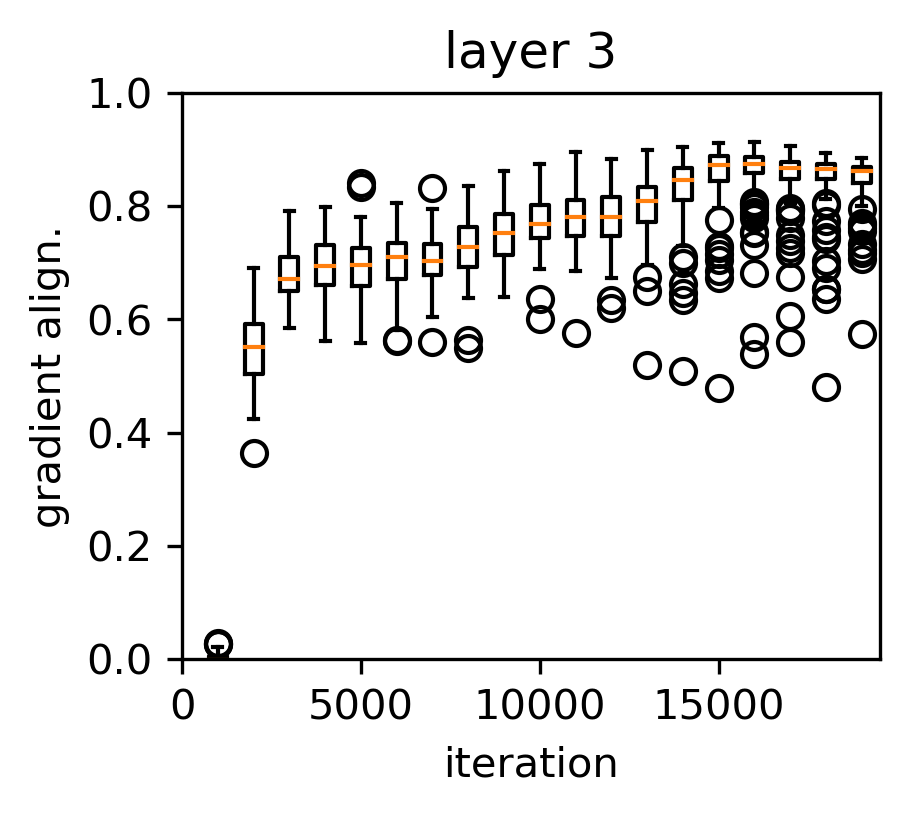}
    \includegraphics[width=0.23\linewidth]{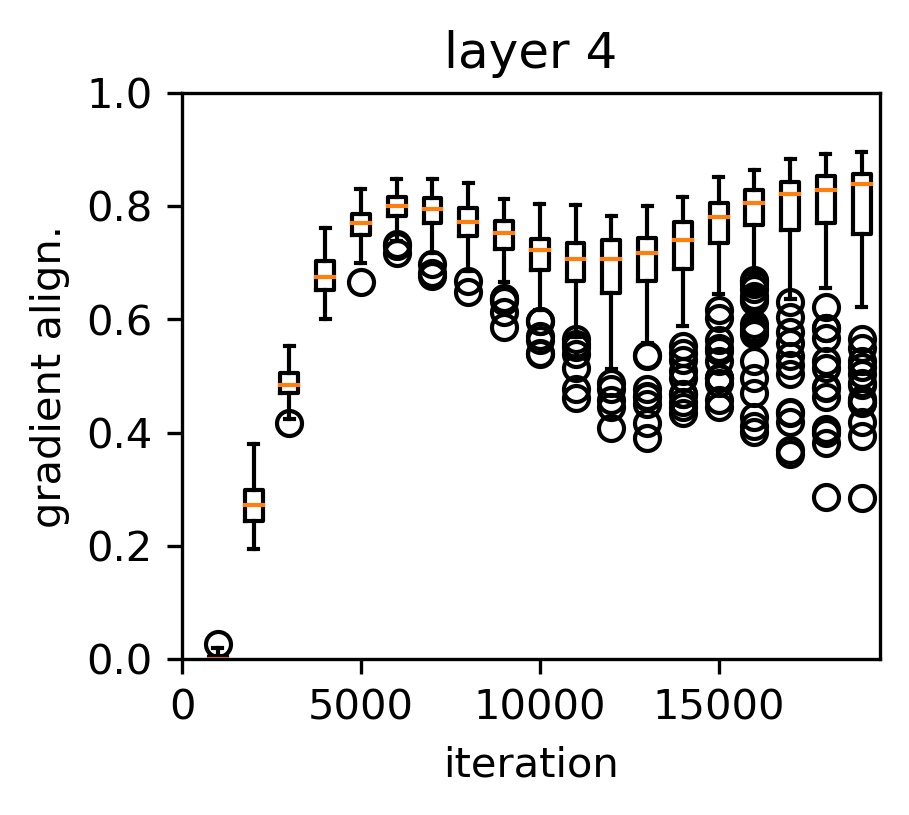}

    \caption{\small We generate a two-pathway architecture with four layers of neurons. Each neuron is treated as a node and is randomly connected to 70\% of the neurons in the next layer. The interconnections between the two pathways are completely random, and each edge is present with 70\% probability. We generate 100 such random neuronal graphs and evolve the network on the CIFAR-10 dataset. Each neuron is set to have a GeLU activation function \cite{ramachandran2017searching}, which does not have a priori dynamical consistency. \textbf{Upper}: As the training happens, the $\bar{V}$ matrix approaches stationarity across all layers and all connectivity graphs. \textbf{Lower}: The alignment of the HSP update to the gradient increases from $0$ to a level close to $O(1)$ during training for all connectivity graphs. }
    \label{fig:V stationarity and alignment}
\end{figure}

\begin{figure}[t!]
    \centering

    \centering
    \includegraphics[width=0.23\linewidth]{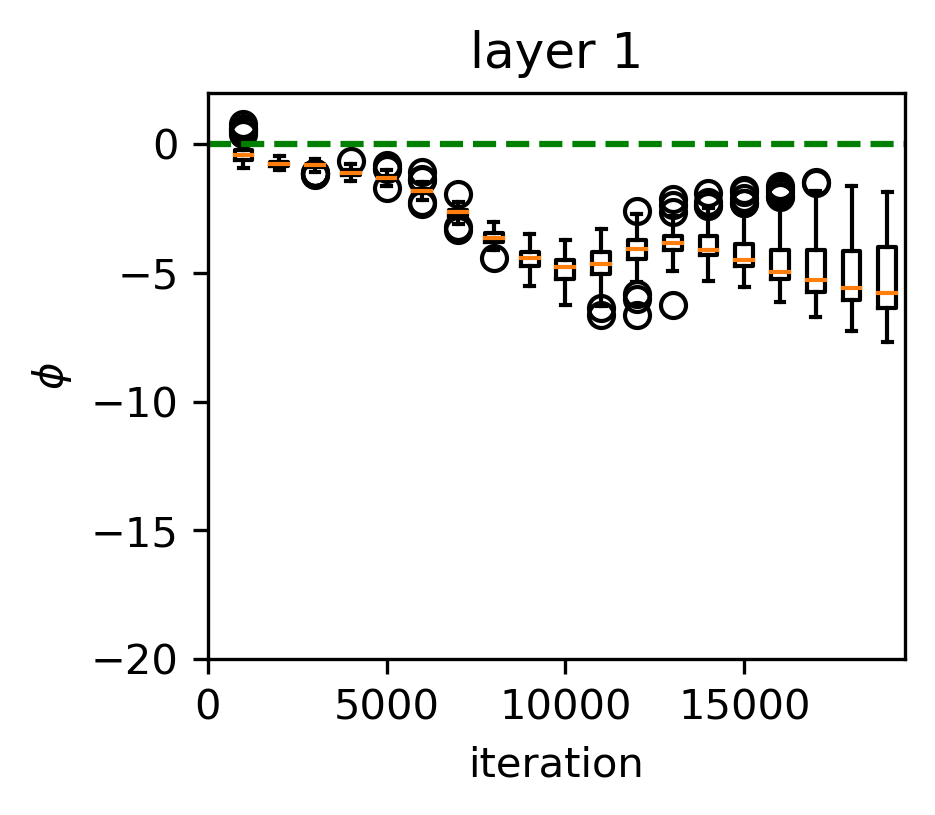}
    \includegraphics[width=0.23\linewidth]{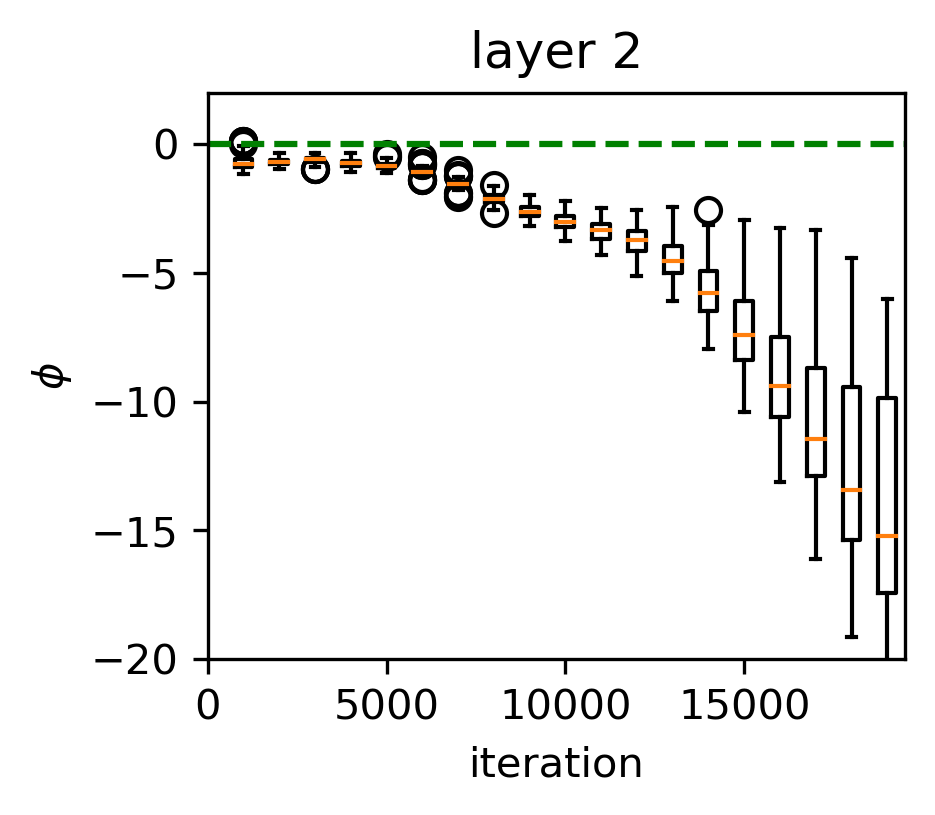}
    \includegraphics[width=0.23\linewidth]{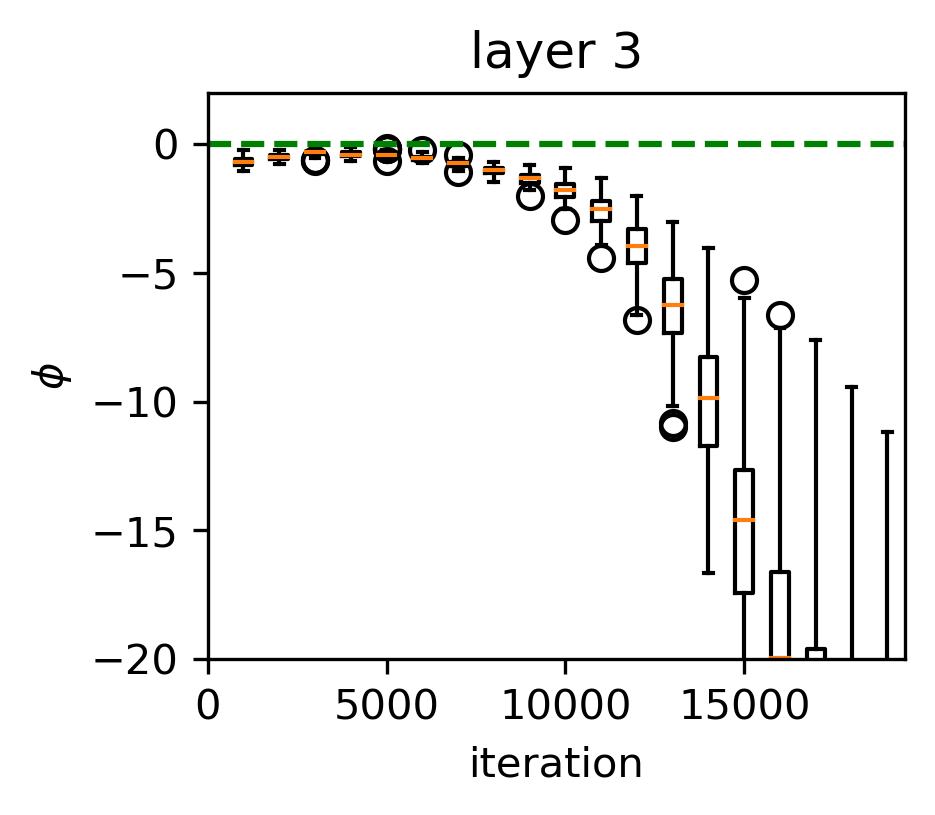}
    \includegraphics[width=0.23\linewidth]{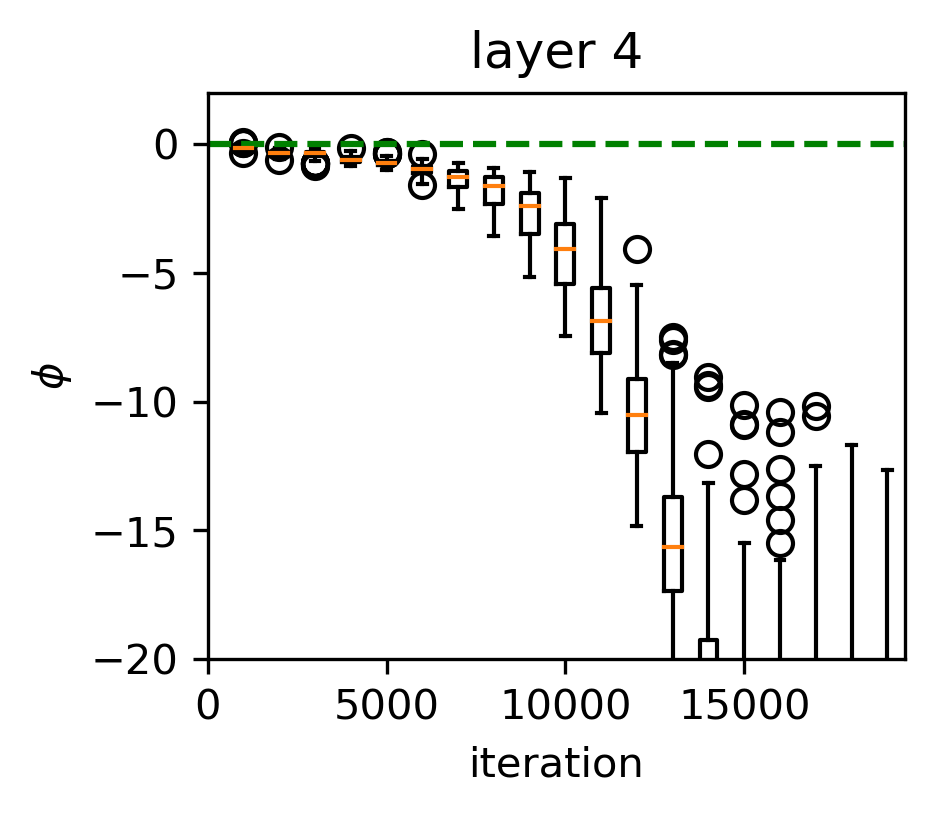}
    \caption{\small At the beginning of training, the layers are not mutually consistent but emerge to become consistent for all architectures. Some layers also start with a positive consistency score but become negative after a few hundred updates.}
    \label{fig:hebbian overlap and consistency}
\end{figure}

\subsection{Random Microscopic Connectivities}
See Figures~\ref{fig:V stationarity and alignment} and \ref{fig:hebbian overlap and consistency}.

\clearpage

\subsection{Hebbian Dynamics}

See Figure~\ref{fig:hebbian overlap zoom in}.

\begin{figure}[t!]
    \centering

    \includegraphics[width=0.23\linewidth]{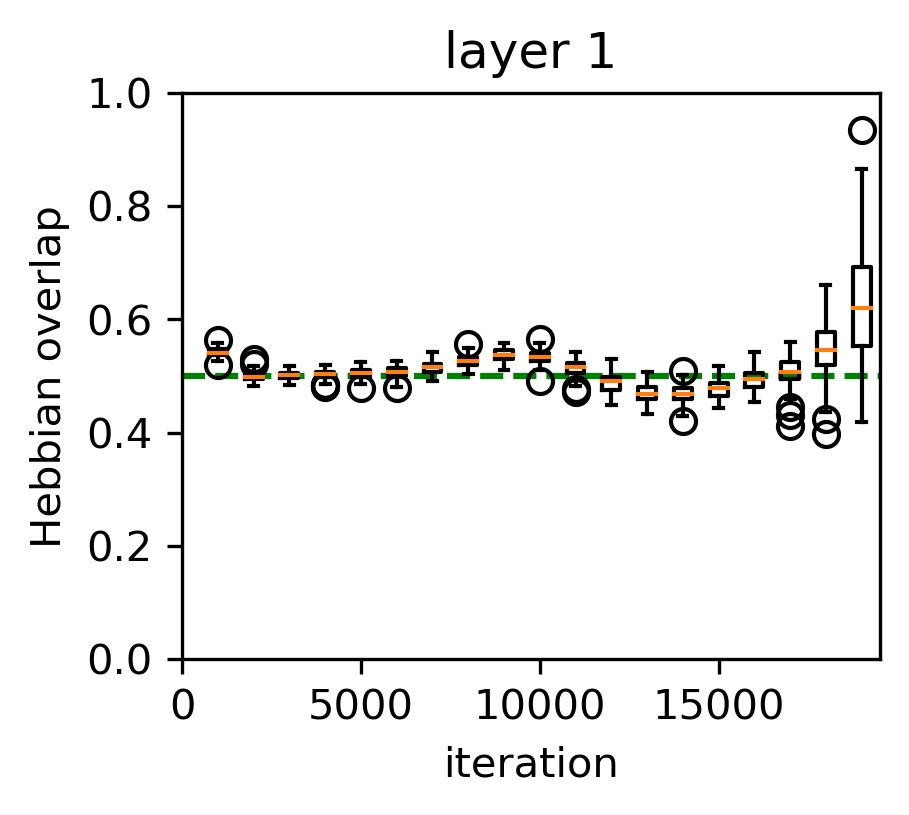}
    \includegraphics[width=0.23\linewidth]{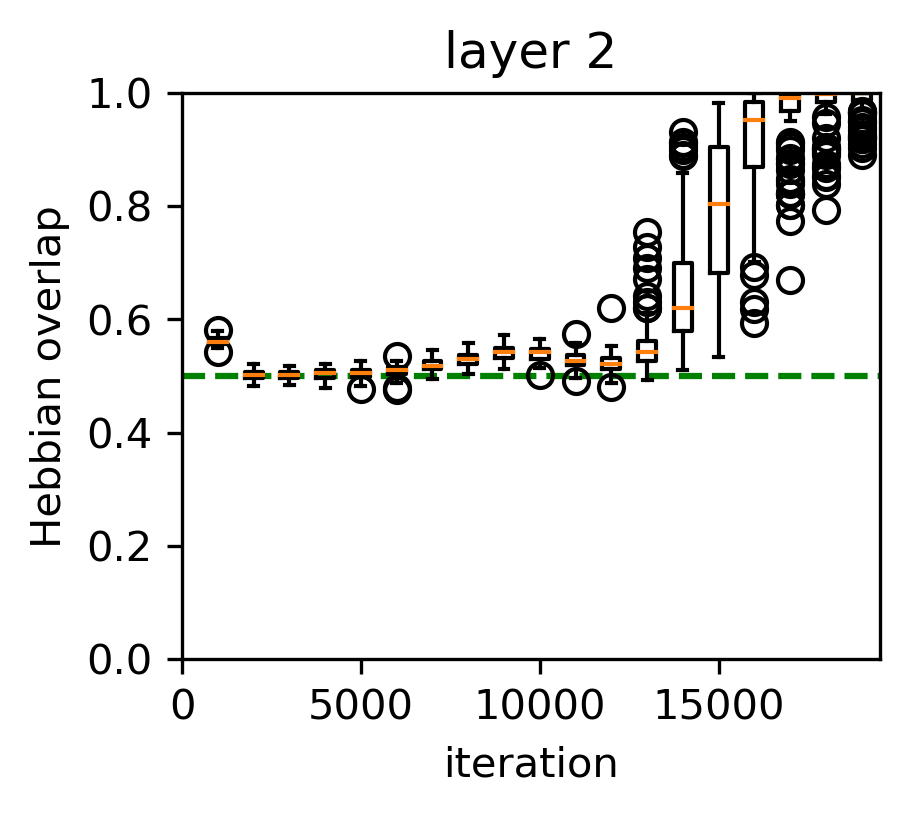}
    \includegraphics[width=0.23\linewidth]{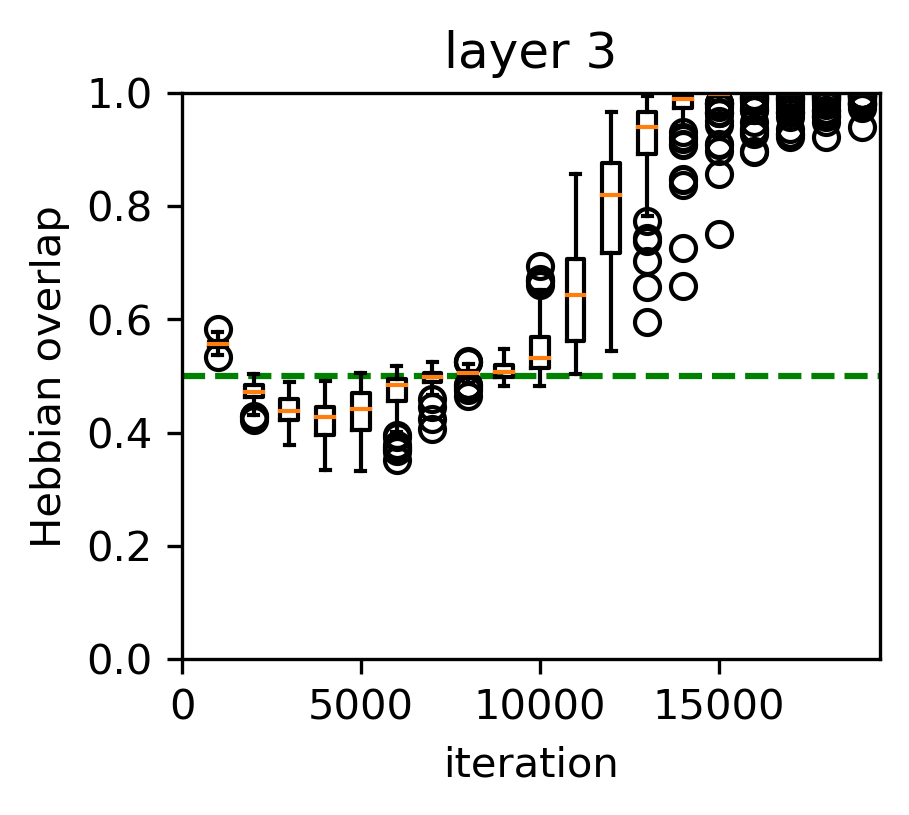}
    \includegraphics[width=0.23\linewidth]{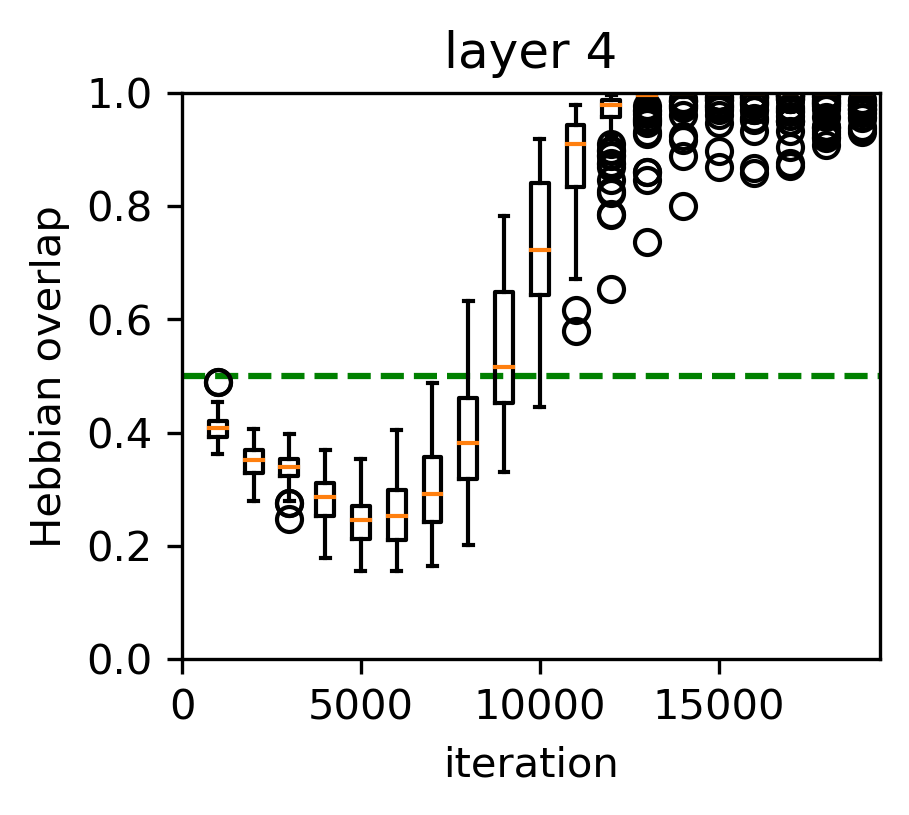}
    
    \includegraphics[width=0.23\linewidth]{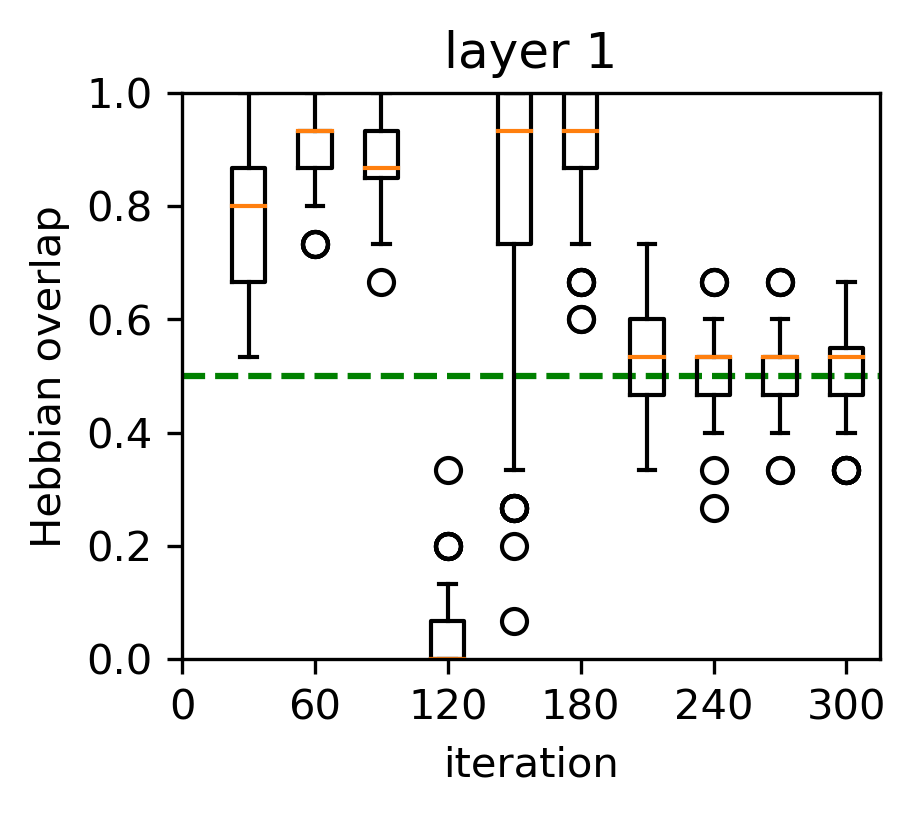}
    \includegraphics[width=0.23\linewidth]{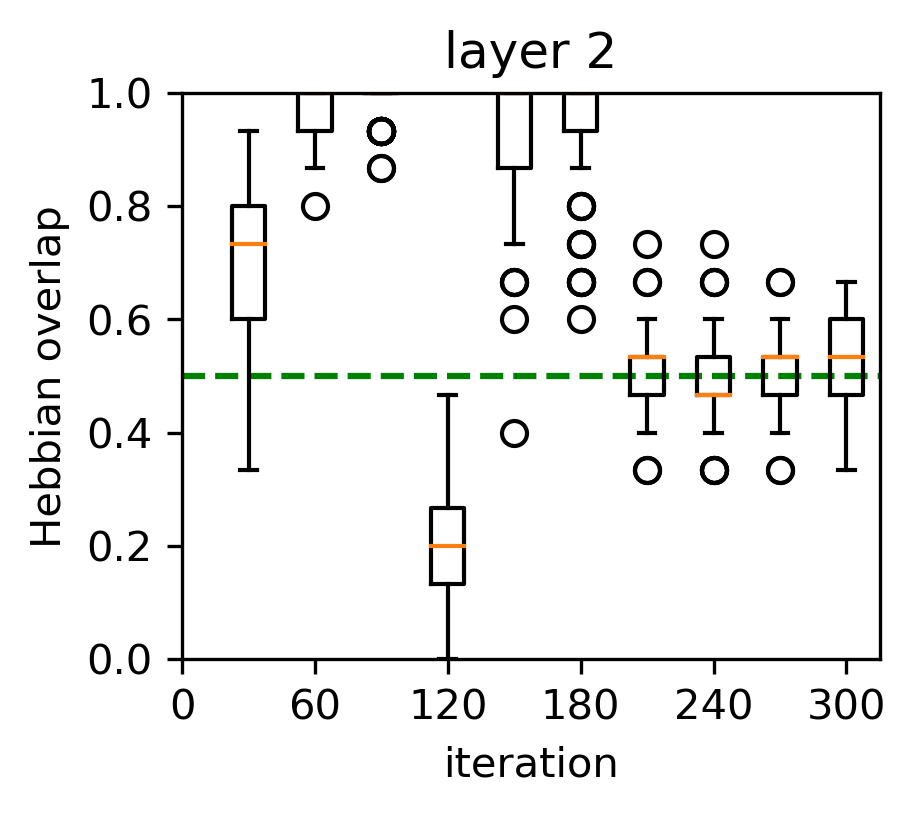}
    \includegraphics[width=0.23\linewidth]{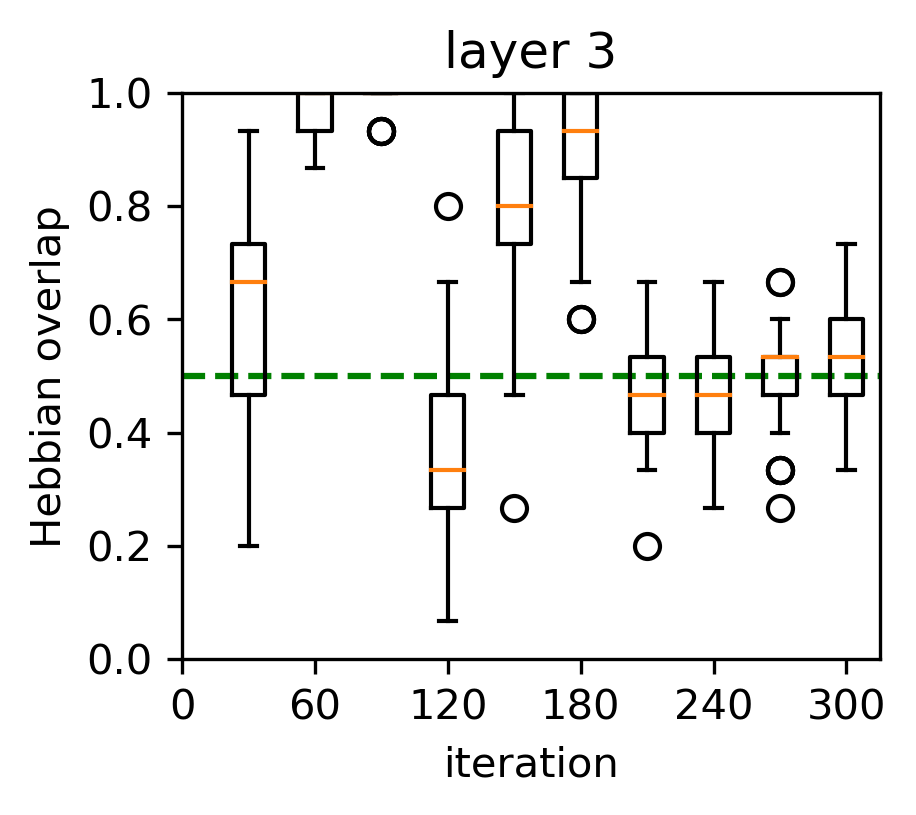}
    \includegraphics[width=0.23\linewidth]{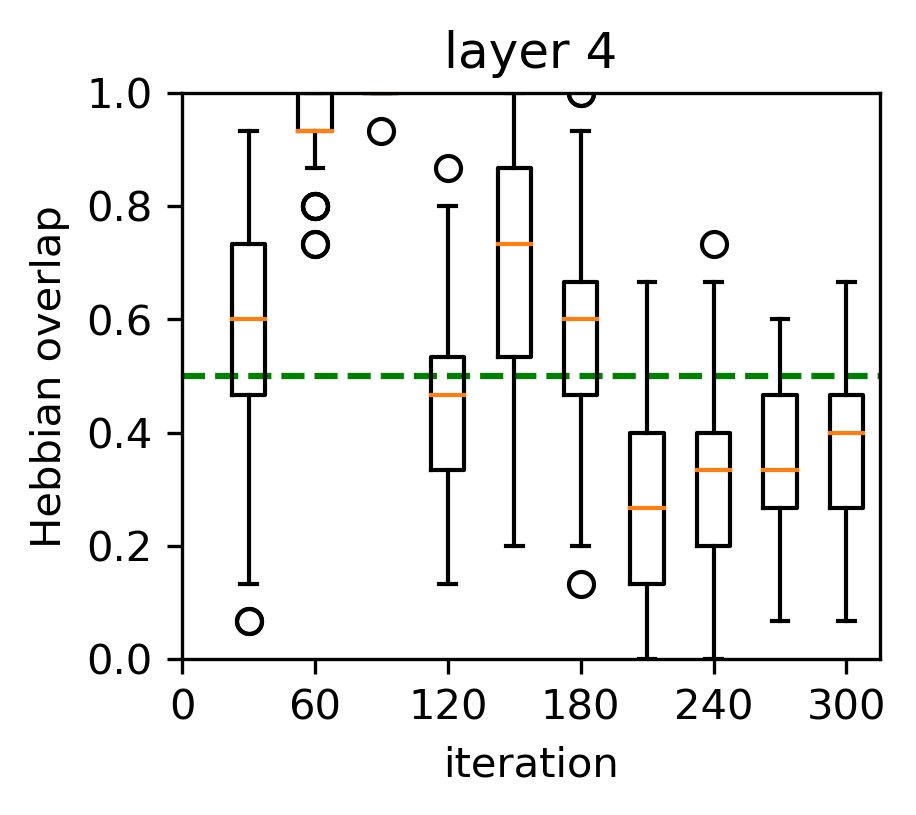}

    \caption{\small Experimental setting same as Figure~\ref{fig:V stationarity and alignment}. \textbf{Upper}: The proportion of the updates positively correlated with the Hebbian rule. All layers evolve to significantly overlap with the Hebbian rule, with later layers having better overlap. \textbf{Lower}: This is a zoom-in of the first 300 iterations.We see that all layers feature an initial transient anti-Hebbian phase, which transitions to Hebbian quickly after for the layer layers. For initial layers, the transition is much slower.}
    \label{fig:hebbian overlap zoom in}
\end{figure}

\clearpage

\clearpage
\subsection{High Level Structures} 

See Figure~\ref{fig:cifar10 btf sign}-\ref{fig:cifar10 random}.

\begin{figure}[t!]
    \centering
    \includegraphics[width=0.22\linewidth]{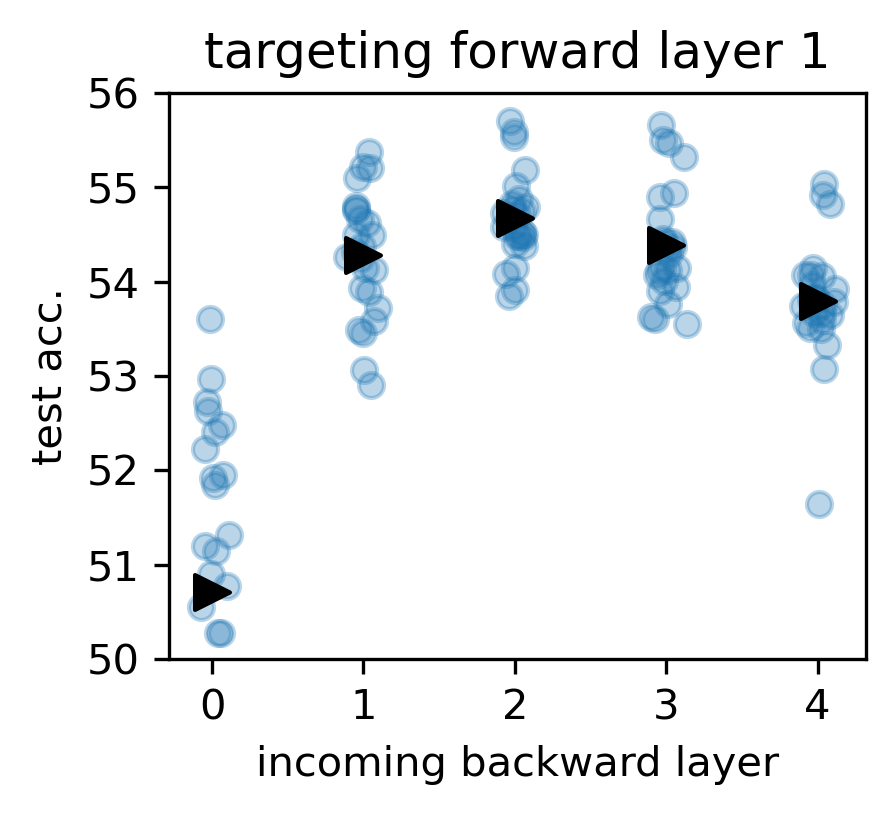}
    \includegraphics[width=0.22\linewidth]{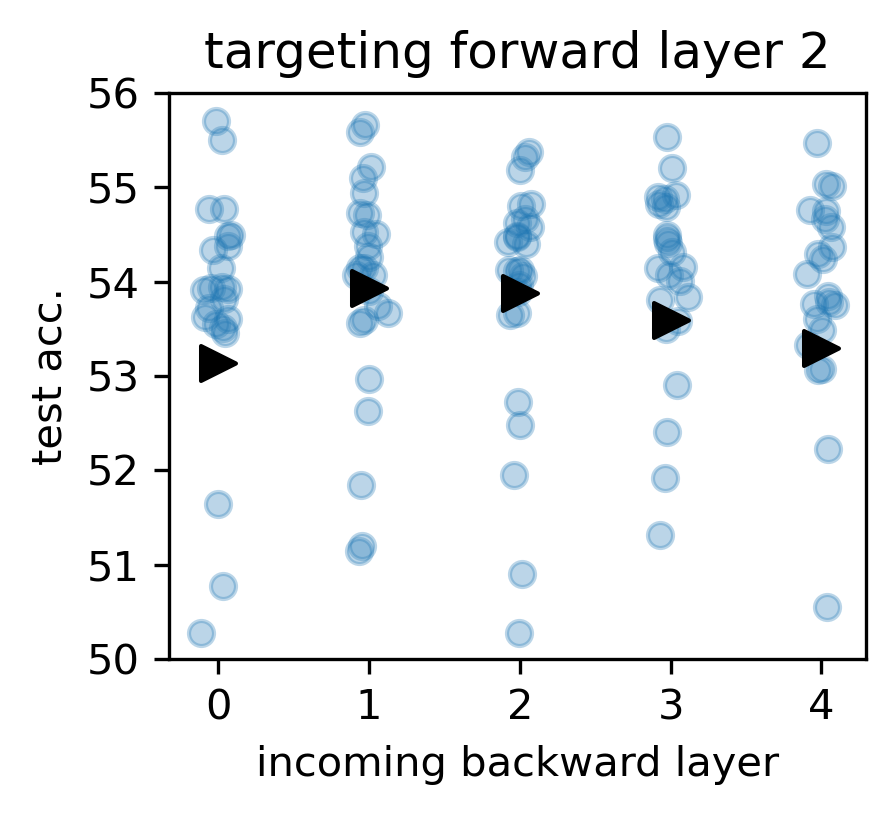}
    \includegraphics[width=0.22\linewidth]{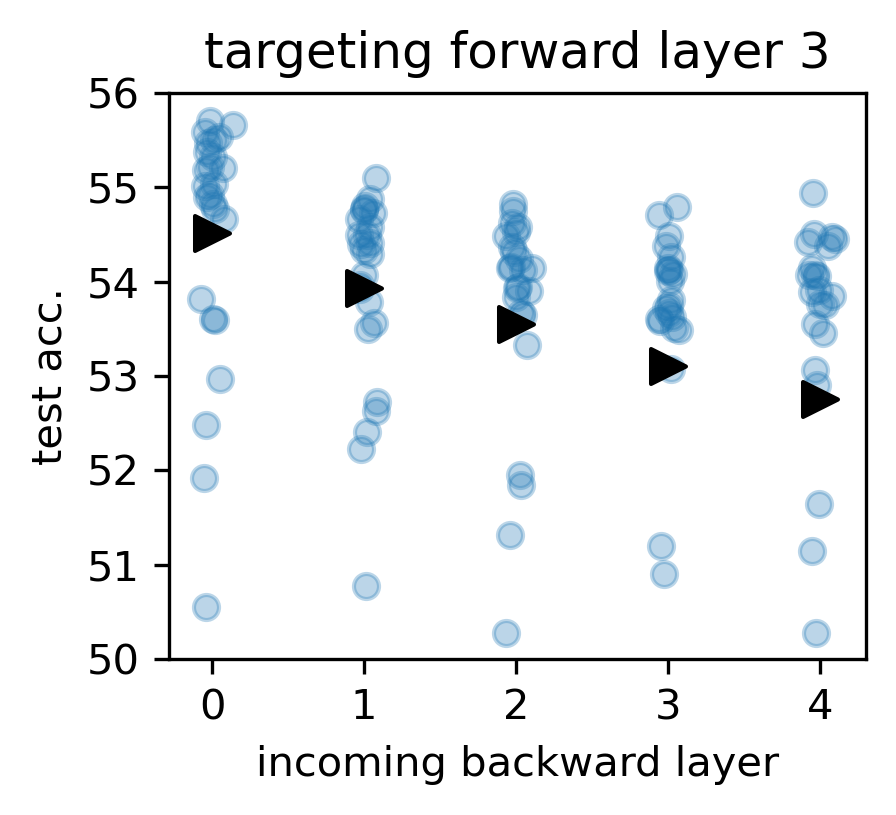}

    \caption{An exhaustive search for \textbf{plastic} backward-to-forward connections between a five-layer forward and backward pathway. The figure shows the performance of the model conditioning on different incoming backward layers to the layer $1$ (\textbf{left}), $2$ (\textbf{middle}) and $3$ (\textbf{right}) of the forward network pathway. 
    }
    \label{fig:cifar10 btf relu}
\end{figure}

\begin{figure}[t!]
    \centering
    \includegraphics[width=0.22\linewidth]{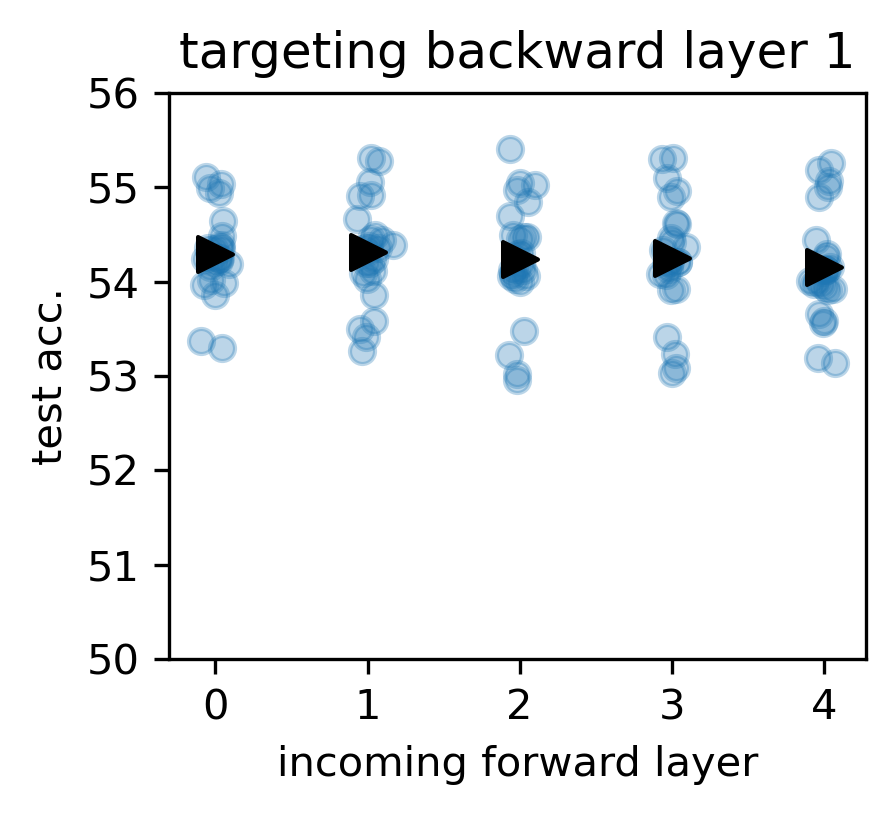}
    \includegraphics[width=0.22\linewidth]{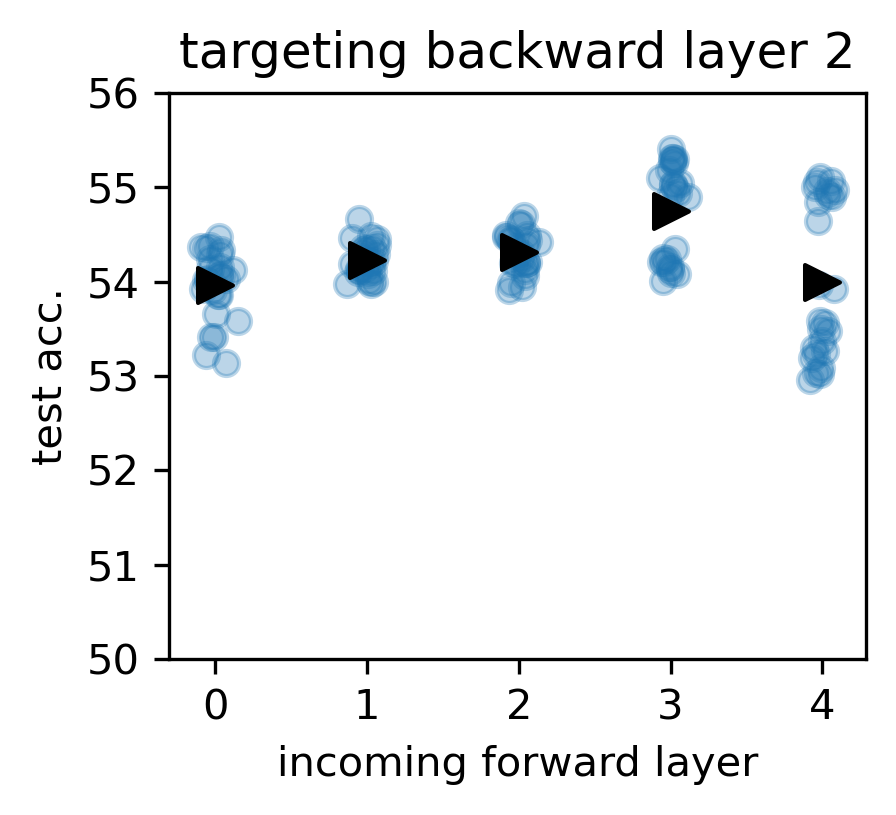}
    \includegraphics[width=0.22\linewidth]{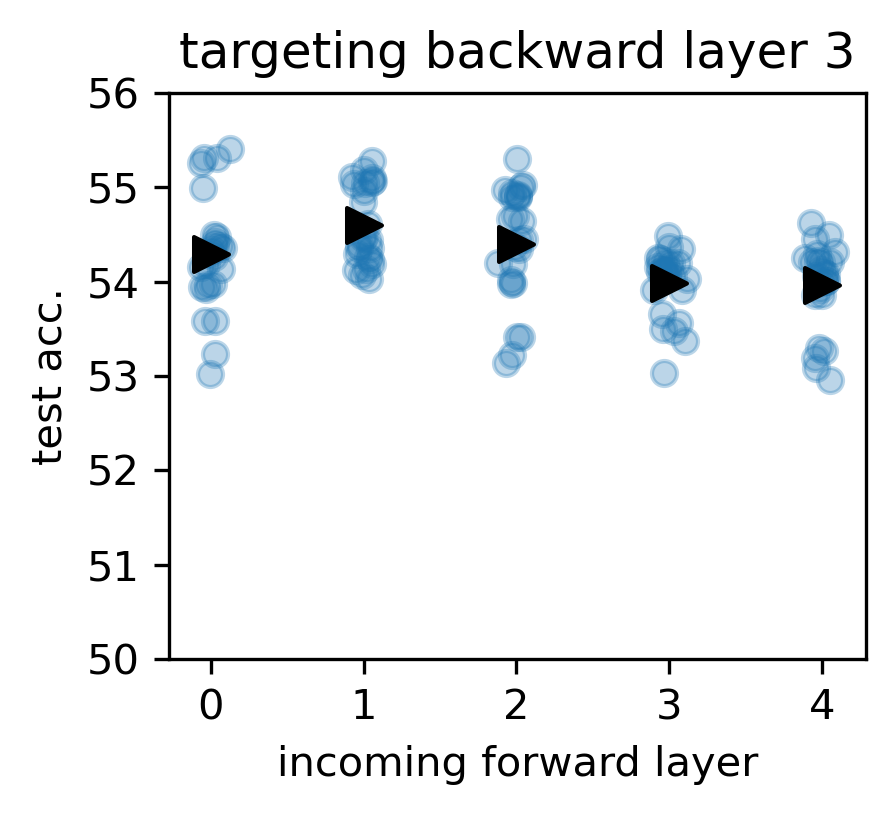}
    
    \caption{An exhaustive search for \textbf{plastic} forward-to-backward connections. The figure shows the performance of the model conditioning on different incoming forward layers to the backward network pathway. 
    }
    \label{fig:cifar10 ftb}
\end{figure}

\begin{figure}[t!]
    \centering
    \includegraphics[width=0.22\linewidth]{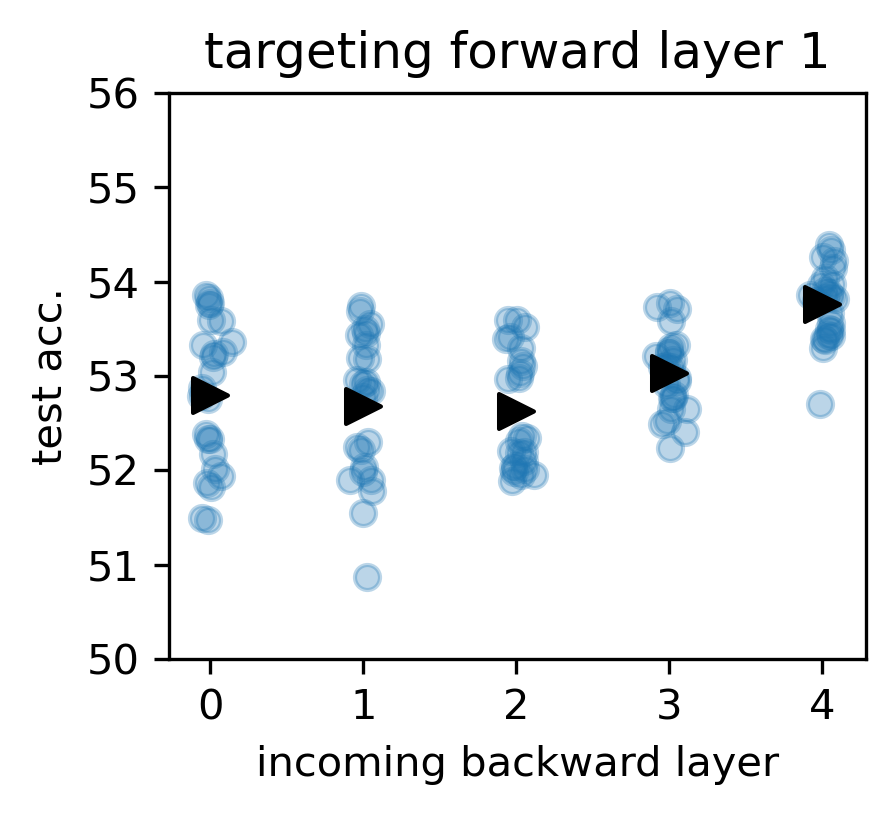}
    \includegraphics[width=0.22\linewidth]{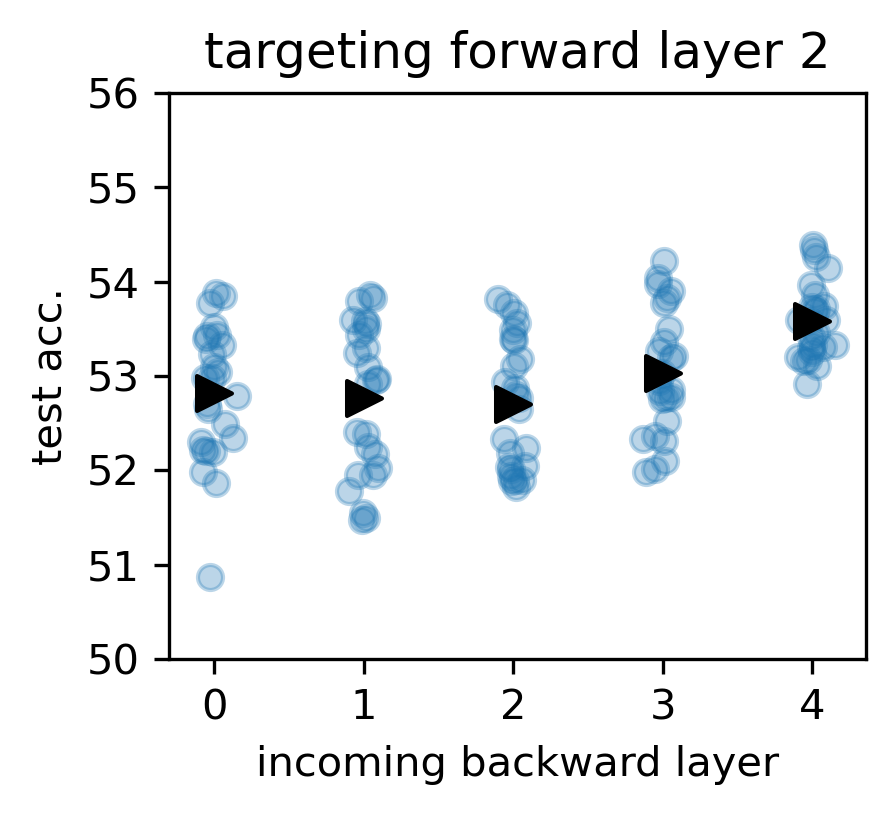}
    \includegraphics[width=0.22\linewidth]{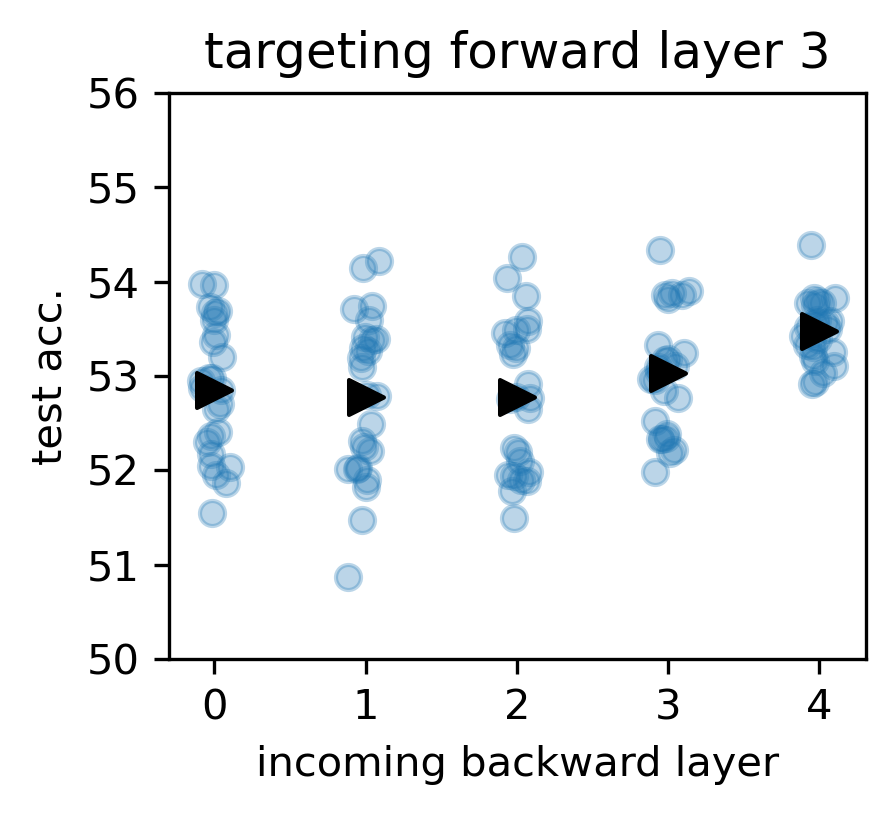}
    
    \caption{An exhaustive search for \textbf{nonplastic} backward-to-forward connections. The figure shows the performance of the model conditioning on different incoming backward layers to the forward network pathway. 
    }
    \label{fig:cifar10 random}
\end{figure}

\begin{figure}[t!]
    \centering
    \includegraphics[width=0.22\linewidth]{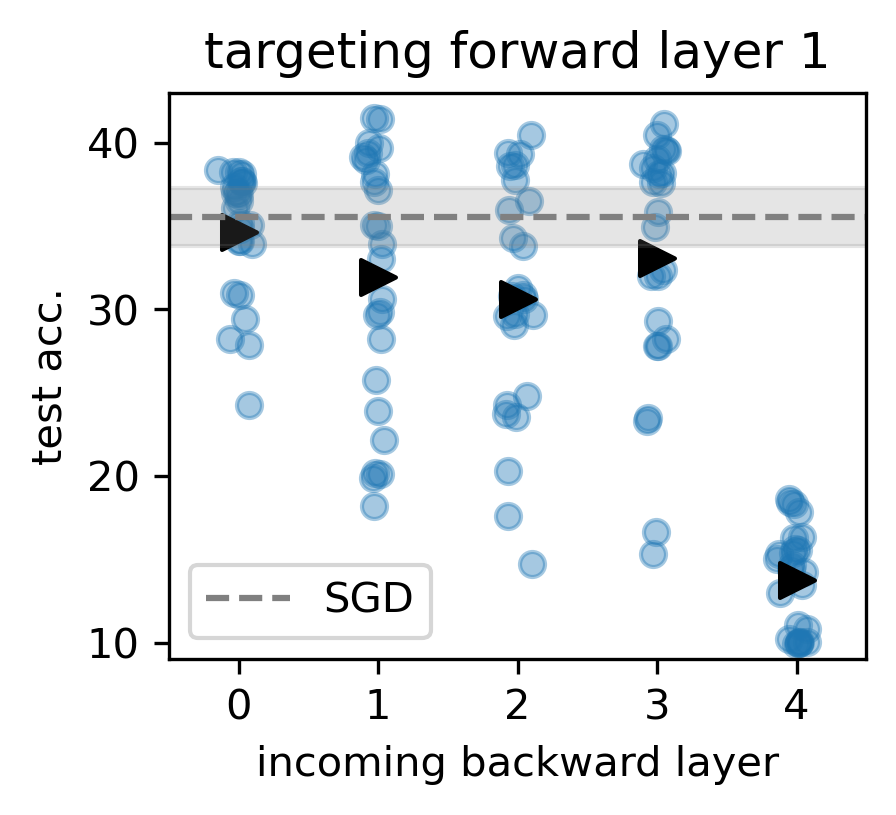}
    \includegraphics[width=0.22\linewidth]{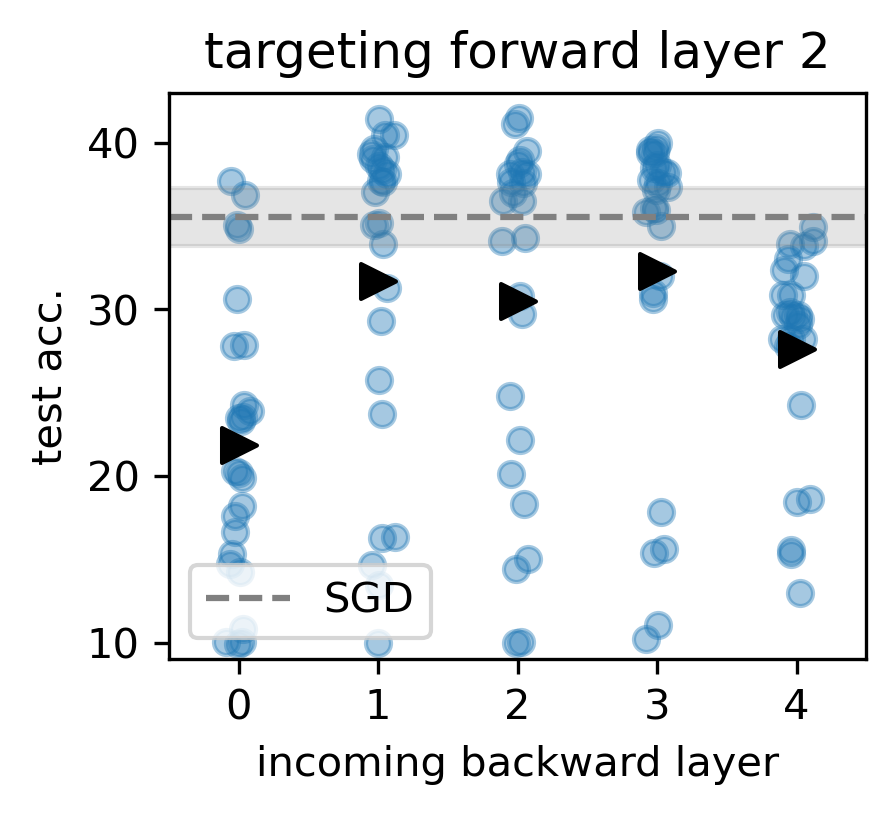}
    \includegraphics[width=0.22\linewidth]{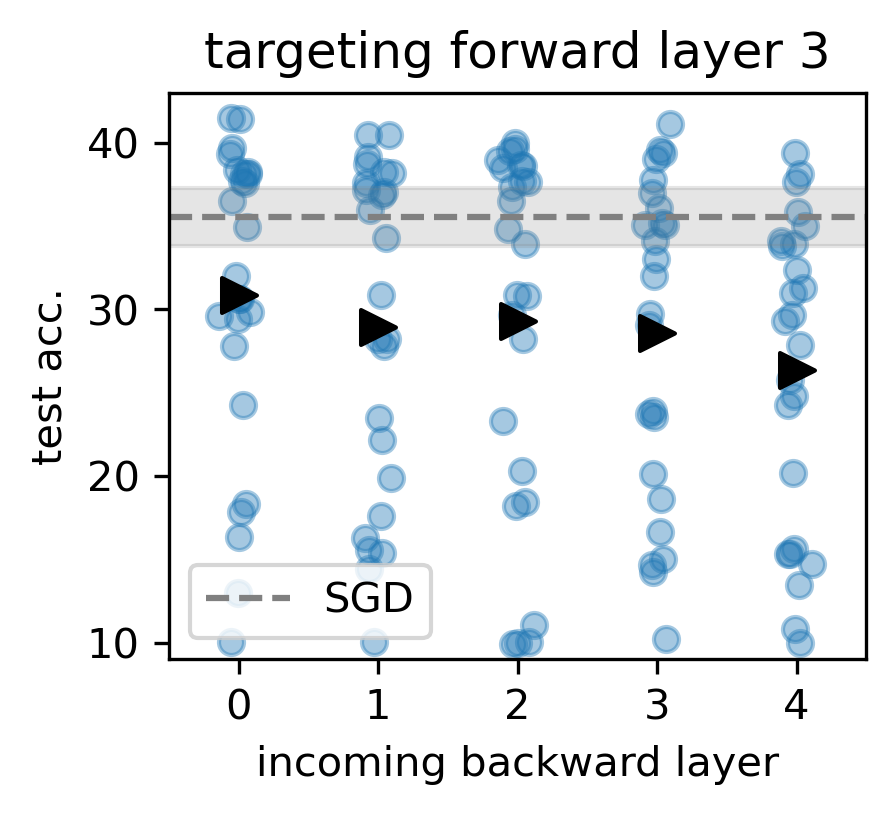}

    \caption{An exhaustive search for \textbf{plastic} backward-to-forward connections between a five-layer forward and backward pathway with sign activation. 
    }
    \label{fig:cifar10 btf sign}
\end{figure}

\clearpage
\subsection{Best and Worst Circuits}
We show the best connectivity structures found in the high-level search experiment. See Figure~\ref{fig:best worst circuits}.

\begin{figure}
    \centering
    \includegraphics[width=0.7\linewidth]{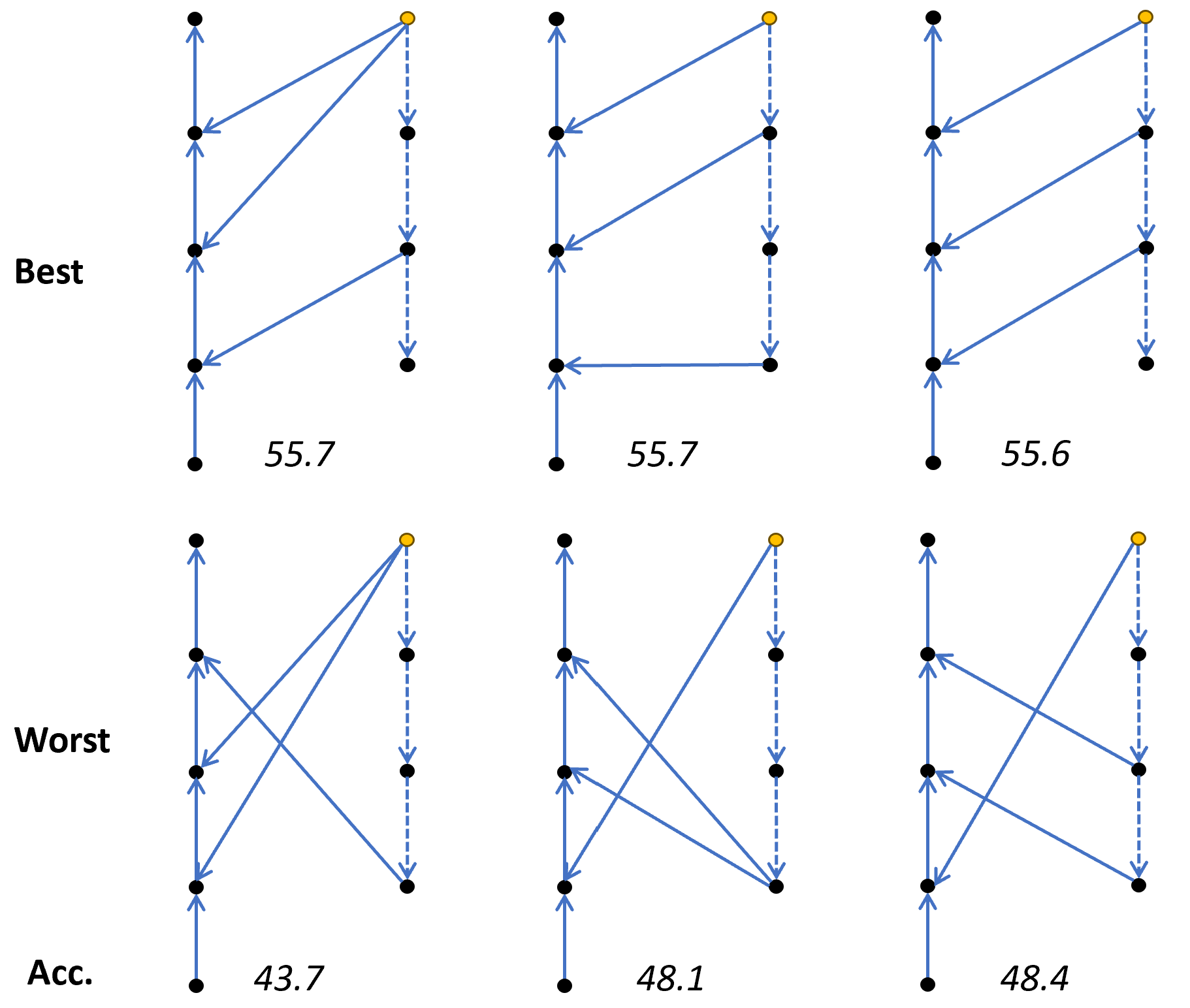}
    \caption{The three best and worst connectivity structures found in the high-level search experiment.}
    \label{fig:best worst circuits}
\end{figure}


\begin{thebibliography}{10}

\bibitem{abraham2008metaplasticity}
Wickliffe~C Abraham.
\newblock Metaplasticity: tuning synapses and networks for plasticity.
\newblock {\em Nature Reviews Neuroscience}, 9(5):387--387, 2008.

\bibitem{akrout2019deep}
Mohamed Akrout, Collin Wilson, Peter Humphreys, Timothy Lillicrap, and Douglas~B Tweed.
\newblock Deep learning without weight transport.
\newblock {\em Advances in neural information processing systems}, 32, 2019.

\bibitem{Amari:1998:NGW:287476.287477}
Shun-Ichi Amari.
\newblock Natural gradient works efficiently in learning.
\newblock {\em Neural Comput.}, 10(2):251--276, February 1998.

\bibitem{andersen2017hebbian}
Niels Andersen, Nathalie Krauth, and Sadegh Nabavi.
\newblock Hebbian plasticity in vivo: relevance and induction.
\newblock {\em Current opinion in neurobiology}, 45:188--192, 2017.

\bibitem{anderson2022big}
Ariana Anderson, Kevin Japardi, Kendra~S Knudsen, Susan~Y Bookheimer, Dara~G Ghahremani, and Robert~M Bilder.
\newblock Big-c creativity in artists and scientists is associated with more random global but less random local fmri functional connectivity.
\newblock {\em Psychology of Aesthetics, Creativity, and the Arts}, 2022.

\bibitem{ashtiani2022chip}
Farshid Ashtiani, Alexander~J Geers, and Firooz Aflatouni.
\newblock An on-chip photonic deep neural network for image classification.
\newblock {\em Nature}, 606(7914):501--506, 2022.

\bibitem{beer2020training}
Kerstin Beer, Dmytro Bondarenko, Terry Farrelly, Tobias~J Osborne, Robert Salzmann, Daniel Scheiermann, and Ramona Wolf.
\newblock Training deep quantum neural networks.
\newblock {\em Nature communications}, 11(1):808, 2020.

\bibitem{bogaerts2020programmable}
Wim Bogaerts, Daniel P{\'e}rez, Jos{\'e} Capmany, David~AB Miller, Joyce Poon, Dirk Englund, Francesco Morichetti, and Andrea Melloni.
\newblock Programmable photonic circuits.
\newblock {\em Nature}, 586(7828):207--216, 2020.

\bibitem{briggs2020role}
Farran Briggs.
\newblock Role of feedback connections in central visual processing.
\newblock {\em Annual review of vision science}, 6(1):313--334, 2020.

\bibitem{chen2013heterosynaptic}
Jen-Yung Chen, Peter Lonjers, Christopher Lee, Marina Chistiakova, Maxim Volgushev, and Maxim Bazhenov.
\newblock Heterosynaptic plasticity prevents runaway synaptic dynamics.
\newblock {\em Journal of Neuroscience}, 33(40):15915--15929, 2013.

\bibitem{chistiakova2014heterosynaptic}
Marina Chistiakova, Nicholas~M Bannon, Maxim Bazhenov, and Maxim Volgushev.
\newblock Heterosynaptic plasticity: multiple mechanisms and multiple roles.
\newblock {\em The Neuroscientist}, 20(5):483--498, 2014.

\bibitem{chistiakova2015homeostatic}
Marina Chistiakova, Nicholas~M Bannon, Jen-Yung Chen, Maxim Bazhenov, and Maxim Volgushev.
\newblock Homeostatic role of heterosynaptic plasticity: models and experiments.
\newblock {\em Frontiers in computational neuroscience}, 9:89, 2015.

\bibitem{chizat2018lazy}
Lenaic Chizat, Edouard Oyallon, and Francis Bach.
\newblock On lazy training in differentiable programming.
\newblock {\em arXiv preprint arXiv:1812.07956}, 2018.

\bibitem{cisneros2020critical}
J~Miguel Cisneros-Franco, Patrice Voss, Maryse~E Thomas, and Etienne de~Villers-Sidani.
\newblock Critical periods of brain development.
\newblock In {\em Handbook of clinical neurology}, volume 173, pages 75--88. Elsevier, 2020.

\bibitem{czanner2015measuring}
Gabriela Czanner, Sridevi~V Sarma, Demba Ba, Uri~T Eden, Wei Wu, Emad Eskandar, Hubert~H Lim, Simona Temereanca, Wendy~A Suzuki, and Emery~N Brown.
\newblock Measuring the signal-to-noise ratio of a neuron.
\newblock {\em Proceedings of the National Academy of Sciences}, 112(23):7141--7146, 2015.

\bibitem{fertonani2011random}
Anna Fertonani, Cornelia Pirulli, and Carlo Miniussi.
\newblock Random noise stimulation improves neuroplasticity in perceptual learning.
\newblock {\em Journal of Neuroscience}, 31(43):15416--15423, 2011.

\bibitem{froemke2005spike}
Robert~C Froemke, Mu-ming Poo, and Yang Dan.
\newblock Spike-timing-dependent synaptic plasticity depends on dendritic location.
\newblock {\em Nature}, 434(7030):221--225, 2005.

\bibitem{fuster2015past}
Joaqu{\'\i}n~M Fuster and Steven~L Bressler.
\newblock Past makes future: role of pfc in prediction.
\newblock {\em Journal of cognitive neuroscience}, 27(4):639--654, 2015.

\bibitem{giovagnoli1999learning}
Anna~R Giovagnoli and Giuliano Avanzini.
\newblock Learning and memory impairment in patients with temporal lobe epilepsy: relation to the presence, type, and location of brain lesion.
\newblock {\em Epilepsia}, 40(7):904--911, 1999.

\bibitem{harnett2013potassium}
Mark~T Harnett, Ning-Long Xu, Jeffrey~C Magee, and Stephen~R Williams.
\newblock Potassium channels control the interaction between active dendritic integration compartments in layer 5 cortical pyramidal neurons.
\newblock {\em Neuron}, 79(3):516--529, 2013.

\bibitem{hebb2005organization}
Donald~Olding Hebb.
\newblock {\em The organization of behavior: A neuropsychological theory}.
\newblock Psychology press, 2005.

\bibitem{hinton2015distilling}
Geoffrey Hinton, Oriol Vinyals, and Jeff Dean.
\newblock Distilling the knowledge in a neural network.
\newblock {\em arXiv preprint arXiv:1503.02531}, 2015.

\bibitem{hultborn2001state}
Hans Hultborn.
\newblock State-dependent modulation of sensory feedback.
\newblock {\em The Journal of physiology}, 533(1):5--13, 2001.

\bibitem{jungenitz2018structural}
Tassilo Jungenitz, Marcel Beining, Tijana Radic, Thomas Deller, Hermann Cuntz, Peter Jedlicka, and Stephan~W Schwarzacher.
\newblock Structural homo-and heterosynaptic plasticity in mature and adult newborn rat hippocampal granule cells.
\newblock {\em Proceedings of the National Academy of Sciences}, 115(20):E4670--E4679, 2018.

\bibitem{journals/corr/KingmaB14_adam}
Diederik~P. Kingma and Jimmy Ba.
\newblock Adam: A method for stochastic optimization.
\newblock {\em CoRR}, abs/1412.6980, 2014.

\bibitem{koch2013hebbian}
Giacomo Koch, Viviana Ponzo, Francesco Di~Lorenzo, Carlo Caltagirone, and Domenica Veniero.
\newblock Hebbian and anti-hebbian spike-timing-dependent plasticity of human cortico-cortical connections.
\newblock {\em Journal of Neuroscience}, 33(23):9725--9733, 2013.

\bibitem{kolb2017principles}
Bryan Kolb, Allonna Harker, and Robbin Gibb.
\newblock Principles of plasticity in the developing brain.
\newblock {\em Developmental Medicine \& Child Neurology}, 59(12):1218--1223, 2017.

\bibitem{kolen1994backpropagation}
John~F Kolen and Jordan~B Pollack.
\newblock Backpropagation without weight transport.
\newblock In {\em Proceedings of 1994 IEEE International Conference on Neural Networks (ICNN'94)}, volume~3, pages 1375--1380. IEEE, 1994.

\bibitem{liao2024self}
Qianli Liao, Liu Ziyin, Yulu Gan, Brian Cheung, Mark Harnett, and Tomaso Poggio.
\newblock Self-assembly of a biologically plausible learning circuit.
\newblock {\em arXiv preprint arXiv:2412.20018}, 2024.

\bibitem{lillicrap2016random}
Timothy~P Lillicrap, Daniel Cownden, Douglas~B Tweed, and Colin~J Akerman.
\newblock Random synaptic feedback weights support error backpropagation for deep learning.
\newblock {\em Nature communications}, 7(1):13276, 2016.

\bibitem{lillicrap2020backpropagation}
Timothy~P Lillicrap, Adam Santoro, Luke Marris, Colin~J Akerman, and Geoffrey Hinton.
\newblock Backpropagation and the brain.
\newblock {\em Nature Reviews Neuroscience}, 21(6):335--346, 2020.

\bibitem{magee2020synaptic}
Jeffrey~C Magee and Christine Grienberger.
\newblock Synaptic plasticity forms and functions.
\newblock {\em Annual review of neuroscience}, 43(1):95--117, 2020.

\bibitem{markov2014anatomy}
Nikola~T Markov, Julien Vezoli, Pascal Chameau, Arnaud Falchier, Ren{\'e} Quilodran, Cyril Huissoud, Camille Lamy, Pierre Misery, Pascale Giroud, Shimon Ullman, et~al.
\newblock Anatomy of hierarchy: feedforward and feedback pathways in macaque visual cortex.
\newblock {\em Journal of comparative neurology}, 522(1):225--259, 2014.

\bibitem{matsumoto2007medial}
Madoka Matsumoto, Kenji Matsumoto, Hiroshi Abe, and Keiji Tanaka.
\newblock Medial prefrontal cell activity signaling prediction errors of action values.
\newblock {\em Nature neuroscience}, 10(5):647--656, 2007.

\bibitem{nokland2016direct}
Arild N{\o}kland.
\newblock Direct feedback alignment provides learning in deep neural networks.
\newblock {\em Advances in neural information processing systems}, 29, 2016.

\bibitem{oja1982simplified}
Erkki Oja.
\newblock Simplified neuron model as a principal component analyzer.
\newblock {\em Journal of mathematical biology}, 15:267--273, 1982.

\bibitem{ramachandran2017searching}
Prajit Ramachandran, Barret Zoph, and Quoc~V. Le.
\newblock Searching for activation functions, 2017.

\bibitem{raut2021global}
Ryan~V Raut, Abraham~Z Snyder, Anish Mitra, Dov Yellin, Naotaka Fujii, Rafael Malach, and Marcus~E Raichle.
\newblock Global waves synchronize the brain’s functional systems with fluctuating arousal.
\newblock {\em Science advances}, 7(30):eabf2709, 2021.

\bibitem{refinetti2021align}
Maria Refinetti, St{\'e}phane d’Ascoli, Ruben Ohana, and Sebastian Goldt.
\newblock Align, then memorise: the dynamics of learning with feedback alignment.
\newblock In {\em International Conference on Machine Learning}, pages 8925--8935. PMLR, 2021.

\bibitem{sjostrom2006cooperative}
Per~Jesper Sj{\"o}str{\"o}m and Michael H{\"a}usser.
\newblock A cooperative switch determines the sign of synaptic plasticity in distal dendrites of neocortical pyramidal neurons.
\newblock {\em Neuron}, 51(2):227--238, 2006.

\bibitem{song2021train}
Yang Song and Diederik~P Kingma.
\newblock How to train your energy-based models.
\newblock {\em arXiv preprint arXiv:2101.03288}, 2021.

\bibitem{sun2025learning}
Weinan Sun, Johan Winnubst, Maanasa Natrajan, Chongxi Lai, Koichiro Kajikawa, Arco Bast, Michalis Michaelos, Rachel Gattoni, Carsen Stringer, Daniel Flickinger, et~al.
\newblock Learning produces an orthogonalized state machine in the hippocampus.
\newblock {\em Nature}, pages 1--11, 2025.

\bibitem{Tieleman2012_rmsprop}
T.~Tieleman and G.~Hinton.
\newblock {Lecture 6.5---RmsProp: Divide the gradient by a running average of its recent magnitude}.
\newblock COURSERA: Neural Networks for Machine Learning, 2012.

\bibitem{turrigiano2000hebb}
Gina~G Turrigiano and Sacha~B Nelson.
\newblock Hebb and homeostasis in neuronal plasticity.
\newblock {\em Current opinion in neurobiology}, 10(3):358--364, 2000.

\bibitem{turrigiano2004homeostatic}
Gina~G Turrigiano and Sacha~B Nelson.
\newblock Homeostatic plasticity in the developing nervous system.
\newblock {\em Nature reviews neuroscience}, 5(2):97--107, 2004.

\bibitem{von2000visual}
Laurie Von~Melchner, Sarah~L Pallas, and Mriganka Sur.
\newblock Visual behaviour mediated by retinal projections directed to the auditory pathway.
\newblock {\em Nature}, 404(6780):871--876, 2000.

\bibitem{welvaert2013definition}
Marijke Welvaert and Yves Rosseel.
\newblock On the definition of signal-to-noise ratio and contrast-to-noise ratio for fmri data.
\newblock {\em PloS one}, 8(11):e77089, 2013.

\bibitem{whittington2019theories}
James~CR Whittington and Rafal Bogacz.
\newblock Theories of error back-propagation in the brain.
\newblock {\em Trends in cognitive sciences}, 23(3):235--250, 2019.

\bibitem{yamins2016using}
Daniel~LK Yamins and James~J DiCarlo.
\newblock Using goal-driven deep learning models to understand sensory cortex.
\newblock {\em Nature neuroscience}, 19(3):356--365, 2016.

\bibitem{yang2020feature}
Greg Yang and Edward~J Hu.
\newblock Feature learning in infinite-width neural networks.
\newblock {\em arXiv preprint arXiv:2011.14522}, 2020.

\bibitem{zenke2017hebbian}
Friedemann Zenke and Wulfram Gerstner.
\newblock Hebbian plasticity requires compensatory processes on multiple timescales.
\newblock {\em Philosophical transactions of the royal society B: biological sciences}, 372(1715):20160259, 2017.

\bibitem{zheng2025unbearable}
Jieyu Zheng and Markus Meister.
\newblock The unbearable slowness of being: Why do we live at 10 bits/s?
\newblock {\em Neuron}, 113(2):192--204, 2025.

\bibitem{ziyin2024formation}
Liu Ziyin, Isaac Chuang, Tomer Galanti, and Tomaso Poggio.
\newblock Formation of representations in neural networks.
\newblock {\em arXiv preprint arXiv:2410.03006}, 2024.

\end{thebibliography}
\end{document}